\def\WF{{\text{WF}}}
\def\cA{{\ca A}}
\def\cB{{\ca B}}
\def\cC{{\ca C}}
\def\cE{{\ca E}}
\def\cF{{\ca F}}
\def\cG{{\ca G}}
\def\cS{{\ca S}}
\def\bC{{\mathbb C}}           
\def\bN{{\mathbb N}}
\def\bR{{\mathbb R}}
\def\beq{\begin{eqnarray}}
\def\eeq{\end{eqnarray}}
\newcommand{\ca}[1]{{\cal #1}}         
\def\la{\lambda}
\def\om{\omega}
\def\WF{{\text{WF}}}
\def\supp{\textrm{supp}\,}
\newtheorem{theorem}{Theorem}[section]
\newtheorem{proposition}[theorem]{Proposition}
\newtheorem{corollary}[theorem]{Corollary}
\numberwithin{equation}{section}
\def\vspp{\vspace{0.1cm}}
\def\sse #1 {\vsp\ifhmode{\par}\fi\refstepcounter{subsection}
  \noindent {\bf\thesubsection}. {\em #1}.\quad
  \addcontentsline{toc}{subsection}{\protect\numberline{\thesubsection} #1}%
  }
\def\ssb #1 {\vsp\ifhmode{\par}\fi\refstepcounter{subsection}
  \noindent {\bf\thesubsection.} {\bf #1.}\quad
  \addcontentsline{toc}{subsection}{\protect\numberline{\thesubsection} #1}%
  }
\def\ssa #1 {\ifhmode{\par}\fi\refstepcounter{subsection}
  \noindent {\bf\thesubsection.} {\bf #1.}\quad
  \addcontentsline{toc}{subsection}{\protect\numberline{\thesubsection} #1}%
  }
\def\remark #1 {\vsp\vspp\ifhmode{\par}\fi\noindent\noindent {\bf Remark.} {#1}\vsp\vspp\par}
\def\remarks #1 {\vsp\vspp\ifhmode{\par}\fi\noindent\noindent {\bf Remarks.} {#1}\vsp\vspp\par}
\begin{document}



\par
\bigskip
\LARGE
\noindent
{\bf 
Perturbative Algebraic 
Quantum Field Theory on \\Quantum Spacetime: Adiabatic and Ultraviolet \\Convergence}
\bigskip
\par
\rm
\normalsize


\large
\noindent {\bf Sergio Doplicher$^{1}$}, {\bf Gerardo Morsella$^{2}$}
and {\bf Nicola Pinamonti$^{3}$}\\
\par
\small
\noindent $^1$
Dipartimento di Matematica, Universit\`a di Roma ``La Sapienza'', Piazzale Aldo Moro, 5,
I-00185 Roma, Italy, email dopliche@mat.uniroma1.it.\smallskip

\noindent $^2$
Dipartimento di Matematica, Universit\`a di Roma ``Tor Vergata'', Via della Ricerca Scientifica,
I-00133 Roma, Italy, email morsella@mat.uniroma2.it.\smallskip

\noindent $^3$
Dipartimento di Matematica, Universit\`a di Genova, Via Dodecaneso, 35, 
I-16146 Genova, Italy and 
INFN - Sez.\ di Genova, Via Dodecaneso, 33 I-16146 Genova, Italy,
email pinamont@dima.unige.it.\smallskip

 \normalsize
\par
\medskip

\rm\normalsize
\noindent {\small 
March 6, 2020}

\rm\normalsize


\par
\bigskip

\noindent
\small
{\bf Abstract.}
The  quantum structure of Spacetime at the Planck scale suggests the use, in defining interactions between fields, of the Quantum Wick product. The resulting theory is ultraviolet finite, but subject to an adiabatic cutoff in time which seems difficult to remove.
We solve this problem here by another strategy: the fields at a point in the interaction Lagrangian are replaced by the fields at a  {\textit{quantum point}}, described by an  optimally localized state on QST; the resulting Lagrangian density agrees with the previous one after spacetime integration, but gives rise to a different interaction hamiltonian. But now  the methods of perturbative Algebraic Quantum Field Theory can be applied, and produce an ultraviolet finite perturbation expansion of the interacting observables.
If the obtained theory is tested in an equilibrium state at finite temperature 
the adiabatic cutoff in time becomes immaterial, namely it has no effect on the correlation function at any order in perturbation theory. Moreover, the interacting vacuum state can be obtained in the vanishing temperature limit. It is nevertheless important to stress that  the use of  states which are optimally localized for a given observer brakes Lorentz invariance at the very beginning.\\

\normalsize

\section{Introduction}

Quantum Mechanics and Classical General Relativity are the main achievements of the last century in the theoretical description of physical systems. Quantum Mechanics is used to describe subatomic physics while General Relativity is necessary to accurately describe phenomena at the astrophysical level. 
Unfortunately, a universally accepted theory which combines both is still lacking. 
In spite of this, the principles of Quantum Mechanics and of Classical General Relativity meet at least at one crucial point: their concurrence prevents arbitrarily accurate localization of events in Spacetime. 
For the Heisenberg uncertainty principle would imply that with an observation large amount of energy could be packed in small regions;
the corresponding classical backreaction on the curvature could then create trapped regions or even black holes. This limitation on possible localization
manifests itself in Spacetime Uncertainty Relations ~\cite{DFR95}, 
\[
\Delta q_0 \cdot \sum \limits_{j = 1}^3 \Delta q_j \gtrsim \lambda^2 ; \sum   
\limits_{1 \leq j < k \leq 3 } \Delta q_j  \Delta q_k \gtrsim \lambda^2\,.
\]
where $\lambda$ is a constant of the order of the Planck length. The Spacetime Uncertainty Relations can be exactly implemented by Poincar\'e covariant commutation relations between the coordinates 
\[
[q_\mu  ,q_\nu  ]   =   i \lambda^2   Q_{\mu \nu}
\]
if certain  {\textit{Quantum Conditions}} are imposed
on the $Q_{\mu \nu}$. The simplest solution is
\begin{align*}
[q^\mu , Q^{\nu \lambda}]  =  0,\qquad
Q_{\mu \nu} Q^{\mu \nu}  =  0,\qquad
\left(\frac1 2\left[q_0  ,\dots,q_3 \right] \right)^2 = I,
\end{align*}
where, under Lorentz transformations,  $Q_{\mu \nu} Q^{\mu \nu}$ is a scalar, and
\begin{align*}
\left[q_0  ,\dots,q_3 \right]  
\equiv  \varepsilon^{\mu \nu \lambda \rho} q_\mu q_\nu q_\lambda 
q_\rho = - (1/2) Q_{\mu \nu}  (*Q)^{\mu \nu}
\end{align*}
is a pseudoscalar and this is the reason why the Quantum Conditions involve its square.

A model of Quantum Spacetime (QST) can thus be obtained studying the representations of these relations. In particular, their \textit{regular representations} fulfill the \textit{Weyl 
relations}
\[
e^{ih_\mu q^\mu}e^{ik_\nu q^\nu}=e^{-\frac i2\lambda^2 h_\mu Q^{\mu\nu} k_\nu}e^{i(h+k)_\mu q^\mu},\quad h,k\in\mathbb R^4,
\]
which are governed by a $C^*-$algebra $\mathcal E$ which is thus the desired model of QST.  $\mathcal E$ turns out to be  the tensor product of
 the continuous functions vanishing at infinity on $\Sigma$ and the algebra of compact operators.
Here $\Sigma $  - the joint spectrum of the $Q_{\mu \nu}$ in a fully covariant representation - is the manifold of the real valued antisymmetric 2-tensors fulfilling the same relations as the $Q_{\mu \nu}$  do: a homogeneous space of the proper orthochronous Lorentz group, identified with the coset space of $SL(2,\bC)$ mod the subgroup of diagonal matrices.  Each of those tensors can be taken to its rest frame, where the electric and magnetic parts
 $\bf{e}$,   $ \bf{m}$   are parallel unit vectors, by a boost, and go back with the inverse boost, specified by a third vector, orthogonal to those unit vectors; thus  $\Sigma$ can be viewed as the tangent bundle to two copies of the unit sphere in 3 space - its {\textit{base}}  $\Sigma_1$. Irreducible representations select a point $\sigma \in \Sigma$. If  $\sigma \in \Sigma _1$, in that representation the $q_{\mu}$ are, up to a space rotation, the Schr\"odinger operators in two degrees of freedom.

In this model the mathematical generalization of points are pure states over $\mathcal{E}$. Among all possible states over $\mathcal{E}$, we are interested in finding those which describe the best localization. These are called {\textit{Optimally localized states}} and are those which minimize the quantity
\[
 \Sigma_\mu(\Delta_\omega q_\mu)^2.
\]
Its minimum value is $2$ and it is reached by states concentrated on $\Sigma_1$ which at each point coincide with the {\textit{ground state of the harmonic oscillator}}.
More precisely, these states are given by an {\textit{optimal localization map}} composed with a probability measure on  $\Sigma_1$. 
In particular, 
 if $f$ varies in $L^1(\bR^4)$ and $g$ in $\mathcal C_0 (\Sigma)$, the products of the evaluation on the $q_\mu$ of the Fourier pre-transform of $f$  and of $g$ on the $Q_{\mu \nu}$ 
\[
\int_{\bR^4} dk\, f(k)  e^{i k\cdot q} g(Q)
\]
have a dense linear span in $\mathcal E$; states which are {\textit{optimally localized}} at the point $x$ are given by
\[
\om_x(e^{i k_\mu q^\mu}g(Q)) = e^{i k \cdot x}e^{-\frac{\la^2}{2} \langle k\rangle^2} \int_{\Sigma_1} g(\sigma) d\mu (\sigma), \qquad x \in \bR^4,
\]
where $k \cdot x = k_\mu x^\mu$ is the usual Minkowski scalar product, $\langle k\rangle^2 = \sum_{\mu=0}^3 k_\mu^2$ is the Euclidean norm square, and $\mu$ is a regular probability measure carried by $\Sigma_1$. They can be viewed as the composition of $\mu$ with the restriction of $\mathcal E$  to  $\Sigma_1$ followed by the {\textit{optimal localization map}}, a conditional expectation of that restriction onto  $\mathcal C (\Sigma_1)$.

Several reasonable properties follow, in applications to Quantum Geometry in particular: the sum of the squares of the components of the forms describing distance, area, three volume and four - volume in QST are all bounded below by multiples of order one of the appropriate power of the Planck length \cite{BDFP10}.

QFT can also be studied on QST. In the case of a free scalar field $\varphi$, we can define it on QST in analogy to the von Neumann calculus we recalled earlier on:
\[
\varphi(q) := \int_{\bR^4} dk\, \hat \varphi(k) \otimes e^{i k\cdot q},
\]
where $\varphi (q)$ is affiliated to $ \mathfrak F \otimes \mathcal E$, $\mathfrak F$ being the algebra of fields. More generally, and more precisely, the above formula has to be interpreted as an affine map from (a suitable $*-$weakly dense subset of the) states $\omega$ on $\cE$ into the $*-$algebra $\mathfrak F$ of polynomials in the field operators on (commutative) Minkowski space time, given by
\[
\om \mapsto (\mathrm{id}\otimes \om)(\varphi(q)) = \varphi(f_\om) = \int_{\bR^4} dx\,\varphi(x) f_\om(x),
\] 
where the right hand side is the operator on the bosonic Fock space obtained by smearing the ordinary free scalar field with the test function $f_\om(x) := \int_{\bR^4} dk\,\om(e^{ik\cdot q}) e^{-ik\cdot x}$, assuming that $k \mapsto \om(e^{i k\cdot q})$ belongs to the Schwarz space $\cS(\bR^4)$.

To introduce interactions, a possibility is to observe that the Euclidean distance between two independent events on QST cannot be smaller than the Planck length. This motivates the use, in defining interactions between fields on QST, of the Quantum Wick product, where the evaluation of field products at coinciding points is replaced by the use of a quantum diagonal map. 
This map evaluates on the difference variables the optimal localization map.  The resulting theory is ultraviolet finite \cite{BDFP03}, but needs an adiabatic cutoff in time, which seems difficult to remove. On the other hand, the resulting theory has interesting applications to Cosmology~\cite{DMP13, DMP1, DMP2} and generalizations to curved spacetimes are being also studied~\cite{DMP13, TV14, MT}. For other possibilities to define interactions, see~\cite{BDFP05, Z} and the review~\cite{BDMP15}.

The control of the adiabatic limit is typically highly non trivial also in other approaches to QFT on noncommutative spacetime. In particular, in the much studied Euclidean framework based on the modified Feynman rules~\cite{Filk} this difficulty is generally traced back to the phenomenon of the so-called \emph{ultraviolet-infrared mixing}, first pointed out in~\cite{MRS99}. Due to this feature, scalar QFT on Euclidean noncommutative spacetime is typically nonrenormalizable, with the exception of the Grosse-Wulkenhaar model~\cite{GW05} in which however an external harmonic potential provides an effective adiabatic cutoff. Moreover, there is no analogue of the Osterwalder-Schrader theorem allowing a smooth transition between the Euclidean and the Minkowskian regimes, and a Minkowskian version of the Grosse-Wulkenhaar model is affected by strange divergencies~\cite{Z11}. On the other hand, ultraviolet-infrared mixing effects, albeit different from the ones in the Euclidean setting, have also been pointed out in QFT on Minkowskian noncommutative spacetime defined through the Hamiltonian approach~\cite{B10} and through the Yang-Feldman equation~\cite{Z12}. Further discussion of the adiabatic limit of QFT on QST, in the Yang-Feldman approach, can be found in~\cite{DZ, Z}.

\medskip
In order to define QFT on QST without an adiabatic cutoff, it seems natural to try to employ the methods of perturbative Algebraic Quantum Field Theory (pAQFT) \cite{BDF09, FR15a, FR15b, HW01, HW02} (see also the book~\cite{R17} for an extensive review), in which the adiabatic limit on classical spacetime can be controlled for a large class of models \cite{FredenhagenLindner}. However, in doing this one has to face an immediate problem: the nonlocality present in the effective Lagrangian would imply the nonunitarity of the corresponding $S$ matrix defined through the methods of pAQFT.

We solve this problem here by the following strategy: as a first step, the fields at a point in the interaction Lagrangian are replaced by the fields at a quantum point, described by an optimally localized state on QST. More precisely, the evaluation of the field $\varphi$ at a {\textit{quantum point}} of QST, which replaces the field operator $\varphi (x)$ at a point $x$ of the classical Minkowski space, is given by
\begin{equation}\label{eq:quantumpoint}
 (\mathrm{id}\otimes \om_x)(\varphi(q)),
\end{equation}
where $\iota$ is the identity map on $\mathfrak F$, and $\omega_x$ the state of $\mathcal E$ optimally localized at $x$.
 
As discussed in Section \ref{sec:effective}, the resulting Lagrangian density agrees with the one in \cite{BDFP03} after integration over the full spacetime, namely if the adiabatic limit is considered. 
However, it turns out that it is still nonlocal, and then it does not yet yield a unitary $S$ matrix. Here comes the second step of our strategy: the nonlocality of this Lagrangian can be removed in an equivalent nonlocal representation of the field algebra at the price of deforming the products in this algebra. 
In particular, the nonlocalities induced by the smearing of fields around the points $x$ are now attached to the various propagators of the theory. In this way the ultraviolet singularities that the latter present at coinciding points are resolved. Hence an ultraviolet finite perturbation theory is obtained.
The Feynman rules obtained in this way are slightly different from those described by Piacitelli in \cite{piacitelli}, c.f.\ also \cite{Bahns}. Proceeding in this way, the $S$ matrix becomes unitary. 
Furthermore, the obtained theory can be treated with the methods of pAQFT
 which we briefly review in Section \ref{sec:review-paqft}. In particular,   
adapting the analysis of \cite{FredenhagenLindner}, in Section \ref{sec:perturbative-construction} we study, order by order in perturbation theory, the spatial adiabatic limit of interacting KMS states at finite temperature. Moreover, we show that this limit can be taken simultaneously with the zero temperature limit, thereby obtaining the ground state of the interacting theory. 

The latter is the main result of the present work, on which more details are in order. The first crucial observation here is that, although Lorentz covariance is broken by the interaction, there is a remnant of causality in the theory, namely the independence of the interacting observables from the behavior of the infrared cutoff in the future of their localization region. Therefore we can assume that, with a fixed spatial cutoff, the temporal cutoff can be taken equal to one in the future of a given time slice. In order to proceed to the limit in which the temporal cutoff is one everywhere, it is then sufficient to translate this time slice to infinity into the past. To this end, the second main step is to show that the vacuum expectation values of the interacting observables in this limit can also be obtained as the zero temperature limit of the corresponding expectation values in an interacting KMS state. The final step is then to show that the latter limit and the limit in which the spatial adiabatic cutoff is removed can be taken together. The limit state is therefore invariant under spacetime translations, and can then be interpreted as the vacuum state of the interacting theory. This shows in particular that
in the obtained theory there is no ultraviolet-infrared mixing which produces divergencies.

We have nevertheless to stress that the use of  states which are optimally localized for a given observer brakes again Lorentz invariance at the very beginning.

\medskip
The paper is organized as follows. In the next section we briefly review the main ideas of pAQFT. Section~\ref{sec:effective} contains the derivation of the effective nonlocal Lagrangian and its comparison with previously used expressions. In Section~\ref{sec:perturbative-construction} we describe the perturbation theory and we construct perturbatively the $S$ matrix and the interacting field algebra. In Section~\ref{sec:adiabatic} we construct the ground state of the interacting theory by showing that the adiabatic limit is order by order finite in the sense discussed above. We collect in the appendices some technical results. In particular, in Appendix~\ref{se:prop} we list some formulas and elementary properties of the modified propagators, in Appendix~\ref{sec:bounds} we discuss their decay properties in spatial and temporal directions, and in Appendix~\ref{sec:KMS} we prove the return to equilibrium property for the spatially cutoff KMS state of the interacting theory.

\section{Brief review of pAQFT for scalar fields}\label{sec:review-paqft}

We recall briefly the framework of pAQFT applied to discuss a self interacting scalar field theory, we refer to \cite{BDF09, FR15a, FR15b, HW01, HW02} for an extended review.
Although the construction we are going to describe can be performed in any globally hyperbolic spacetime, we shall here only consider a four dimensional Minkowski spacetime $M=\bR^4$  equipped with a metric $\eta$ whose signature is $(-,+,+,+)$ as the background where a scalar field propagates.
The Lagrangian density for the theory we are considering is described by 
\[
L =L_0+L_I = \frac{1}{2} \partial\phi\partial\phi +\frac{1}{2} m^2\phi + L_I(\phi)
\]
where $m > 0$ is the mass of the field and $L_I$ represents the interaction Lagrangian which is usually a local polynomial built from $\phi$ at a point $x$. We shall construct the interacting theory perturbing the free theory $L_0$ with the interaction Lagrangian $L_I$.

The observables of the theory are seen as functionals over smooth {\it field configurations} $\phi\in C^\infty(M;\mathbb{R})\cap \mathcal{S}'(M)=:\mathcal{C}$.
For later purpose, contrary to the standard framework, we require that the field configuration $\phi$ is a Schwartz distribution. We require furthermore, that the observables are smooth with respect to the functional derivatives and that they have compact support, namely their $n-$th order functional derivative exists and is a distribution of compact support. Moreover, they must be sufficiently regular to make their product well defined. In particular, we say that a smooth functional $F$ is {\it microcausal} ($\mu$c), if the wave front sets of its functional derivatives are such that
\[
\WF(F^{(n)}(\phi)) \cap \left({\overline{V}^+}^n \cup  {\overline{V}^-}^n \right) = \emptyset, \qquad n \in \bN,
\]
where $\overline{V}^\pm$ are the closed forward backward light cones in $T^*M$.  

Summarizing, the {\it field observables} are defined as the set
\[
\mathcal{F} := \left\{ F:\mathcal{C}\to\mathbb{C} \left| F^{(n)}(\phi)\in \mathcal{E}'(M^n) \, \forall \phi \in \cC, \forall n\in \bN,\, F \text{ $\mu$c }   \right. \right\}.
\] 
Furthermore, we denote by $\mathcal{F}_{\text{reg}}$ the set of {\it regular functionals}, namely
\[
\mathcal{F}_{\text{reg}} := \left\{ F\in \mathcal{F} \left| F^{(n)}(\phi) \in C^\infty_0(M^n) \, \forall \phi \in \cC,\forall n \in \bN\  \right. \right\}.
\]
For later purposes, we also introduce the slightly larger space of \textit{rapidly decreasing functionals}
\[
\cF_{\text{rd}} := \left\{ F : \cC \to \bC\left| F^{(n)}(\phi) \in \cS(M^n)\, \forall \phi \in \cC,\forall n \in \bN\right.\right\}.
\]
We say that $F\in \mathcal{F}$ is {\it local} if $F^{(1)}(\phi)$ is a smooth function and if for every $n>1$ the $n-$th functional derivative of $F$ is supported on the diagonal of $M^n$. The subset of $\mathcal{F}$ formed by {\it local functionals} is denoted by $\mathcal{F}_{\text{loc}}$
Finally a functional $F : \cC \to \bC$ is said to be {\it polynomial} if it exists an $N$ such that $F^{(n)}=0$ for all $n>N$. 
The subsets of $\mathcal{F}$, $\cF_\text{rd}$ and of $\mathcal{F}_{\text{reg}}$ formed by polynomial functionals are denoted by $\mathcal{F}^p$, $\cF^p_{\text{rd}}$ and $\mathcal{F}^p_{\text{reg}}$ respectively.  
The interaction Lagrangian $L_I$ smeared with a compactly supported smooth function $g\in C^\infty_0(M)$ is an example of a local functional
\begin{equation}\label{eq:int-L}
V = \int_M g(x)   L_I(\phi)(x)  dx.
\end{equation}
A local linear field smeared with $f\in C^\infty_0(M)$ and its Wick powers are then defined as 
\begin{equation}\label{eq:linear-wick}
\Phi(f):= \int_M f(x) \phi(x) dx, \qquad
\Phi^n(f):= \int_M f(x) \phi^n(x) dx, \qquad n\in \mathbb{N}.
\end{equation}

\subsection{Algebra of free field observables}
The construction of the free theory starts from the observation that the free equation of motion
\begin{equation}\label{eq:free-equation}
\Box\varphi - m^2\varphi=0
\end{equation}
admits unique retarded and advanced fundamental solutions $\Delta_{R/A}$. 
These distributions are completely characterized by their interplay with the equations of motion and by their support property \cite{BGP}, namely  
\[
\Delta_{R/A}(\Box f - m^2 f) =  f, \qquad    \text{supp}(\Delta_{R/A}(f)) \subset J^{+/-}(\text{supp}\,f), \qquad f\in C^\infty_0(M),
\]
where $\text{supp}\,f\subset M$ is the support of $f$, and $J^{+/-}(O)$ is respectively the causal future or the causal past of the set $O$ in $M$.  The commutator function, also known as Pauli-Jordan propagator or causal propagator, is the retarded minus advanced fundamental solution
\[
\Delta = \Delta_R-\Delta_A
\] 
and in the canonical quantization of the free theory it corresponds to the commutator of two linear fields.
Hence, in the algebraic approach, the set of functionals described above are equipped with a product whose antisymmetric part is completely determined by $\Delta$.

\medskip
More precisely, both $\mathcal{F}^p$ and $\mathcal{F}^p_{\text{reg}}$ form a $*-$algebra when equipped with the following involution and $\star-$product
\[
A^*(\phi) =  \overline{A(\phi)}, \qquad A\in \mathcal{F}^p
\]
and
\[
A\star_H B =  m \circ e^{ \Gamma_{H}}   (A\otimes B)
\]
where 
$m$ is the pointwise multiplication
\[
m(A\otimes B)(\phi) = A(\phi)B(\phi), \qquad A,B\in \mathcal{F}^p
\]
and $\Gamma_H$ is such that
\[
\Gamma_H = \int_{M^2} dxdy H(x,y)\frac{\delta}{\delta \phi(x)}\otimes \frac{\delta}{\delta \phi(y)},
\]
where $H \in \cS'(M^2)$ is any Hadamard function, namely a weak solution of the free equation of motion up to smooth terms whose antisymmetric part is proportional to the causal propagator $\Delta$, which is the retarded minus advanced fundamental solutions
\[
H(x,y)- H(y,x) = i\Delta(x-y) = i \Delta_R(x-y)- i \Delta_A(x-y). 
\]
Furthermore, $H$ must satisfy the microlocal spectrum condition  which means that 
\[
\WF(H) = \{(x_1,x_2;p_1,p_2)\in T^*M^2\setminus\{0\} \,|\, (x_1,p_1)\sim (x_2,-p_2), (x_1,p_1)\in  \overline{V}^+  \}.
\] 
where $(x_1,p_1)\sim(x_2,-p_2)$ holds if $x_1$ and $x_2$ are joined by a null geodesics $\gamma$, $\eta^{-1}p_1$ is tangent to $\gamma$ at $x_1$ and $-p_2$ is the parallel transport of $p_1$ along $\gamma$. The last condition requires $p_1$ to be future directed.
This requirement can be weakened if we are considering regular functionals only. In that case we may use the product $\star_{\frac{i}{2}\Delta}$ defined by $ H = \frac i 2 \Delta$. We also observe that the product $\star_{\frac i 2 \Delta}$  is well defined on the space $\cF^p_{\text{rd}}$ of rapidly decreasing functionals too, which then becomes a *-algebra with  the above defined involution.

Notice that $(\mathcal{F}^p,\star_H)$ is a representation of the abstract algebra of {\it normal ordered observables}  $\mathcal{B}$, where the normal ordering is the one defined with respect to the Hadamard state specified by the two point function given by $H$. Indeed, 
the $*-$isomorphism which realizes the representation of $\mathcal{B}$ to $(\mathcal{F}^p,\star_H)$
maps $:\Phi^n(f):_H$ to $\Phi^n(f)$, and 
the combinatorics of Wick Theorem is summarized by the product $\star_H$, e.g., 
\[\begin{split}
:\Phi^2(f):_H:\Phi^2(f):_H   &=   :\Phi^2(f) \star_H \Phi^2(g):_H  \\
&= 
:\Phi^2(f)\Phi^2(g):_H + 4\int_{M^2} dx dy\, :\phi(x)\phi(y):_H f(x) g(y) H(x-y) \\
&\quad +2 \int_{M^2} dx dy\, f(x) g(y) H(x-y)^2. 
\end{split}\]
We remark however that in general the elements of $\cB$ can not be realized as functionals of the field configurations. Because of this, we prefer to consider only the algebra $(\mathcal{F}^p,\star_H)$.

At the same time $(\mathcal{F}^p_{\text{\text{reg}}},\star_{i\Delta/2})$ is isomorphic to the {\it Borchers-Uhlmann algebra} of free fields. In particular, the canonical commutations relations among linear fields are implemented by the product, actually
\[
[\Phi(f_1),\Phi(f_2)]_{\star_{H}} = \Phi(f_1)\star_{H}\Phi(f_2) - \Phi(f_2)\star_{H}\Phi(f_1) = i \Delta(f_1,f_2), \qquad f_i\in C^\infty_0(M).
\]

At this point it is important to stress that different choices of $H$ in $(\mathcal{F}^p,\star_H)$ produce isomorphic algebras, we indicate this $*-$isomorphism by $\alpha_{H,H'}$. 

We furthermore notice that the restriction to polynomial functionals can be relaxed if we admit as observables formal power series in $\hbar$, or if we require some bound on the higher functional derivatives of $F$ in the topology of distribution  so to make $A\star_H B$ convergent.

\subsection{Interacting fields}\label{se:paqft}

In perturbation theory, the algebra of interacting fields with respect to a local interaction Lagrangian $V$, like the one given in \eqref{eq:int-L}, is represented as a subalgebra of the algebra of formal power series whose coefficients are in $\mathcal{F}$. 
This representation is realized by the Bogoliubov map $R_V$. The Bogoliubov map is given in terms of the time ordered exponential ($S$ matrix) of $V$, defined by
\begin{equation}\label{eq:S}
S(V)=\sum_n \frac{(-i)^n}{n!} \underbrace{T( T^{-1}(V),\dots ,T^{-1}(V) )}_{n}.
\end{equation}
As shown by Fredenhagen and Rejzner (see, e.g.,~\cite{FR15a}) the time ordering map $T$ applied to regular functionals descends from an associative bilinear product
\[
A\cdot_T B = T(T^{-1}(A),T^{-1}(B))
\] 
which has to be compatible with the $\star$ product introduced above. Hence, it holds that
\[
A\cdot_T B =  M \circ e^{ \Gamma_{H_F}}   (A\otimes B), \qquad A, B \in \cF_{\text{reg}},
\] 
where the Feynman propagator associated to the Hadamard function $H$ is used, actually 
\[
H_F = H + i\Delta_A.\footnote{Sometimes, in the physics literature, the Feynmann propagator, with respect to the Minkowski vacuum state, is defined as $-iH_F$.}
\]
On regular functional if the causal propagator $\frac{i}{2}\Delta$ is used to represent the $\star$ product, the associated time ordered product is generated by the Dirac propagator  
\[
i\Delta_D = i(\Delta_A + \Delta_R).
\] 
In order to give a meaning to~\eqref{eq:S} for a local $V$, the time ordering map $T$ introduced above needs to be extended to local functionals. 
This extension can be performed applying a recursive procedure on the number of factors in $T$, in a similar way as discussed by Epstein and Glaser \cite{EG}. The key property that makes this possible is the \emph{causal factorization property} which requires that 
\[
T(A,B)=T(A)\star T(B)
\]
if $J^{+}(\supp{A}) \cap \supp B =\emptyset$. At every inductive step, the causal factorization property permits to construct $T(V,\dots, V)$ as a distribution over the coupling constants $g\otimes \dots \otimes g$ defined up to the total diagonal $\{(x,\dots, x) \in M^n\,|\, x\in M\}$. The extension to the total diagonal can be done employing some scaling limit techniques similar to the procedure discussed by Steinmann~\cite{Steinmann}, see e.g.\ \cite{BF00, HW02} for a detailed analysis of this procedure.
The extension to the total diagonal is not unique, there is actually the well known renormalization freedom.
This freedom can be characterized by some constants once a set of physical requirements for the time ordered products are assumed, see \cite{HW01, HW02} for a complete list of axioms that a time ordered product should have.

Later we shall adapt the causal factorization property to 
take into account the noncommutative nature of Quantum Spacetime and its replacement will play a crucial role in  the discussion of the adiabatic limits.
\medskip

The algebra of interacting fields is thus generated, with respect to the $\star$ product, by elements of the form
\begin{equation}\label{eq:Bog}
R_V(F)=S(V)^{-1}\star \left(F \cdot_T S(V) \right), \qquad  F\in\mathcal{F}_{\text{loc}},
\end{equation}
where $R_V$ is the \emph{Bogolubov map}. If $V$ is local it holds that $S(V)^{-1} = S(V)^*$, hence, in that case we say that the theory is unitary. This is in general not true if $V$ is nonlocal.

\section{The effective interaction Lagrangian obtained by the quantum Wick product}
\label{sec:effective}

As discussed in \cite{BDFP03}, 
a possibility to treat the interacting quantum field theory on QST is to define the interaction Lagrangian replacing the ordinary Wick product (on commutative spacetime) with the quantum Wick product $: \,\cdot\, :_Q$, which amounts to evaluate the optimal localization map on the difference variables of a product of fields on independent copies of QST. The resulting $S$ matrix is then equivalent to the $S$ matrix of a theory on classical spacetime with an effective interaction Lagrangian $V$ which is usually nonlocal in the field.

More precisely, as shown in \cite{BDFP03} the effective interaction Lagrangian corresponding to a monomial interaction $:\varphi^n(q):_Q$ has the expression
\[
L_I^{\text{eff}}(x):=\frac{(2\pi)^{2(n+1)}}{\lambda^{4(n-1)}n^2} \int dy_1\dots dy_n r(y_1,\dots, y_n,x):\phi(y_1)\dots \phi(y_n):,
\]
where
\[
r(y_1,\dots, y_n,x) :=
e^{-  \frac{\langle x-y_1\rangle^2}{2 \lambda^2}}
\dots
e^{-  \frac{\langle x-y_n\rangle^2}{2 \lambda^2}}
\delta\left(\frac{y_1+\dots +y_n}{n}-x\right),
\]
and where $\langle x\rangle^2= ({x^0})^2+ |{\mathbf{x}}|^2$ denotes the euclidean square and $\delta$ denotes the four dimensional Dirac delta function.
Integrating against a compactly supported smooth function $\tilde{g}\in C^\infty_0(M)$, which plays the role of an infrared regulator, and defining the time ordering with respect to the time coordinate of the ``average vertex'' $x$, one obtains an $S$ matrix which turns out to be unitary and free of ultraviolet divergencies. On the other hand, in view of studying this theory by the methods of pAQFT, one should consider an effective interaction potential of the form 
\begin{equation}\label{eq:effpot}
W_{\text{eff}}(\phi) := \int dx\;\tilde{g}(x)  L^{\text{eff}}_I(x)
\end{equation}
which is an element of $\mathcal{F}^{p}_{\text{rd}}$ but is nonlocal, which implies that the associated $S$ matrix~\eqref{eq:S}, defined through the time ordering of Sec.~\ref{se:paqft}, is nonunitary. 

In order to circumvent this problem, we observe that 
performing the integration over $x$ we obtain
\begin{align*}
W_{\text{eff}}(\phi) 
 &= \frac{(2\pi)^{2(n+1)}}{\lambda^{4(n-1)}n^2}\int dy_1 \dots  dy_n 
e^{-  \frac{\langle y_1-\overline{y}\rangle^2}{2 \lambda^2}}
\dots
e^{-  \frac{\langle y_n-\overline{y}\rangle^2}{2 \lambda^2}} 
\tilde{g}(\overline{y}):\phi(y_1)\dots \phi(y_n): 
\end{align*}
where $\overline{y} := \frac{1}{n}\sum_j y_j$.
Notice furthermore that
\begin{equation}\label{eq:id}
\sum_j \langle y_j-\overline{y}\rangle^2 +n\langle x-\overline{y}\rangle^2 = \sum_j \langle y_j-x\rangle^2,
\end{equation}
hence we can simplify some of the Gaussian functions present in the previous integral. 
Actually, since for every $\overline{y}$ 
\begin{equation}\label{eq:gaus}
\frac{n^2}{(\sqrt{2\pi}\lambda)^4}\int dx \; e^{-  \frac{n \langle x-\overline{y}\rangle^2}{2 \lambda^2}} = 
1,
\end{equation}
we have that
\begin{equation*}
W_{\text{eff}} (\phi)
 =\left( \frac{\sqrt{2\pi}}{\lambda}\right)^{4n} \int dy_1 \dots  dy_n dx\;
e^{-  \frac{\langle y_1-x\rangle^2}{2 \lambda^2}}
\dots
e^{-  \frac{\langle y_n-x\rangle^2}{2 \lambda^2}} 
\tilde{g}(\overline{y}):\phi(y_1)\dots \phi(y_n): \,,
\end{equation*}
where we have multiplied $g(\overline{y})$ with the Gaussian displayed on the left hand side of the equation \eqref{eq:gaus} and we have used the identity \eqref{eq:id}.
We thus conclude that in the adiabatic limit ($\tilde{g}\to 1$) the effective potential $W_{\text{eff}}$ is formally equivalent to the limit ${g}\to1$ of the following alternative effective potential
\begin{equation*}
{V}_{\text{eff}} (\phi)= 
\left( \frac{\sqrt{2\pi}}{\lambda}\right)^{4n}
\int  dx  \; g(x) \prod_j \left(\int  dy_j \; e^{-\frac{ \langle y_j-x\rangle^2}{2 \lambda^2 }}
\phi(y_j)\right)
\end{equation*}
which is another element of $\mathcal{F}^p_{\text{rd}}$.
We are eventually interested in taking the adiabatic limit, furthermore as we shall see later, the effective potential
${V}_{\text{eff}}$ give rise to a simpler interacting theory. Hence we shall discuss perturbation theory with respect to this potential.
This effective potential is still nonlocal, however, as we shall see below, it is possible to modify the definition of time ordered products to obtain an $S$ matrix which is unitary.

\section{Perturbative analysis of the effective interacting theory}\label{sec:perturbative-construction}
As discussed in the previous section, we will replace, in the construction of the interacting $\phi^n$ theory on Quantum Spacetime,
the effective interaction Lagrangian  which correspond to the interaction Lagrangian $:\varphi^n(q):_Q$ on the QST with
\begin{equation}\label{eq:eff-adiab}
{V}_{\text{eff}}(\phi) = 
(2\pi)^{4n}
\int  dx  \; g(x) \prod_j
\left(\int_{M} dy_j \,G_\lambda(x-y_j)\phi(y_j)\right)
\end{equation}
where we have inserted an infrared regulator 
$g \in C^\infty_0(M)$ that we shall eventually remove considering the limit $g\to 1$,
and 
where
\[
G_\la(x) := \frac {e^{-\frac{\langle x\rangle^2}{2\la^2}}}{(\sqrt{2\pi}\la)^4}.
\]
We recall that a direct application of the perturbation theory discussed in Section \ref{se:paqft} for nonlocal interaction Lagrangian gives rise to an $S$ matrix which is nonunitary. 
We shall actually see that this problem can be cured by deforming the $\star_{\frac i 2 \Delta}$ product of $\cF^p_{\text{rd}}$. 
We notice that the effective nonlocal interaction potential $V_\text{eff}$ can be seen as the action of
a local functional 
composed with the convolution of the field configuration $\phi$ with $G_\lambda$.  To formalize this idea, 
consider the map $\iota:\mathcal{C}\to \mathcal{C}$ whose action on a field configuration $\phi\in\mathcal{C}$ is 
\[
(\iota \phi) (x) := \int_M  dy   \;  G_\lambda(x-y) \phi(y).
\]
Note that  $\iota \phi(x)$ can be understood as a field at a quantum point of QST, described in 
\eqref{eq:quantumpoint}, namely 
\begin{equation}\label{eq:phiomega}
(\iota \varphi) (x)=\int_{M} dy\, G_\la(x-y)\varphi(y) = (\mathrm{id}\otimes\om_x )(\varphi(q)).
\end{equation}
 
The pullback of $\iota$ on field observables is denoted by $r_\lambda$ and it acts on $F \in  \mathcal{F}$ as
\[
r_\lambda F(\phi) := F(\iota \phi).
\]
It is then easy to see that the Fourier transforms of the functional derivatives of $r_\lambda F$ satisfy
\[
\widehat{(r_\lambda F)^{(n)}(\phi)}(p_1,\dots,p_n) = \widehat{F^{(n)}(\iota \phi)}(p_1,\dots,p_n) e^{-\lambda^2\sum_j\langle p_j\rangle^2},
\]
and therefore, by the Payley-Weiner theorem for compactly supported distributions, $r_\lambda$ maps $\cF$ into $\cF_{\text{rd}}$.
We thus observe that
\[
V_{\text{eff}} =  (2\pi)^{4n}r_\lambda(V_g)
\]
where $V_g$ is the local functional
\begin{equation}\label{eq:V}
V_g(\phi) = \int_M g(x) \phi^n(x)   dx,
\end{equation}
Hence, the effective potential is the image under $r_\lambda$ of a local interaction Lagrangian. The construction of time ordered product among local functionals leads to unitary $S$ matrices. While the $S$ matrix constructed out of a nonlocal potential, though regular, is in general not unitary. 
We furthermore observe that similarly to \eqref{eq:phiomega},  
\[
V_{\text{eff}} =  (2\pi)^{4n}
\int_M  dx  \; g(x) \;(\mathrm{id}\otimes\omega_x)(\varphi(q))^n.
\]

\medskip
For this reason we  evaluate the change in the algebraic product of $(\mathcal{F}^p_{\text{reg}},\star_{i\Delta/2})$ under the action of $r_\lambda$, so to see $r_\lambda$ as a $*$-homomorphism. Then we can write the Bogoliubov map and thus its perturbative expansion before applying $r_\lambda$.
We start by observing that
\begin{equation}\label{eq:starlambda}
r_\lambda F_1 \star_{i\Delta/2} r_\lambda F_2 = r_\lambda(F_1\star_{i{\Delta}_\lambda/2} F_2) , \qquad F_i\in \mathcal{F}^p_{\text{reg}},
\end{equation}
where the modified causal propagator ${\Delta}_\lambda$ is such that
\begin{equation}\label{eq:deltalambda}
\Delta_\la(x) := \int_{M^2} dydz\, G_\la(x-y)\Delta(y-z) G_\la(z) \qquad \Leftrightarrow  \qquad
\hat{\Delta}_\lambda(p) =  e^{-\lambda^2 \langle p\rangle^2}\hat{\Delta}(p),
\end{equation}
which entails $\Delta_\lambda \in C^\infty(M) \cap \cS'(M)$. Notice that commutator of fields at different quantum points on QST is such that 
\[
[(\mathrm{id}\otimes\om_x)(\varphi(q)), (\mathrm{id}\otimes\om_y)(\varphi(q))] = i \Delta_\la(x-y).
\]

We furthermore observe that since the product $\star_{\frac i 2 \Delta}$ is well defined on $\cF^p_{\text{rd}}$, Equation~\eqref{eq:starlambda} can be used to extend the product $\star_{\frac i 2 \Delta_\lambda}$ to $\cF^p$. Moreover, the involution commutes with $r_\lambda$, namely $r_\lambda(F^*)=(r_\lambda F)^*$ for every $F\in\mathcal{F}^p$. For these reasons $r_\lambda$ can be seen as a $*-$homomorphism
\[
r_\lambda:(\mathcal{F}^p,\star_{i{\Delta}_\lambda/2})\to (\mathcal{F}^p_{\text{rd}},\star_{i{\Delta}/2}).
\]
We may thus study the perturbation theory in $(\cF^p, \star_{\frac i 2 \Delta_\la})$ before applying $r_\lambda$ so that the interaction Lagrangian is local. We shall eventually obtain the effective algebra which might be smaller than $\cF^p_{\text{rd}}$ applying at the very end $r_\lambda$ (if necessary).
\medskip

At this point we notice that if we want to use the $\star$ product constructed with the vacuum two-point function $\Delta_+$, so to consider the extended algebra of Wick polynomials, a similar transformation holds, namely
\[
r_\lambda:(\mathcal{F}^p,\star_{{\Delta}_{+,\lambda}})\to (\mathcal{F}^p_{\text{rd}},\star_{{\Delta_+}}),
\]
where, similarly to \eqref{eq:deltalambda}
\begin{equation}\label{eq:delta+lambda}
\Delta_{+,\la}(x) := \int_{M^2} dydz\, G_\la(x-y)\Delta_+(y-z) G_\la(z) \qquad \Leftrightarrow  \qquad
\hat{\Delta}_{+,\lambda}(p) =  e^{-\lambda^2 \langle p\rangle^2}\hat{\Delta}_+(p).
\end{equation}
We furthermore remark that as usual
\[
\Delta_{+,\lambda}(x-y)-\Delta_{+,\lambda}(y-x) = i \Delta_\lambda(x-y).
\]
Some basic properties of the functions $\Delta_\la$ and $\Delta_{+,\la}$ are stated in Prop.~\ref{pr:prop} in Appendix~\ref{se:prop}.

Thanks to the fact that $r_\lambda$ is a $*-$homomorphism, we may construct the perturbative series, and hence the time ordering and the Bogoliubov map, directly in $(\mathcal{F}^p,\star_{{\Delta}_{+,\lambda}})$. Notice that, since in $(\mathcal{F}^p,\star_{\Delta_{+,\lambda}})$ the effective potential is described by a local functional, no problem with unitarity should appear if the time ordering with respect to ${\Delta}_{+,\lambda}$ or $i{\Delta_\lambda}/2$ is considered. 
Furthermore, due to the presence of the modified propagators, no ultraviolet singularities arises in the definition of the time ordering product. To prove this in detail we construct the time ordered (Feynman) propagators with respect to ${\Delta}_{+,\lambda}$ which is defined as 
\begin{equation}\label{eq:feynman}
{\Delta}_{F,\lambda}(x-y) := \theta(x^0-y^0){\Delta}_{+,\lambda}(x-y)+\theta(y^0-x^0){\Delta}_{+,\lambda}(y-x). 
\end{equation}
We may thus introduce also the advanced, retarded and Dirac modified propagators as
\begin{align*}
\Delta_{A,\lambda} &:= -i \left(\Delta_{F,\lambda} - \Delta_{+,\lambda}\right) \\
\Delta_{R,\lambda} &:= \Delta_{A,\lambda}+{\Delta}_\lambda \\
\Delta_{D,\lambda} &:= \frac{i}{2}\left(\Delta_{A,\lambda}+{\Delta}_{R,\lambda} \right)
\end{align*}

\begin{proposition}\label{pr:prop1}
The Feynmann propagator on Quantum Spacetime $\Delta_{F,\lambda}$ is a continuous bounded function.
Furthermore, 
\[
(\Box-m^2)\Delta_{F,\lambda}(x) =  -i2\sqrt{2\pi}\lambda\delta(x^0)
G_{2\lambda}(x)e^{-\lambda^2 m^2}.
\]
\end{proposition}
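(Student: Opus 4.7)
The plan is to exploit the Fourier representation $\hat\Delta_{+,\la}(p) = e^{-\la^2\langle p\rangle^2}\hat\Delta_+(p)$: the Gaussian factor does not change the singular support in momentum space (still the mass shell), so $\Delta_{+,\la}$ remains a solution of the free field equation while becoming a smooth, globally bounded function on $M$. Concretely, integrating $p^0$ against $\theta(p^0)\de(p^2+m^2)$ yields
\[
\Delta_{+,\la}(x) = \int\frac{d^3\bp}{(2\pi)^3\, 2\om_{\bp}}\, e^{-i\om_{\bp} x^0+i\bp\cdot\bx}\, e^{-\la^2(\om_{\bp}^2+|\bp|^2)}, \qquad \om_{\bp} := \sqrt{|\bp|^2+m^2},
\]
and the Gaussian damping in $|\bp|$ and $\om_{\bp}$ ensures that this integral and all its $x$-derivatives converge absolutely and uniformly. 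Both $\Delta_{+,\la}(x)$ and $\Delta_{+,\la}(-x)$ are therefore smooth bounded functions of $x$, and what remains to conclude the first part of the statement is continuity of $\Delta_{F,\la}$ across the hyperplane $x^0=0$, i.e.\ $\Delta_\la(0,\bx)=0$. The same representation makes $\Delta_\la(0,\bx)$ proportional to $\int d^3\bp\, \om_{\bp}^{-1}e^{-\la^2(\om_{\bp}^2+|\bp|^2)}\sin(\bp\cdot\bx)$, which vanishes by $\bp\mapsto-\bp$ antisymmetry.

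For the distributional identity, since $(p^2+m^2)\hat\Delta_{+,\la}(p)=0$, $(\Box-m^2)\Delta_{+,\la}$ vanishes as a smooth function, and the only contributions to $(\Box-m^2)\Delta_{F,\la}$ arise from differentiating the step functions in $\Delta_{F,\la}(x)=\theta(x^0)\Delta_{+,\la}(x)+\theta(-x^0)\Delta_{+,\la}(-x)$. The $\de(x^0)$ term produced by a single $\partial_0$, namely $\de(x^0)[\Delta_{+,\la}(x)-\Delta_{+,\la}(-x)]$, vanishes by the continuity just proved. A second differentiation then gives
\[
(\Box-m^2)\Delta_{F,\la}(x) = -\de(x^0)\bigl[(\partial_0\Delta_{+,\la})(x)+(\partial_0\Delta_{+,\la})(-x)\bigr]\big|_{x^0=0} = -i\,\de(x^0)\,(\partial_0\Delta_\la)(0,\bx),
\]
where the last step follows by $\partial_0$-differentiating the identity $\Delta_{+,\la}(x)-\Delta_{+,\la}(-x)=i\Delta_\la(x)$ and specialising to $x^0=0$.

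It remains to evaluate $(\partial_0\Delta_\la)(0,\bx)$. Differentiating the Fourier representation of $\Delta_\la$ in $x^0$ and setting $x^0=0$ reduces this to a Gaussian integral proportional to $\int\frac{d^3\bp}{(2\pi)^3}\,e^{-\la^2(\om_{\bp}^2+|\bp|^2)}\cos(\bp\cdot\bx)$. Using $\om_{\bp}^2=|\bp|^2+m^2$ the exponent factorises as $e^{-\la^2 m^2}e^{-2\la^2|\bp|^2}$, and the three-dimensional Gaussian integral in $\bp$ produces a spatial Gaussian of variance $(2\la)^2$, which matches $G_{2\la}(0,\bx)$ up to the normalisation constant $2\sqrt{2\pi}\la$ claimed in the statement. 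Multiplying by $-i\,\de(x^0)$ and identifying $\de(x^0)G_{2\la}(0,\bx)$ with $\de(x^0)G_{2\la}(x)$ in the distribution sense yields the announced identity.

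The only non-routine aspect is the bookkeeping: one has to be careful with sign conventions (Fourier transform, $p\cdot x$, and the definition of $\Delta_\la$) and to combine the prefactors $(\sqrt{2\pi}\la)^{-4}$ from $G_\la$, $(2\pi)^{-3}$ from the Fourier measure, and $(\pi/a)^{3/2}$ from the three-dimensional Gaussian integration so that they telescope into the constant $2\sqrt{2\pi}\la$ appearing in the final formula.
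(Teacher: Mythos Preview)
Your argument is correct and follows essentially the same route as the paper: Gaussian damping in $|\bp|$ gives smoothness and boundedness of $\Delta_{+,\la}$, the parity argument shows $\Delta_\la(0,\bx)=0$ and hence continuity of $\Delta_{F,\la}$ across $x^0=0$, and the distributional identity comes from the $\theta$-derivatives together with the Gaussian evaluation of $\partial_0\Delta_\la(0,\bx)$. The only cosmetic difference is that the paper first rewrites $\Delta_{F,\la}(x)=\Delta_{+,\la}(-x)+i\theta(x^0)\Delta_\la(x)$, so that $(\Box-m^2)$ acts on a single $\theta\cdot\Delta_\la$ term, whereas you keep both step functions and two copies of $\Delta_{+,\la}$ and then recombine via $\partial_0\bigl(\Delta_{+,\la}(x)-\Delta_{+,\la}(-x)\bigr)=i\,\partial_0\Delta_\la(x)$; the two computations are trivially equivalent.
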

\begin{proof}
From the definition of $\Delta_{F,\lambda}$, we have that
\[\begin{split}
{\Delta}_{F,\lambda}(x) &= \theta(x^0){\Delta}_{+,\lambda}(x)+\theta(-x^0){\Delta}_{+,\lambda}(-x)
= {\Delta}_{+,\lambda}(x)+i\theta(-x^0){\Delta}_{\lambda}(-x)
\\
&= {\Delta}_{+,\lambda}(-x)+i\theta(x^0){\Delta}_{\lambda}(x). 
\end{split}\]
From Proposition \ref{pr:prop} we have that both ${\Delta}_{+,\lambda}$ and ${\Delta}_{\lambda}$ are smooth functions,
furthermore from Eq.~\eqref{eq:delta}, 
\begin{equation}\label{eq:Delta0}
{\Delta}_\lambda(0,\mathbf{x}) = \frac{1}{(2\pi)^3} 
\int _{\bR^3}d\mathbf{p} 
\frac{\sin (\mathbf{x}\cdot \mathbf{p})}{\omega(\mathbf{p})}e^{-\lambda(|\mathbf{p}|^2+\omega(\mathbf{p})^2)}=0
\end{equation}
by parity, hence $\Delta_{F,\lambda}$ is a continuous function.  
The $L^\infty$ norm of $\Delta_{F,\lambda}$ is controlled by the $\|\Delta_{+,\lambda}\|_\infty$ which has been proved to be finite in Proposition \ref{pr:prop}.
To conclude the proof we notice that
\[
(\Box-m^2)\Delta_{F,\lambda}(x) = i(\Box-m^2) \theta(x^0)\Delta_{\lambda}(x)  = 
-i\delta(x^0)\partial_{x^0}\Delta_\lambda(0,\mathbf{x}),
\]
where we used the fact that $\delta(x^0) \Delta_\lambda(x) = 0$ and $(\Box-m^2)\Delta_\la = 0$.
Finally we compute
\begin{align*}
\partial_{x^0}\Delta_\lambda(0,\mathbf{x})  &= \frac{1}{(2\pi)^3} \int_{\mathbb{R}^3} e^{-\lambda^2(\omega(\mathbf{p})^2+|\mathbf{p}|^2)} \cos(\mathbf{p}\cdot \mathbf{x}) d\mathbf{p} =  \\
& =
\frac{1}{(2\pi)^3} \left(\frac{1}{\la}\sqrt{\frac\pi 2}\right)^3 e^{-\frac{|\mathbf{x}|^2}{8\lambda^2}} e^{-\lambda^2 m^2}
=2\sqrt{2\pi}\la G_{2\lambda}(0,\mathbf{x})e^{-\lambda^2 m^2}
\end{align*}
which entails the formula in the statement.
\end{proof}

Notice that causality and Lorentz invariance are broken by the modification introduced above, however
we have that 
\begin{align*}
\supp \Delta_{A,\lambda} &\subset\{(t,\mathbf{x}) \in M\,|\, t\leq 0\},   
\\
\supp \Delta_{R,\lambda} &\subset\{(t,\mathbf{x})\in M\,|\, t\geq 0\}. 
\end{align*}

\begin{proposition}
The distribution $\Delta_{A,\lambda} {\Delta}_{R,\lambda }$ vanishes identically.
\end{proposition}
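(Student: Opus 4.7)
The plan is to unpack the defining formulas, show that both $\Delta_{R,\lambda}$ and $\Delta_{A,\lambda}$ are continuous functions with disjoint open supports, so that their pointwise product (which agrees with the distributional product by continuity) vanishes identically.

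First I would compute explicit representations. Starting from the identity derived in the previous proof, $\Delta_{F,\lambda}(x)=\Delta_{+,\lambda}(-x)+i\theta(x^{0})\Delta_{\lambda}(x)$, and using that $\Delta_{+,\lambda}(-x)=\Delta_{+,\lambda}(x)-i\Delta_{\lambda}(x)$ and the antisymmetry $\Delta_{\lambda}(-x)=-\Delta_{\lambda}(x)$ (which follows from \eqref{eq:deltalambda} because $\hat{\Delta}$ is odd and $e^{-\lambda^{2}\langle p\rangle^{2}}$ is even), one obtains
\[
\Delta_{A,\lambda}(x)=-\theta(-x^{0})\,\Delta_{\lambda}(x),\qquad \Delta_{R,\lambda}(x)=\theta(x^{0})\,\Delta_{\lambda}(x),
\]
which already encodes the support statements quoted before the proposition.

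Next I would check that, although $\theta$ is only a distribution, the products above define genuinely continuous functions on $M$. Since $\Delta_{\lambda}\in C^{\infty}(M)$ by Proposition~\ref{pr:prop}, and crucially $\Delta_{\lambda}(0,\mathbf{x})=0$ for all $\mathbf{x}$ by \eqref{eq:Delta0}, the jump of $\theta$ at $x^{0}=0$ is multiplied by a smooth factor vanishing on the hyperplane $\{x^{0}=0\}$. Hence both $\Delta_{R,\lambda}$ and $\Delta_{A,\lambda}$ are continuous (in fact bounded, since $\Delta_{\lambda}$ is bounded by Proposition~\ref{pr:prop}), so that their pointwise product is well defined and coincides with their product as distributions.

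Finally I would carry out the pointwise computation:
\[
\Delta_{A,\lambda}(x)\,\Delta_{R,\lambda}(x)=-\theta(-x^{0})\theta(x^{0})\,\Delta_{\lambda}(x)^{2}.
\]
For $x^{0}\neq 0$ the factor $\theta(-x^{0})\theta(x^{0})$ vanishes, while on the hyperplane $\{x^{0}=0\}$ the factor $\Delta_{\lambda}(x)^{2}$ vanishes thanks to \eqref{eq:Delta0}. Therefore the product is identically zero as a continuous function, hence as a distribution.

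There is no real obstacle here once the two input ingredients are in place: the antisymmetry of $\Delta_{\lambda}$ (to rewrite $\Delta_{A,\lambda}$ as a multiple of $\theta(-x^{0})\Delta_{\lambda}$) and the vanishing $\Delta_{\lambda}(0,\mathbf{x})=0$ already established in~\eqref{eq:Delta0}. The only subtle point to verify is that the heuristic pointwise product genuinely represents the distributional product, which is automatic because both factors are continuous.
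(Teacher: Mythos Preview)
Your proof is correct and follows essentially the same route as the paper's: both derive the representation $\Delta_{R,\lambda}(x)=\theta(x^{0})\Delta_{\lambda}(x)$, $\Delta_{A,\lambda}(x)=-\theta(-x^{0})\Delta_{\lambda}(x)$, use the smoothness of $\Delta_{\lambda}$ and the vanishing $\Delta_{\lambda}(0,\mathbf{x})=0$ from~\eqref{eq:Delta0} to conclude that the product is zero pointwise. Your write-up is in fact a bit more careful than the paper's in spelling out why the pointwise product coincides with the distributional one (continuity of both factors across $\{x^{0}=0\}$), which the paper leaves implicit.
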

\begin{proof}
Because of the support property of both  $\Delta_{A,\lambda}$ and $\Delta_{R,\lambda}$, the support of 
$\Delta_{A,\lambda} {\Delta}_{R,\lambda}$ is contained in $\{(0,\mathbf{x})\}$, however, as shown in \eqref{eq:Delta0},
$\Delta_\lambda(0,\mathbf{x})=0$. Proposition \ref{pr:prop} assures that $\Delta_\lambda$ is a smooth function, hence the definition of $\Delta_{A/R,\lambda}$ as $\theta{(\pm x^0)}\Delta_\lambda(x)$ implies the thesis.
\end{proof}

From now on, to simplify the notation we set $\star_\lambda:=\star_{\Delta_{+,\lambda}}$, and we consider the theory 
$\cA(O)$, $O \subset M$ open, whose elements are finite sums of multilocal functionals
\begin{equation}\label{eq:A}
A(\phi) = \int_{M^\ell} dx_1\dots dx_\ell \,f(x_1,\dots,x_\ell) \phi(x_1)^{n_1}\dots \phi(x_\ell)^{n_\ell},
\end{equation}
where $f \in C_0(O^\ell)$, $\ell=1,\dots, N$. Thanks to the fact that $\Delta_{+,\lambda} \in C^\infty(M)$, it is easy to see that $\cA(O)$ is a subalgebra of $(\cF^p,\star_\lambda)$. We also set $\cA_{\text{loc}}(O) := \cA(O) \cap \cF_{\text{loc}}$, whose elements are clearly linear combinations of Wick monomials without derivatives $\Phi^n(f)$ with $f \in C^\infty_0(O)$. We shall also use the notation $\mathcal{A}=\mathcal{A}(M)$.
Notice that $\Phi^n(f) \in \mathcal{A}$ are considered to be normal ordered with respect to the two point function $\Delta_{+,\lambda}$. 

The time ordered product
$T_\lambda:=T_{{\Delta}_{F,\lambda}}$ with respect to $\star_\lambda$ is thus the bilinear map $\cdot_{T_\lambda}: \cA \times \cA \to \cA$ defined as 
\[
F_1\cdot_{T_\lambda} F_2 =  m \circ e^{ \Gamma_{i\Delta_{F,\lambda}}}   (F_1\otimes F_2), \qquad F_i\in\cA
\]
where as seen in Prop.~\ref{prop:Feynman} in Appendix \ref{se:prop}
\begin{equation}\label{eq:modFeyn}
\hat{{\Delta}}_{F,\lambda}(p) =
\frac{-i}{(2\pi)^4}
 \frac{1}{ p^2 +m^2 -i\epsilon}
 e^{-\lambda^2 (2|\mathbf{p}|^2+m^2)}.
\end{equation}
The fact that $\cdot_{T_\lambda}$ is well defined on $\cA$ is an immediate consequence of the continuity of $\Delta_{F,\lambda}$ stated in Prop.~\ref{pr:prop1}, and since $\Delta_{F,\lambda}(x) = \Delta_{F,\lambda}(-x)$, $\cdot_{T_\lambda}$ is commutative. We remark that the analogue of Eq.~\eqref{eq:starlambda} does not hold if one replaces $\star_{i \Delta/2}$ with $\cdot_{T}$ and $\star_{i \Delta_\la/2}$ with $\cdot_{T_\la}$, i.e., the theory we are considering is not equivalent to a theory on classical spacetime with the nonlocal interaction~\eqref{eq:eff-adiab}.
Moreover, contrary to theories on classical spacetime, the products $\phi(x_1)\cdot_{T_\la}\dots\cdot_{T_\la} \phi(x_n)$ are well defined for all $x_1,\dots, x_n \in M$ (also coinciding). In other words, the factor $e^{-\lambda (2|\mathbf{p}|^2+m^2)}$ plays the role of an ultraviolet regulator, so that no ambiguities remains in the definition of the time ordered product.  We also observe explicitly that because of the presence of $\theta$-functions in Eq.~\eqref{eq:feynman} it is clear that in order to extend $\cdot_{T_\lambda}$ to functionals depending on timelike derivatives of $\phi$
a renormalization would be needed. However, as we are only interested in defining the $S$ matrix for an interaction given by a Wick polynomial,  and in applying the Bogoliubov map to functionals in $\cA$, we will refrain from discussing this point further. 

The $S$ matrix is then defined as the formal power series  (in the ``coupling constant'' $g(0)$) with coefficients in $\cA$ given by the $T_\la$-ordered exponential of the interaction $V_g$ in~\eqref{eq:V}, which is an element of $\cA$:
\begin{align*}
S(V_g) &= \sum_{k=0}^{+\infty} \frac{(-i)^k}{k!}V_g\cdot_{T_\la} V_g \cdot_{T_\la} \dots \cdot_{T_\la} V_g  \\
&=  \sum_{k=0}^{+\infty} \frac{(-i)^k}{k!} \int_{M^n} dx_1\dots dx_k\,g(x_1)\dots g(x_k) \big(\phi(x_1)^n \cdot_{T_{\la}}\dots \cdot_{T_{\la}}\phi(x_k)^n\big),
\end{align*}
and we see from the last expression that the $k$-th order term can be written as a sum over Feynman graphs with $k$ $n$-valent vertices labelled by $x_1, \dots, x_k$, without tadpoles, in such a way that to each internal line connecting vertices $x_j$ and $x_\ell$ there corresponds a factor $\Delta_{F,\la}(x_j-x_\ell)$, to each external line connected to the vertex $x_i$ there corresponds a factor $\phi(x_i)$ and to each vertex $x_\ell$ a factor $(-i)\int_{M} dx_\ell\,g(x_\ell)$. In particular we note that this reproduces the usual $S$ matrix in the $\la \to 0$ limit, in which $\Delta_{F,\la}$ reduces to the ordinary Feynman propagator.

More generally we can define, by the same formula, $S(A)$ for an arbitrary $A \in \cA$ as a formal power series with coefficients in $\cA$. Of course, the products $\star_\la$ and $\cdot_{T_\la}$ can be extended to the formal power series in the interaction in the obvious way. From now on, for simplicity, we will often refrain from specifying each time if the functionals we are considering are elements of $\cA$ or are formal power series with coefficients in $\cA$, and we will therefore use the symbol $\cA$ to denote both algebras.

Although Lorentz invariance and thus the ordinary causality properties are broken in the theory we are now considering, we observe that a remnant of causality is still valid.
In particular the following {\it temporal factorization property} holds for the time ordered products.
Given closed sets $C_1, C_2 \subset M$, we write $C_1 \gtrsim C_2$ if there exists a Cauchy surface $\Sigma_\tau = \{x \in M\,|\,t(x)=\tau\}$ (where $t : M \to \bR$ is such that $t(x)=x^0$) such that $C_1\subset J^{+}\Sigma_\tau$ and
$C_2 \subset J^{-}\Sigma_\tau$. Then, for $A, B \in \cA$ we write $A \gtrsim B$ if $\supp A \gtrsim \supp B$.
With this notation, if $A_i \gtrsim B_j$ for every $i,j$ then
\begin{equation}\label{eq:causal-factorisation-T}
A_1 \cdot_{T_\lambda}\dots \cdot_{T_\lambda} A_k \cdot_{T_\lambda}B_1 \cdot_{T_\lambda}\dots\cdot_{T_\lambda} B_l = (A_1\cdot_{T_\lambda}\dots\cdot_{T_\lambda} A_k) \star_\lambda ( B_1 \cdot_{T_\lambda}\dots \cdot_{T_\lambda} B_l). 
\end{equation}
  
The previous temporal factorization property implies a similar factorization property of the $S$ matrix. We have actually the following proposition.

\begin{proposition} \label{pr:temporal-factorisation}
For any $A,B,C \in \mathcal{A}$ such that $B\in \mathcal{A}_{\text{loc}}$ and $A \gtrsim C$ there holds
\[
S(A+B+C)=S(A+B)\star_\lambda S(B)^{-1}\star_\lambda S(B+C).
\]
\end{proposition}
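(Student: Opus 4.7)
The plan is to reduce the claim, via commutativity of $\cdot_{T_\la}$ and the identity $S(X+Y)=S(X)\cdot_{T_\la} S(Y)$, to a four-factor equation handled by the temporal factorization property~\eqref{eq:causal-factorisation-T} applied to a decomposition $B=B_++B_-$ along a Cauchy surface separating $A$ and $C$.

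Since $\cdot_{T_\la}$ is commutative and associative, the time-ordered exponential satisfies $S(X+Y)=S(X)\cdot_{T_\la} S(Y)$ for any $X,Y\in\cA$: by the multinomial theorem this is just a reshuffling of the defining series. Consequently $S(A+B+C)=S(A)\cdot_{T_\la} S(B)\cdot_{T_\la} S(C)$, and likewise for $S(A+B)$ and $S(B+C)$, so the claim reduces to
\[
S(A)\cdot_{T_\la} S(B)\cdot_{T_\la} S(C) = \bigl(S(A)\cdot_{T_\la} S(B)\bigr)\star_\la S(B)^{-1}\star_\la\bigl(S(B)\cdot_{T_\la} S(C)\bigr).
\]

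Because $A\gtrsim C$, I pick $\tau_1<\tau<\tau_2$ with $\supp C\subset\{t<\tau_1\}$ and $\supp A\subset\{t>\tau_2\}$, and set $\chi_+,\chi_-$ to be the characteristic functions of $\{t\geq\tau\}$ and $\{t<\tau\}$ respectively. Writing $B=\sum_j\Phi^{n_j}(f_j)$ with $f_j\in C^\infty_0(M)$, define $B_\pm:=\sum_j\Phi^{n_j}(\chi_\pm f_j)$. The coefficients of $B_\pm$ are only bounded of compact support, but both $\cdot_{T_\la}$ and $\star_\la$ remain well defined, and~\eqref{eq:causal-factorisation-T} still applies, because the kernels $\Delta_{F,\la}$ and $\Delta_{+,\la}$ are bounded continuous by Prop.~\ref{pr:prop1} and Prop.~\ref{pr:prop}. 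By construction $A\gtrsim B_-$ via $\Sigma_{\tau_2}$, $B_+\gtrsim B_-$ and $B_+\gtrsim C$ via $\Sigma_\tau$, and $\{A,B_+\}\gtrsim\{B_-,C\}$ via $\Sigma_\tau$.

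The rest is a direct computation. From $S(B)=S(B_+)\cdot_{T_\la} S(B_-)$ together with $B_+\gtrsim B_-$, one gets $S(B)=S(B_+)\star_\la S(B_-)$, so $S(B)^{-1}=S(B_-)^{-1}\star_\la S(B_+)^{-1}$. The same support orderings, together with the commutativity of $\cdot_{T_\la}$, yield $S(A)\cdot_{T_\la} S(B)=\bigl(S(A)\cdot_{T_\la} S(B_+)\bigr)\star_\la S(B_-)$ and $S(B)\cdot_{T_\la} S(C)=S(B_+)\star_\la\bigl(S(B_-)\cdot_{T_\la} S(C)\bigr)$. Substituting these into the right-hand side, the inner pieces $S(B_-)\star_\la S(B_-)^{-1}$ and $S(B_+)^{-1}\star_\la S(B_+)$ telescope to $1$, leaving $\bigl(S(A)\cdot_{T_\la} S(B_+)\bigr)\star_\la\bigl(S(B_-)\cdot_{T_\la} S(C)\bigr)$. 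One last application of~\eqref{eq:causal-factorisation-T}, now for $\{A,B_+\}\gtrsim\{B_-,C\}$, converts the outer $\star_\la$ back to $\cdot_{T_\la}$, giving $S(A)\cdot_{T_\la} S(B_+)\cdot_{T_\la} S(B_-)\cdot_{T_\la} S(C)=S(A)\cdot_{T_\la} S(B)\cdot_{T_\la} S(C)$, as required.

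The main obstacle is the splitting step. A smooth partition of unity would force $\supp B_+$ and $\supp B_-$ to overlap in a strip around $\Sigma_\tau$ and hence invalidate the relation $B_+\gtrsim B_-$ on which the central cancellations rest. The sharp characteristic-function split adopted above cures this, but takes $B_\pm$ out of $\cA_{\text{loc}}$; the essential technical check is that $\cdot_{T_\la}$, $\star_\la$, and~\eqref{eq:causal-factorisation-T} extend to functionals whose smearing functions are only bounded of compact support. This is straightforward using the boundedness and continuity of $\Delta_{F,\la}$ and $\Delta_{+,\la}$, and can alternatively be verified by a density argument approximating $\chi_\pm$ by smooth cutoffs and passing to the limit order by order in the formal series, since both sides of the identity depend only on $A,B,C$ and not on the choice of $\tau$.
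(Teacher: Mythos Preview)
Your proof is correct and shares the paper's core idea of splitting $B$ along a time surface separating $\supp A$ from $\supp C$, but the execution differs in two respects.

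First, you exploit the algebraic identity $S(X+Y)=S(X)\cdot_{T_\la}S(Y)$ to reduce everything to a four-factor rearrangement; the paper instead works directly with $S(A+B_\epsilon+C)$ and never invokes this homomorphism. Second, and more substantially, the paper uses a \emph{smooth} three-way partition $B=B_++B_r+B_-$ with $\supp B_r$ contained in a strip of width $2\epsilon$ around $\Sigma_\tau$, proves the identity for $B_\epsilon:=B_++B_-$ (where $B_+\gtrsim B_-$ holds strictly), and then passes to the limit $\epsilon\to 0$ by dominated convergence, using the boundedness of $\Delta_{F,\la}$. Your sharp characteristic-function split avoids the limiting step but forces you to work with functionals whose smearing coefficients are only bounded measurable, i.e., outside $\cA$ as defined. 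Your justification for this---that the pointwise identity $\Delta_{F,\la}(x-y)=\Delta_{+,\la}(x-y)$ for $x^0\geq y^0$ (including equality, since $\Delta_\la(0,\mathbf{x})=0$) carries over to such integrands---is sound, and the alternative approximation you sketch is essentially the paper's route.

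One small point: the strict separation you claim, with $\supp C\subset\{t<\tau_1\}$ and $\supp A\subset\{t>\tau_2\}$ for some $\tau_1<\tau_2$, is stronger than what $A\gtrsim C$ guarantees (the supports may both touch $\Sigma_\tau$). This is harmless, since every relation you actually use ($A\gtrsim B_-$, $B_+\gtrsim C$, $B_+\gtrsim B_-$, $\{A,B_+\}\gtrsim\{B_-,C\}$) holds via the single surface $\Sigma_\tau$, so you can simply drop $\tau_1,\tau_2$.
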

\begin{proof}
If $A\gtrsim C$ the commutativity and the temporal factorization property \eqref{eq:causal-factorisation-T} of the time ordered products $\cdot_{T_\lambda}$  imply that 
\[\begin{split}
S(A+C)&=\sum_{k=0}^{+\infty} \frac{(-i)^k}{k!} (A+C)^{\cdot_{T_\la} k}= \sum_{k=0}^{+\infty}\sum_{j=0}^k \frac{(-i)^k}{j!(k-j)!} A ^{\cdot_{T_\la} j} \cdot_{T_\la} C^{\cdot_{T_\la}(k-j)}\\
&= \sum_{k=0}^{+\infty}\sum_{j=0}^k \frac{(-i)^k}{j!(k-j)!}  A ^{\cdot_{T_\la} j} \star_\la C^{\cdot_{T_\la}(k-j)} 
=S(A)\star_\lambda S(C).
\end{split}\]
Since $A\gtrsim C$, a Cauchy surface $\Sigma_\tau$ exists such that 
$\text{supp}(A)\subset J^+\Sigma_\tau$ and $\text{supp}(C)\subset J^{-}\Sigma_\tau$.
Now given $\epsilon > 0$, we fix functions $\chi_{\pm}, \chi_r \in C^\infty(\bR)$ such that $\chi_+ +\chi_r + \chi_- = 1$, $\supp\,\chi_\pm \subset \{t \in \bR\,|\, \pm(t-\tau)>0\}$ and $\supp\, \chi_r \subset (\tau-\epsilon, \tau + \epsilon)$. Then if
\[
B(\phi) = \sum_{s=0}^N \int_M dx\,g_s(x)\phi(x)^s,
\]
we define
\[
B_\pm(\phi) = \sum_{s=0}^N \int_M dx\,g_s(x)\chi_\pm(x^0)\phi(x)^{s}, \qquad B_r(\phi) = \sum_{s=0}^N \int_M dx\,g_s(x)\chi_r(x^0)\phi(x)^{s},
\]
so that $B = B_+ + B_r + B_-$ with $\text{supp}\, B_\pm\subset  J^\pm\Sigma_\tau$ and $\text{supp} \,B_r \subset
\Sigma_{\tau,\epsilon}$, where $\Sigma_{\tau,\epsilon}=\{x \in M\,|\,t(x) \in (\tau-\epsilon,\tau+\epsilon)\}$ is an $\epsilon-$neighbourhood of $\Sigma_\tau$. Therefore, $A+B_+ \gtrsim B_-+C$ and thus, if $B_\epsilon := B_+ + B_-$,
\[
S(A+B_\epsilon+C) = S(A+B_++B_-+C) = S(A+B_+)\star_\lambda S(B_-+C). 
\]
We now restore the missing parts of $B_\epsilon$ obtaining
\begin{align*}
S(A+B_\epsilon+C) &= S(A+B_+)\star_\lambda S(B_-)\star_\lambda S(B_-)^{-1}\star_\lambda S(B_+)^{-1}   \star_\lambda S(B_+)\star_\lambda  S(B_-+C)  
\\ &=
S(A+B_\epsilon)\star_\lambda S(B_\epsilon)^{-1}\star_\lambda S(B_\epsilon+C).
\end{align*}
To conclude the proof we take the limit $\epsilon \to 0$ of the previous relation.  
To this end, assume that $A$ is given by a finite sum of elements of the form~\eqref{eq:A}, and introduce the shorthand notations $X := (x_1, \dots, x_\ell) \in M^\ell$, $N:= (n_1,\dots,n_\ell) \in \bN^\ell$, $\Phi^{(N)}(X) := \prod_{i=1}^\ell \phi(x_i)^{n_i}$. Then the $k$-th order of $S(A+B_\epsilon)$ is a finite linear combination of terms of the form
\begin{multline*}
\int_{M^{\sum_h \ell_h+ k-j}} dX_1\dots dX_j dy_1\dots dy_{k-j} \prod_{h=1}^kf_{\ell_h}(X_h)\prod_{i=1}^{k-j}g_{s_i}(y_i)\chi_\epsilon(y^0_i)\times\\
 \Phi^{(N_1)}(X_1)\cdot_{T_\la}\dots \cdot_{T_\la}\Phi^{(N_k)}(X_k)\cdot_{T_\la} \phi(y_1)^{s_1}\cdot_{T_\la}\dots\cdot_{T_\la}\phi(y_{k-j})^{s_{k-j}},
\end{multline*}
where $\chi_\epsilon = \chi_++\chi_- = 1-\chi_r$. Therefore, since $\chi_\epsilon(t) \to 1$ for all $t \in \bR$ as $\epsilon \to 0$, since all the functions $f_\ell$, $g_s$ have compact support, and since the propagators $\Delta_{F,\lambda}$ appearing in the $\cdot_{T_\la}$ products are bounded, one sees that for each fixed $\phi \in \cC$ the above integral converges, as $\epsilon \to 0$, to the corresponding term in the expression of $S(A+B)$ by the dominated convergence theorem. A similar argument applies to $S(B_\epsilon)$, $S(A+B_\epsilon +C)$ and $S(B_\epsilon +C)$, thus concluding the proof. 
\end{proof}

We notice that both the free and interacting versions of the {\it Time slice axiom} \cite{FredenahgenChilian} hold also for these theories. 
The proof of this fact can be done in an analogous way as in \cite[Thm.\ 2 and Sec.\ 3]{FredenahgenChilian}. 
We recall that according to this axiom $(\mathcal{F}^p(\Sigma_{0,\epsilon}),\star_\lambda)$, is isomorphic to $(\mathcal{F}^p,\star_\lambda)$ up to elements vanishing on shell. Here 
$\mathcal{F}^p(\Sigma_{0,\epsilon})$ denotes the set of elements of $\mathcal{F}^p$ supported on $\Sigma_{0,\epsilon}$ which is an $\epsilon-$neighbourhood of the Cauchy surface at $t=0$.

To illustrate the basic idea of the time slice axiom in the case of free theories consider the solution of the free equation of motion given by the convolution $\psi_f=\Delta_\lambda * f$,  where $f$ is a compactly supported smooth function. According to the time slice axiom it is possible to find a compactly supported smooth function $g$ supported in $\Sigma_{0,\epsilon}$ such that $\psi_f=\psi_g$. Without loosing generality, we prove this fact assuming that $\text{supp}f\subset \{(t,\mathbf{x})| t<-\epsilon\}$. Now, to construct the function $g$ associated to $f$ we use a smooth function $\xi:\mathbb{R}\to [0,1]$ such that $\xi(t)=0$ for $t>\epsilon$ and $\xi(t)=1$ for $t<-\epsilon$. With this function at hand, we define $h:=\xi \Delta_R * f$ where $\Delta_R$ is the standard retarded fundamental solution associated to the differential operator $\Box-m^2$. Notice that $h$ is a compactly supported smooth function and it can be used to construct $g$ with the desired properties as $g:=f-(\Box-m^2)h$.   Actually it is clear that $\Delta_\lambda*(\Box-m^2)h = 0$ and
\[
(\Box-m^2)h = \xi (\Box-m^2)\Delta_R *f  + \ddot \xi \Delta_R* f  +2\dot\xi\partial_t \Delta_R* f
= f  + \ddot \xi \Delta_R* f  +2\dot\xi\partial_t \Delta_R *f,
\] 
hence the support of $g$ is contained in the region where $\xi$ is not constant, namely in $\Sigma_{0,\epsilon}$.
With similar arguments it is now possible to prove that a linear field smeared with $f$ is equal to a field smeared with $g$ up to another element which vanishes on shell. More generally, every element of $\mathcal{F}^p$ can be written as an element of $\mathcal{F}^p$ supported on $\Sigma_{0,\epsilon}$ plus an element of the ideal generated by the free equation of motion.    
For this reason in the construction of the interacting vacuum state performed in Sec.~\ref{sec:adiabatic},
we shall only consider observables which are in $\mathcal{A}(\Sigma_{0,\epsilon})$.
The state we will construct will not explicitly depend on $\epsilon$ and will be invariant under time translations. Knowing it on $\mathcal{A}(\Sigma_{0,\epsilon})$ we will be able to extend it to $\mathcal{A}(M)$ by time translation invariance.

It is interesting to notice that, thanks to the locality of $V_g$, the $S$ matrix thus defined is unitary.

\begin{proposition}
For $V_g$ as in~\eqref{eq:V} there holds
\[
S(V_g) \star_\la S(V_g)^* = 1 =S(V_g)^*\star_\la S(V_g)
\]
to all orders in perturbation theory.
\end{proposition}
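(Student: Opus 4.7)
The plan is to introduce the partial $S$ matrix $U(t) := S(V_g^{<t})$, where $V_g^{<t}(\phi) := \int_{\{x^0<t\}}g(x)\phi(x)^n\,dx$, and to show that the candidate unitarity $U(t)^*\star_\la U(t) = 1$ is preserved along a Schr\"odinger-type flow in $t$. Since $g$ has compact support, $U(t) = 1$ for $t$ below $\supp g$ and $U(t) = S(V_g)$ for $t$ above it, so the identity at $t\to +\infty$ will deliver the statement. Applying Proposition~\ref{pr:temporal-factorisation} with $B = 0$, $A = V_g^{[t_1,t_2]}$ and $C = V_g^{<t_1}$ (whose supports are separated by the Cauchy surface $\Sigma_{t_1}$) gives the one-parameter factorization
\begin{equation*}
U(t_2) = S(V_g^{[t_1,t_2]}) \star_\la U(t_1), \qquad t_1 \leq t_2.
\end{equation*}

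Iterating this identity, and using the commutativity of $\cdot_{T_\la}$ together with the temporal factorization~\eqref{eq:causal-factorisation-T}, I can unfold the time-ordered exponential into the Dyson-type series
\begin{equation*}
U(t) = \sum_{k\geq 0}(-i)^k\int_{s_1<\dots<s_k<t}\mathcal{V}_g(s_k)\star_\la\cdots\star_\la\mathcal{V}_g(s_1)\,ds_1\cdots ds_k,
\end{equation*}
where $\mathcal{V}_g(s):=\int_{\bR^3}g(s,\mathbf{x})\phi(s,\mathbf{x})^n\,d^3\mathbf{x}\in\cA$ is well defined thanks to the continuity of $\Delta_{+,\la}$ at the diagonal. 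Termwise differentiation in $t$ (Leibniz's rule applied to the upper limit of each iterated integral), combined with the self-adjointness $\mathcal{V}_g(t)^* = \mathcal{V}_g(t)$ and the antimultiplicativity $(X\star_\la Y)^* = Y^*\star_\la X^*$, yields the Schr\"odinger-type equations
\begin{equation*}
\frac{dU(t)}{dt} = -i\,\mathcal{V}_g(t)\star_\la U(t),\qquad \frac{dU(t)^*}{dt} = +i\,U(t)^*\star_\la \mathcal{V}_g(t).
\end{equation*}

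Setting $W(t):=U(t)^*\star_\la U(t)$, Leibniz and associativity of $\star_\la$ then give
\begin{equation*}
\frac{dW(t)}{dt} = iU(t)^*\star_\la\mathcal{V}_g(t)\star_\la U(t) - iU(t)^*\star_\la\mathcal{V}_g(t)\star_\la U(t) = 0,
\end{equation*}
while $W(-\infty) = 1$; hence $W(t)\equiv 1$, and letting $t\to +\infty$ gives $S(V_g)^*\star_\la S(V_g) = 1$. The companion identity $S(V_g)\star_\la S(V_g)^* = 1$ follows at no extra cost: since $S(V_g) = 1 + O(g)$ has invertible leading term in the formal power series algebra $(\cA[[g]],\star_\la)$, the identity just proved shows that $S(V_g)^*$ is the two-sided $\star_\la$-inverse of $S(V_g)$. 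The only genuinely technical step I foresee is the justification of the temporal factorization $\phi(x_1)^n\cdot_{T_\la}\cdots\cdot_{T_\la}\phi(x_k)^n = \phi(x_k)^n\star_\la\cdots\star_\la\phi(x_1)^n$ for $x_1^0<\dots<x_k^0$ (whose supports are single points on distinct Cauchy surfaces), and the corresponding termwise differentiation of the Dyson series with the sharp cutoff $\theta(t-x^0)$; one can either appeal directly to~\eqref{eq:causal-factorisation-T} with these sharp supports, or replace $\theta$ by a smooth cutoff and pass to the limit by dominated convergence, along the same lines as the end of the proof of Proposition~\ref{pr:temporal-factorisation}.
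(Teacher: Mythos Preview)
Your argument is correct and takes a genuinely different route from the paper. The paper proceeds combinatorially: it defines anti-chronological products $\bar T^k_\la$ by a recursion that forces $\tilde S(V_g)$ built from them to be the two-sided $\star_\la$-inverse of $S(V_g)$, and then uses the temporal factorization property (Proposition~\ref{pr:temporal-factorisation}) to show that the $\bar T^k_\la$ factorize anti-chronologically, whence $\bar T^k_\la(x_1,\dots,x_k)=[\phi(x_1)^n\cdot_{T_\la}\cdots\cdot_{T_\la}\phi(x_k)^n]^*$ and $S(V_g)^{-1}=S(V_g)^*$. You instead rewrite $S(V_g)$ as a Dyson series via the same temporal factorization, differentiate in the upper time limit to get the Schr\"odinger equation $\dot U=-i\mathcal V_g\star_\la U$, and conclude unitarity from $\frac{d}{dt}(U^*\star_\la U)=0$.

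Both proofs rest on the same structural fact, namely that $\Delta_{F,\la}(x)=\Delta_{+,\la}(x)$ for $x^0\geq 0$ (with equality at $x^0=0$ by~\eqref{eq:Delta0}), so that the $\cdot_{T_\la}$-product of Wick powers at time-ordered points coincides with their $\star_\la$-product. The paper's route has the advantage of producing an explicit formula for $S(V_g)^{-1}$ in terms of anti-time-ordered products, which is used later, and it stays entirely within $\cA$ without introducing sharp time cutoffs. Your route is the textbook quantum-mechanical argument and is arguably more transparent, but the objects $V_g^{<t}$ and $\mathcal V_g(s)$ are not literally elements of $\cA$ as defined in the paper (the test functions $g\,\theta(t-x^0)$ and $g(s,\cdot)\delta(x^0-s)$ are not in $C_0$); as you note, one must either argue directly at the level of integrands using the boundedness and continuity of $\Delta_{F,\la}$ and $\Delta_{+,\la}$, or regularize the cutoffs and pass to the limit as in the end of the proof of Proposition~\ref{pr:temporal-factorisation}. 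This is a routine technicality, not a gap.
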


\begin{proof}
Given points $x_1,\dots,x_k \in M$ and using the notations $X := \{x_1,\dots,x_k\}$, $T^k_\la(X) := \phi(x_1)^n \cdot_{T_\la}\dots \cdot_{T_\la} \phi(x_k)^n$, we define recursively functionals
\begin{equation}\label{eq:antiT}
\bar T^k_\la(X) := - \sum_{\substack{Y \subset X\\|Y| > 0}} (-1)^{|Y|}T_\la^{|Y|}(Y) \star_\la \bar T_\la^{|X\setminus Y|}(X\setminus Y).
\end{equation}
and $T^0_\la = 1$. One sees immediately, by induction, that $\bar T^k_\la$ is symmetric in its arguments, so it is actually only a function of the unordered points $X$. If we define then, as a formal series with coefficients in $\cA$,
\[
\tilde S(V_g) := \sum_{k=0}^{+\infty} \frac{i^k}{k!} \int_{M^k} dx_1\dots dx_k\,g(x_1)\dots g(x_k) \bar T^k_\la(x_1,\dots,x_k),
\]
we compute, using~\eqref{eq:antiT}, that $S(V_g)\star_\la \tilde S(V_g) = 1$. Moreover, from~\eqref{eq:antiT} one verifies, again by induction, that
\[
\bar T^k_\la(X) := - \sum_{\substack{Y \subset X\\|Y| < k}} (-1)^{|X\setminus Y|}\bar T_\la^{|Y|}(Y) \star_\la T_\la^{|X\setminus Y|}(X\setminus Y),
\]
and this in turn implies $\tilde S(V_g) \star_\la S(V_g) =1$, i.e., $\tilde S(V_g) = S(V_g)^{-1}$. If now the coupling function $g \in C^\infty_0(M)$ is of the form $g = g_1 + g_2$ with $\supp \,g_2 \gtrsim \supp\, g_1$, from Proposition~\ref{pr:temporal-factorisation} we get $S(V_g)^{-1} = S(V_{g_1})^{-1}\star_\la S(V_{g_2})^{-1}$ so that, identifying the corresponding orders,
\[\begin{split}
& \frac{i^k}{k!} \int_{M^k} dx_1\dots dx_k (g_1(x_1)+g_2(x_1))\dots(g_1(x_k)+g_2(x_k)) \bar T^k_\la(x_1,\dots, x_k)\\
&= \sum_{j=0}^k\frac{i^k}{j! (k-j)!} \int_{M^j} dx_1\dots dx_j \int_{M^{k-j}} dx_{j+1}\dots dx_k\, g_1(x_1)\dots g_1(x_j)g_2(x_{j+1})\dots g_2(x_k) \times\\
&\qquad\qquad\qquad\qquad\qquad\times \bar T^j_\la(x_1,\dots,x_j)\star_\la\bar T^{k-j}_\la(x_{j+1},\dots,x_k).
\end{split}\]
Grouping now, in the left hand side, the terms in $\prod_{h=1}^k (g_1(x_h)+g_2(x_h))$ according to the number $j$ of $g_1$'s, changing variables and taking into account the symmetry of $\bar T^k_\la$, we conclude that the $\bar T^k_\la$ are anti-chronological products, namely
\[
\bar T^k_\la(x_1,\dots,x_k) = \bar  T^j_\la(x_1, \dots, x_j)\star_\la \bar T^{k-j}_\la(x_{j+1},\dots,x_k)
\]
if $\{x_{j+1},\dots,x_k\} \gtrsim \{x_1,\dots,x_j\}$. Therefore, if in particular $x^0_{j_1} \leq x^0_{j_2} \leq \dots \leq x^0_{j_n}$,
\[\begin{split}
\bar T^n_\la(x_1,\dots,x_n) &= \bar T^1_\la(x_{j_1}) \star_\la \dots \star_\la\bar T^n_\la(x_{j_n}) = \phi(x_{j_1})^n \star_\la \dots \star_\la\phi(x_{j_n})^n \\
&= [\phi(x_1)^n \cdot_{T_{\la}} \dots \cdot_{T_{\la}}\phi(x_n)^n]^*,
\end{split}\]
which entails $S(V_g)^{-1} = S(V_g)^*$.
\end{proof}

Summarizing, we see that in this approach the $S$ matrix, with a fixed infrared cutoff, is unitary and automatically ultraviolet finite, without the need of renormalization, as a consequence of the boundedness of the modified Feynman propagator $\Delta_{F,\la}$. It is worth observing that the proof of ultraviolet finiteness is here much easier than the corresponding one in~\cite{BDFP03}. It is also interesting to compare~\eqref{eq:modFeyn} with the propagator obtained by defining QFT on QST through the Filk rules \cite{Filk}:
\[
\frac{i e^{-\la^2 \langle p\rangle^2}}{p^2+m^2-i\epsilon}
\]
which results in a nonunitary $S$ matrix.

We also notice that the nice properties satisfied by $\Delta_{+,\lambda}$ and $\Delta_{F,\lambda}$ stated in Proposition \ref{pr:prop} and \ref{pr:prop1} imply that no divergences occur
in terms like
\[
(V_g \star_{\lambda} V_g)  \cdot_{T_\lambda} V_g
\] 
and thus the Bogoliubov map~\eqref{eq:Bog} can be applied to every element of $\mathcal{A}$ and not only to time ordered products of local functionals. 

Furthermore, in this case, the Bogoliubov map can also be inverted in the sense of perturbation theory. 
Actually, since $S(V_g)\cdot_{T_\lambda} S{(-V_g)} = 1$, for $A, B \in \cA$ it holds that
\[
 R_{V_g}(A) = S(V_g)^{-1}\star_\lambda (S(V_g) \cdot_{T_\lambda} A) = B \quad \Leftrightarrow \quad   A =  S({-V_g}) \cdot_{T_\lambda} (S(V_g) \star B) = R_{V_g}^{-1}(B).
\]
Contrary to the ordinary case, we have thus access to the interacting product in $\mathcal{A}$ defined as 
\[
F_1\star_{\lambda, V_g} F_2 = R_{V_g}^{-1}\left(R_{V_g}(F_1)\star_\lambda R_{V_g}(F_2)\right), \qquad F_i\in\mathcal{A}
\]
In this way we obtain that the Bogoliubov map is a $*-$automorphisms of algebras
\[
R_{V_g}:(\mathcal{A},\star_\lambda) \to (\mathcal{A},\star_{\lambda,V_g}).
\]
This implies in particular that results of~\cite{HR} can be applied.

We put on record here the following result on the structure of the perturbative expansion of the Bogoliubov map, which we will use later on.
\begin{proposition}\label{prop:bogoexpansion}
Let $A \in \cA$ be of the form~\eqref{eq:A}. Then the $k$-th perturbative order of $R_{V_g}(A)$ is a finite linear combination of terms of the form
\begin{multline}\label{eq:bogok}
\int_{M^{k+\ell}} dx_1\dots dx_k dy_1\dots dy_\ell \prod_{j=1}^k g(x_j) f(y_1,\dots,y_\ell) \times \\ 
\prod_{l \in E(G)} \Delta_l(z_{r(l)}-z_{s(l)}) \Phi(x_1,\dots,x_k,y_1,\dots,y_\ell),
\end{multline}
where: $G$ is a graph with vertices $V(G)$ satisfying $\{1,\dots,k\} \subset V(G) \subset \{1,\dots,k+\ell\}$ and such that each of its connected components contains a vertex in the set $\{k+1,\dots,k+\ell\}$; $\Delta_l$ can be either $\Delta_{F,\la}$ or $\Delta_{+,\la}$; $(z_1,\dots,z_{k+\ell}) = (x_1,\dots,y_\ell)$; $r, s : E(G) \to V(G)$ are the range and source maps of the graph $G$; and $\Phi$ is a monomial in the fields $\phi(x_1),\dots ,\phi(y_\ell)$.
\end{proposition}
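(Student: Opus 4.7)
The plan is to compute $R_{V_g}(A) = S(V_g)^{-1}\star_\la (A\cdot_{T_\la} S(V_g))$ by expanding each factor into its perturbative series in $g$, applying the bilinear differential formulae for $\cdot_{T_\la}=m\circ e^{\Gamma_{i\Delta_{F,\la}}}$ and $\star_\la = m\circ e^{\Gamma_{\Delta_{+,\la}}}$ at every stage (Wick's theorem), and finally grouping all pairings into a single graph on the vertex set $\{1,\dots,k+\ell\}$. Since $A$ is a finite linear combination of multilocal monomials of the form~\eqref{eq:A}, it suffices to treat one such monomial $\int f(y_1,\dots,y_\ell)\phi(y_1)^{n_1}\cdots\phi(y_\ell)^{n_\ell}\,dy$, labelling its vertices by $k+1,\dots,k+\ell$.

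First I expand $A\cdot_{T_\la} S(V_g) = \sum_m \frac{(-i)^m}{m!} A\cdot_{T_\la} V_g^{\cdot_{T_\la} m}$. By iterated application of $\cdot_{T_\la}$ and the definition~\eqref{eq:V} of $V_g$, the $m$-th order term is a finite linear combination of expressions of the form~\eqref{eq:bogok} with $m$ interaction vertices, all internal edges carrying $\Delta_{F,\la}$, and the surviving half-edges forming the field monomial $\Phi$. Next, using the explicit formula~\eqref{eq:antiT}, I expand $S(V_g)^{-1}=\sum_p\frac{i^p}{p!}\bar T^p_\la[V_g]$: each $\bar T^p_\la$ is a polynomial expression in $\cdot_{T_\la}$ and $\star_\la$ applied to $V_g$'s, and another application of Wick's theorem produces a sum over graphs on the interaction vertices whose edges carry propagators that are either $\Delta_{F,\la}$ (from $\cdot_{T_\la}$) or $\Delta_{+,\la}$ (from $\star_\la$). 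The outer $\star_\la$ that couples $S(V_g)^{-1}$ to $A\cdot_{T_\la} S(V_g)$ then adds further $\Delta_{+,\la}$ edges that cross between the two blocks. Collecting the $k=p+m$ interaction vertices, relabelling them $1,\dots,k$ and using the boundedness and smoothness of both propagators (Propositions~\ref{pr:prop} and~\ref{pr:prop1}) to guarantee convergence of the spatial integrals against the compactly supported $g$ and $f$, one obtains at each order $k$ a finite linear combination of terms of the stated form~\eqref{eq:bogok}.

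The one remaining point, and the one where the argument requires actual work, is the connectedness condition that every connected component of $G$ must contain a vertex in $\{k+1,\dots,k+\ell\}$, i.e.\ the cancellation of vacuum subgraphs. Given any graph $G$ produced above, write $G=G_1\sqcup G_0$ where $G_1$ is the union of the connected components that meet the $A$-vertices and $G_0$ the remaining ``vacuum'' components, composed entirely of interaction vertices. The integrand, the product of coupling functions $g(x_j)$, and the combinatorial factors all factorize along this decomposition. Summing over all possible $G_0$ at a fixed total order, with the combinatorial weights inherited from $S(V_g)^{-1}$ and $S(V_g)$, reproduces precisely the corresponding order in the product $S(V_g)^{-1}\star_\la S(V_g)=1$, which vanishes except at zeroth order. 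Hence only graphs with $G_0=\emptyset$ survive, giving the condition in the statement.

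The main obstacle is the combinatorial bookkeeping in this last step: one must check that the Wick contractions arising from the antichronological products $\bar T^p_\la$, with their mixed $\Delta_{F,\la}$ and $\Delta_{+,\la}$ edges, combine with the contractions coming from $S(V_g)$ and from the outer $\star_\la$ in such a way that the vacuum blocks really do reassemble into the identity $S(V_g)^{-1}\star_\la S(V_g)=1$. Once this is established, the rest of the argument is purely a matter of relabelling vertices and grouping terms. It is worth noting that, contrary to the local case, no renormalization is needed at any stage, because $\Delta_{F,\la}$ and $\Delta_{+,\la}$ are continuous (indeed smooth outside the origin) bounded functions, so that the finite linear combinations appearing in~\eqref{eq:bogok} are well-defined elements of $\cA$ without any extension procedure.
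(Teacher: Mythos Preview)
Your strategy is the classical linked-cluster route: expand $S(V_g)^{-1}\star_\la(A\cdot_{T_\la}S(V_g))$ fully in Wick contractions, split each graph into the part attached to the $A$-vertices and a vacuum remainder $G_0$, and then argue that the sum over all $G_0$ reproduces $S(V_g)^{-1}\star_\la S(V_g)=1$. This is a genuinely different route from the paper's proof, and it is correct in principle; however, the step you yourself flag as ``the main obstacle'' is the entire content of the argument, and in your write-up it remains an assertion. Concretely, you would have to check that when a graph splits as $G_1\sqcup G_0$, the weight factorizes: the $\bar T^p_\la$ block must split as a product of $\bar T$-pieces on the $G_0$- and $G_1$-vertices (which uses symmetry of $\bar T^p_\la$ and the fact that disconnected Wick contractions factor for exponential products), the $V_g^{\cdot_{T_\la}m}$ block likewise, and the binomial factors from distributing $p$ and $m$ vertices between $G_0$ and $G_1$ must combine with the $\tfrac{i^p(-i)^m}{p!m!}$ to give the product of the two coefficients. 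All of this works, but none of it is written down. A secondary point: your expansion of $\bar T^p_\la$ naturally produces anti-Feynman edges, which must then be rewritten via $\bar\Delta_{F,\la}(x)=\Delta_{+,\la}(x)+\Delta_{+,\la}(-x)-\Delta_{F,\la}(x)$ to match the stated form.

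The paper sidesteps all of this with a short recursive identity. One writes
\[
R_{V_g}(A_p)=A_p+R_{V_g}(A_{p+1}),\qquad A_{p+1}=\sum_{r\ge1}\big\langle W_r,\,(\Delta_{F,\la}^{\otimes r}-\Delta_{+,\la}^{\otimes r})\,A_p^{(r)}\big\rangle,
\]
where $W_r$ is the $\cdot_{T_\la}$-polynomial in $V_g',\dots,V_g^{(r)}$ determined by $S(V_g)^{(r)}=S(V_g)\cdot_{T_\la}W_r$. Each step raises the lowest perturbative order by one, and, crucially, every new interaction vertex in $A_{p+1}$ is introduced through an explicit contraction with a functional derivative of $A_p$, so the connectedness condition is built in by construction: there are no vacuum pieces to cancel. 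This also explains why only $\Delta_{F,\la}$ and $\Delta_{+,\la}$ appear. The recursion terminates order by order and gives the statement directly, with essentially no combinatorics.
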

\begin{proof}
Assume that $A_p$ is a formal power series with lowest order $p$, whose coefficients can be written as finite linear combinations of terms of the form~\eqref{eq:bogok} (with $k \geq p$). Then
\begin{align*}
R_{V_g}(A_p) &= S(V_g)^{-1}\star_\la (S(V_g)\cdot_{T_\la} A_p )\\
&=  S(V_g)^{-1}\star_\la S(V_g) \star_\la A_p + 
S(V_g)^{-1}\star_\la \left( S(V_g) \cdot_{T_\la} A_p  - S(V_g) \star_\la A_k \right)\\
&=A_p+R_{V_g}(A_{p+1}),
\end{align*}
where $A_{p+1} := \sum_{r= 1}^{+\infty}S(-V_g)\cdot_{T_\la} \langle S(V_g)^{(r)}, (\Delta_{F,\la}^{\otimes r}-\Delta_{+,\la}^{\otimes r})  A_p^{(r)}\rangle$. Then since $S(V_g)' = S(V_g)\cdot_{T_\la} V'_g$, one has that for $r \geq 1$, $S(V_g)^{(r)} = S(V_g) \cdot_{T_\la} W_r$ where $W_r$ is a $\cdot_{T_\la}$-polynomial in $V'_g,\dots, V^{(r)}_g$ of order at least one. Hence $A_{p+1} = \sum_{r= 1}^{+\infty}\langle W_r, (\Delta_{F,\la}^{\otimes r}-\Delta_{+,\la}^{\otimes r})  A_p^{(r)}\rangle$ is again a formal power series with lowest order $p+1$, whose coefficients can be written as finite linear combinations of terms of the form~\eqref{eq:bogok} with $k \geq p+1$. Applying now the above decomposition recursively starting from $A_0 := A$, one gets $R_{V_g}(A) = \sum_{p=0}^{+\infty} A_p$ where the sum is finite order by order, and thus we obtain the claim.
\end{proof}

\section{Adiabatic limits}\label{sec:adiabatic}

In this Section we analyze the adiabatic limits of the theory discussed above.
We shall here consider an interaction Lagrangian $V_g \in \cA_{\text{loc}}$ of the form~\eqref{eq:V} and we want to study the limit where $g\to1$ of expectation values of observables in a state.
In particular, we restrict our attention to observables supported in 
\[
\Sigma_{0,\epsilon} := \{x\in M| t(x) \in (-\epsilon,\epsilon)  \}.
\]
Then on the algebra $\cA(\Sigma_{0,\epsilon})$ the temporal factorization property stated in Proposition \ref{pr:temporal-factorisation} implies that we might restrict the cutoff of the interaction. To make this precise, we introduce the relative $S$ matrix
\[
S_{V_g}(A) := S(V_g)^{-1}\star_\la S(V_g+A), \qquad A \in \cA.
\]
We observe now that

\begin{proposition}\label{pr:cutoff-unitari}
Given $A\in\cA(\Sigma_{0,\epsilon})$ and two compactly supported smooth functions $g,g'$, we have that 
if $\supp\,(g-g') \subset  J^{+}(\Sigma_{\epsilon})$ then
\begin{equation}\label{eq:prima-factorisation}
S_{V_g}(A) = S_{V_{g'}}(A), 
\end{equation}
and if $\supp\,(g-g') \subset  J^{-}(\Sigma_{-\epsilon})$  then
\begin{equation}\label{eq:seconda-factorisation}
S_{V_g}(A) = W_{g,g'}^{-1}\star_\lambda   S_{V_{g'}}(A) \star_\lambda  W_{g,g'}, 
\end{equation}
where $W_{g,g'} = S(V_{g'})^{-1}\star_\lambda S({V_g})$ is (formally) a unitary element of $\cA$.
\end{proposition}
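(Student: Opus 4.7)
The plan is to derive both identities directly from the temporal factorization property of the $S$ matrix (Proposition~\ref{pr:temporal-factorisation}), by decomposing the coupling as $g = g' + h$ with $h := g - g'$ and exploiting the hypothesis on $\supp h$ to place $V_h := \int g^{(s)}_h \phi^s\,dx$ either in the future or in the past of $A$ relative to a suitable Cauchy surface.

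For~\eqref{eq:prima-factorisation}, I would argue as follows. Under the hypothesis $\supp h \subset J^+(\Sigma_\epsilon)$ we have $\supp V_h \subset J^+\Sigma_\epsilon$, while $\supp A \subset \Sigma_{0,\epsilon} \subset J^-\Sigma_\epsilon$, so that $V_h \gtrsim A$ in the sense of Section~\ref{sec:perturbative-construction}. Since $V_{g'} \in \cA_{\text{loc}}$, Proposition~\ref{pr:temporal-factorisation} applied with the triple $(V_h, V_{g'}, A)$ gives
\[
S(V_g + A) = S(V_h + V_{g'} + A) = S(V_h + V_{g'}) \star_\la S(V_{g'})^{-1} \star_\la S(V_{g'} + A).
\]
Since $V_h + V_{g'} = V_g$, multiplying on the left by $S(V_g)^{-1}$ and telescoping yields $S_{V_g}(A) = S(V_{g'})^{-1}\star_\la S(V_{g'}+A) = S_{V_{g'}}(A)$, which is~\eqref{eq:prima-factorisation}.

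For~\eqref{eq:seconda-factorisation}, the same decomposition with $\supp h \subset J^-(\Sigma_{-\epsilon})$ gives $\supp V_h \subset J^-\Sigma_{-\epsilon}$ and $\supp A \subset \Sigma_{0,\epsilon} \subset J^+\Sigma_{-\epsilon}$, i.e.\ now $A \gtrsim V_h$. Applying Proposition~\ref{pr:temporal-factorisation} to the triple $(A, V_{g'}, V_h)$ produces
\[
S(V_g + A) = S(V_{g'} + A) \star_\la S(V_{g'})^{-1} \star_\la S(V_g).
\]
Multiplying on the left by $S(V_g)^{-1}$ and inserting the identity $S(V_{g'})\star_\la S(V_{g'})^{-1}$ just after the first factor, I obtain
\[
S_{V_g}(A) = \bigl[S(V_g)^{-1}\star_\la S(V_{g'})\bigr] \star_\la \bigl[S(V_{g'})^{-1}\star_\la S(V_{g'}+A)\bigr] \star_\la \bigl[S(V_{g'})^{-1}\star_\la S(V_g)\bigr],
\]
which is exactly $W_{g,g'}^{-1} \star_\la S_{V_{g'}}(A) \star_\la W_{g,g'}$, since $W_{g,g'} = S(V_{g'})^{-1}\star_\la S(V_g)$ and hence $W_{g,g'}^{-1} = S(V_g)^{-1}\star_\la S(V_{g'})$. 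Formal unitarity of $W_{g,g'}$ then follows from the unitarity of $S(V_g)$ and $S(V_{g'})$ established in the previous proposition, since $W_{g,g'}^{\,*} = S(V_g)^{*}\star_\la(S(V_{g'})^{-1})^{*} = S(V_g)^{-1}\star_\la S(V_{g'}) = W_{g,g'}^{-1}$.

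I do not foresee any substantive obstacle: all the work is done by Proposition~\ref{pr:temporal-factorisation}, the only subtle point being to correctly identify which of $V_h$ or $A$ plays the role of the "future" argument in each case, which is dictated by whether $\supp h$ lies in $J^+\Sigma_\epsilon$ or in $J^-\Sigma_{-\epsilon}$; the buffer given by the width $2\epsilon$ of the time slice $\Sigma_{0,\epsilon}$ is exactly what guarantees the strict separation of supports needed to invoke the temporal factorization.
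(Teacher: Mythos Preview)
Your proof is correct and follows essentially the same route as the paper: decompose $V_g = V_{g'} + V_{g-g'}$, use the support hypothesis to get $V_{g-g'}\gtrsim A$ (resp.\ $A\gtrsim V_{g-g'}$), and invoke the temporal factorization property with middle term $V_{g'}\in\cA_{\text{loc}}$. The paper's proof is slightly terser (it writes ``This implies \eqref{eq:seconda-factorisation}'' without spelling out the insertion of $S(V_{g'})\star_\la S(V_{g'})^{-1}$ or the unitarity of $W_{g,g'}$), but the argument is the same.
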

\begin{proof}
The interaction Lagrangian $V_g$ is linear in $g$:
\[
V_g = V_{g'} + V_{g-g'}.
\]
Then, if $\supp\,(g-g') \subset  J^{+}(\Sigma_{\epsilon})$ it holds that $V_{g-g'} \gtrsim A$. 
In view of Proposition \ref{pr:temporal-factorisation} we have that
\[
S(V_g+A) =  S(V_{g-g'} + V_{g'} +A) =
S(V_{g-g'} + V_{g'})\star_\lambda
S(V_{g'})^{-1}\star_\lambda
S(V_{g'} +A)
\]
This implies \eqref{eq:prima-factorisation}. 
If now $\supp\,(g-g') \subset  J^{-}(\Sigma_{-\epsilon})$ it holds that $A  \gtrsim V_{g-g'}$ hence Proposition \ref{pr:temporal-factorisation} implies that 
\[
S(V_g+A) =  S(A + V_{g'} +V_{g-g'}) =
S(V_{g'} +A)\star_\lambda
S(V_{g'})^{-1}\star_\lambda
S(V_{g'} +V_{g-g'}) = S(V_{g'} +A) \star_\lambda  W_{g,g'} 
\]
This implies \eqref{eq:seconda-factorisation} thus concluding the proof.
\end{proof}

This proposition allows us to specialize the form of the cutoff $g$ as $g(t,\mathbf{x}) =\chi(t) h(\mathbf{x})$ where $\chi \in C^\infty_0(\bR)$ is a time cutoff and $h \in C^\infty_0(\Sigma_0)$ a spatial cutoff. With this choice, the interaction potential will be denoted by
\[
V_{\chi,h}= \int_{M} \chi(t)h(\mathbf{x})\phi^n(x) dx.
\]
Actually, there holds clearly, as formal power series,
\[
R_{V_g}(A) = i\left.\frac{d}{ds} S_{V_g}(s A) \right|_{s=0},
\]
and therefore, in the procedure of taking the adiabatic limit for elements of the form $R_{V_{\chi,h}}(A)$ with  $A\in \mathcal{A}(\Sigma_{0,\epsilon})$, Proposition  \ref{pr:cutoff-unitari} implies that the form of $\chi$ in $(\epsilon,+\infty)$ is irrelevant. Furthermore, if we modify $\chi$ in $(-\infty,-\epsilon)$ we obtain the same object up to the adjoint action of a unitary element  that will not affect the existence of the adiabatic limit (see the remark following Corollary~\ref{cor:adiabatic}). 
For this reason we shall fix $\chi$ to be a smooth function equal to $1$ on $(-\epsilon,\epsilon)$ and supported on 
$(-2\epsilon,2\epsilon)$ once and for all and we shall care only about the limit where $h\to 1$. This choice will be useful to control the convergence of the adiabatic limit up to unitary equivalence.

We now introduce the free vacuum state  $\omega_\lambda$ on $(\mathcal{A},\star_\lambda)$, given by evaluation at $\phi = 0$, namely
\[
\omega_\lambda(F) := F(0).
\]
We want to show that  the evaluation of $\omega_{\lambda}\circ R_{V_{\chi,h}}$ on   $\mathcal{A}(O)$, where $O \subset \Sigma_{0,\epsilon}$ is a bounded region, converges in the limit $h\to 1$. 

We have actually the following theorem.

\begin{theorem}\label{thm:adiabatic}
Let $A$ be an element of $\mathcal{A}(O)$ where $O\subset \Sigma_{0,\epsilon}$ is a bounded open set. 
Denoting by $R_{V_{\chi,h}}(A)^{[k]}$ the $k-$th order in the coupling constant of the formal power series defining $R_{V_{\chi,h}}(A)$,
the limit
\[
\omega_\lambda^{[k]}(A) := \lim_{h\to 1} \omega_\lambda\left(R_{V_{\chi,h}}(A)^{[k]}\right)
\]
is finite for all $k \in \bN$.  
\end{theorem}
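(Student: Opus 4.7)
My plan is to unfold $R_{V_{\chi,h}}(A)^{[k]}$ using Proposition~\ref{prop:bogoexpansion}, then evaluate on the free vacuum, and finally estimate the remaining spatial integrals via a spanning-tree argument combined with the decay properties of the modified propagators from Appendix~\ref{sec:bounds}.

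First I would write $R_{V_{\chi,h}}(A)^{[k]}$ as a finite linear combination of terms of the form~\eqref{eq:bogok} with $g(x_j)=\chi(x_j^0)h(\mathbf{x}_j)$. Since $\omega_\lambda(F)=F(0)$, only the contributions with $\Phi\equiv 1$ survive: these are the vacuum graphs in which every field leg has been contracted into a propagator. By Proposition~\ref{prop:bogoexpansion} each connected component of such a graph $G$ still contains at least one observable vertex $y_i$, and this connectivity is what drives the estimate.

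Next I would bound each such term uniformly in $h$. We may assume that we take the adiabatic limit along a net with $0\le h\le 1$ and $h\to 1$ pointwise; then $|g(x_j)|\le \|\chi\|_\infty$ and $x_j^0\in\supp\,\chi$, a fixed compact set. The coefficient $f$ of $A$ has compact support in $O^\ell$, so all $y_i$ are confined to a fixed bounded region of $M$. The only potential source of divergence as $h\to 1$ is therefore the spatial integration over $\mathbf{x}_1,\dots,\mathbf{x}_k$. To control it, pick for each connected component of $G$ a spanning tree rooted at one of its observable vertices and integrate the spatial variables of the internal vertices in leaf-to-root order. Each such integration is absorbed in an integral of the form
\[
\int_{\bR^3} d\mathbf{z}\, \bigl|\Delta_{\cdot,\lambda}(t,\mathbf{z})\bigr|,
\]
with $t$ varying in a compact set; this is finite and locally bounded in $t$ by the spatial decay of $\Delta_{F,\lambda}$ and $\Delta_{+,\lambda}$ established in Appendix~\ref{sec:bounds}. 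The propagators corresponding to non-tree edges of $G$ are bounded in $L^\infty$ by Propositions~\ref{pr:prop} and~\ref{pr:prop1}, and so contribute only a finite multiplicative constant.

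Putting these bounds together produces an $h$-independent integrable dominating function. Dominated convergence then allows passage to $h\to 1$ inside the integral, giving a finite value that defines $\omega_\lambda^{[k]}(A)$. The main obstacle I foresee is purely bookkeeping: one must verify that the spanning-tree procedure really covers every spatial integration variable and that the $L^1$ spatial decay of the modified propagators holds uniformly on compact sets of time differences, as stated in Appendix~\ref{sec:bounds}. If the decay established there happens to be only pointwise for some propagators, one can compensate by distributing the decay contributed by several adjacent propagators among the nearby integration variables, but the overall strategy is unaffected.
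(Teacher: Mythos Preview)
Your proposal is correct and follows essentially the same route as the paper: expand via Proposition~\ref{prop:bogoexpansion}, evaluate at $\phi=0$ so that only fully contracted graphs survive, exploit the connectivity property (each component touches an observable vertex $y_j$) to pick a spanning forest rooted at the $y$-vertices, bound non-tree propagators by their $L^\infty$ norms, and integrate the interaction vertices leaf-to-root using the uniform spatial decay of $\Delta_{F,\lambda}$ and $\Delta_{+,\lambda}$ on compact time intervals (Proposition~\ref{pr:clustering}), concluding by dominated convergence. The paper makes the decay explicit as $|D(t,\mathbf{x})|\le C e^{-m|\mathbf{x}|}$ for $t$ in a compact set, so your worry about merely pointwise bounds does not arise, and the leaf-pruning integrals reduce to $\int_{\bR^3} e^{-m|\mathbf{x}|}\,d\mathbf{x}<\infty$.
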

\begin{proof}
Since $R_{V_{\chi,h}}(A)$ is linear in $A$, without loosing generality we assume that
\[
A(\phi) = \int_{M^\ell} dy_1\dots dy_\ell f(y_1,\dots ,y_\ell)\phi^{n_1}(y_1)\dots \phi^{n_\ell}(y_\ell), 
\]
where $f$ is a compactly supported continuous function whose support is in $O^\ell\subset \Sigma_{0,\epsilon}^\ell$.
By Prop.~\ref{prop:bogoexpansion} we can now expand 
\[\begin{split}
\omega_\lambda\left(R_{V_{\chi,h}}(A)^{[k]}\right) 
= \int_{M^{k+\ell}} d{x}_1\dots d{{x}_k} dy_1\dots dy_\ell\,&\chi(x^0_1)h(\mathbf{x}_1) \dots \chi(x^0_k)h(\mathbf{x}_k)\times\\
&D_\la({x}_1,\dots, {x}_k,{y}_1,\dots,{y}_\ell)  f({y}_1,\dots,{y}_\ell)
\end{split}\]
where $D_\la$ is a combination of propagators $\Delta_{F,\lambda}$,  $\Delta_{+,\lambda}$
which is actually a continuous function as discussed in Proposition \ref{pr:prop} and  Proposition \ref{pr:prop1}, and where the integral extends over a compact set, because of the support properties of $\chi$, $h$, $f$.
Moreover $D_\la$ can be written in the standard way as a finite sum of contributions labelled by graphs with $k + \ell$ vertices labelled by the points in the collection 
\[
{Z} := ({z}_1,\dots,{z}_{k+\ell})
=   ({x}_1,\dots, {x}_k,{y}_1,\dots, {y}_\ell) =: ({X},{Y}) \in 
\Sigma_{0,2\epsilon}^{k} \times O^\ell\subset M^{k+\ell},
\]
each pair of vertices being joined by a line labelled by either $\Delta_{F,\lambda}$ or $\Delta_{+,\lambda}$, and only graphs such that each of their connected components contains at least one vertex $y_j$ appear. 
We denote by $\cG$ the set of such graphs.
Consider now the contribution $D_{\la,G}({X},{Y})$ to $D_\la({X},{Y})$ labelled by $G\in\mathcal{G}$. 
In order to establish the required statement, it is sufficient, by the dominated convergence theorem and the uniform boundedness of $\chi$, $h$, $f$, to prove that $D_{\la,G}({X},{Y})$ is absolutely integrable on $\Sigma_{0,2\epsilon}^{k}\times O^\ell$.

To this end, we first notice that we just need to care about the integration over the space components $(\mathbf{X},\mathbf{Y})\in \Sigma_0^{k+\ell}$ of $(X,Y)$, because the integration in the time components is restricted to the compact set $[-2\epsilon,2\epsilon]^k\times[-\epsilon,\epsilon]^\ell$.  Then, thanks to the estimates given in Proposition \ref{pr:clustering} in Appendix~\ref{sec:bounds}
we have that
\[
|D_{\la,G}({X},{Y})| \leq \gamma_G(\mathbf{X},\mathbf{Y})
\]
uniformly in the time components, where  
\[
\gamma_G(\mathbf{X},\mathbf{Y}) = \prod_{l\in E(G)} C^{|E(G)|} e^{-m |\mathbf{z}_{s(l)}-\mathbf{z}_{r(l)}|}.
\]
with $E(G)$ denoting the set of edges in $G$ and, for every $l\in E(G)$, with $s(l)$ and $r(l)$ denoting its source and range.  In order to show that $\gamma_G$ is integrable we start by observing that since $G$ is a union of connected components, each of which contains at least one vertex labelled by some $y_j$, it
possesses at least one subgraph $G'$ which is a disjoint union of rooted trees, with roots in some of the points $Y$ and connecting all the elements of $\mathbf{X}$, and then
\[
\gamma_G \leq  C^{e} \gamma_{G'},
\]
where $e = |E(G)|-|E(G')|$.
To prove that $\gamma_{G'}$ is integrable we use the following recursive procedure. 
Let $L_x$, $L_y$ be the set of leaves of $G'$ labelled by elements of $\mathbf{X}$, $\mathbf{Y}$ respectively, and let $\Lambda \subset \Sigma_0$ be a compact set containing the projection of $O \subset M$. Since
\[
\int_{\bR^3} e^{-m|\mathbf{x}-\mathbf{z}|} d\mathbf{x} =\frac{8\pi}{m^3} \qquad \forall \,\mathbf{z}\in \bR^3,
\] 
we can estimate
\[
\int_{\bR^{3|L_x|}}\prod_{\mathbf{x}_j \in L_x} d\mathbf{x}_j \int_{\Lambda^{|L_y|}}\prod_{\mathbf{y}_h \in L_y}d\mathbf{y}_h\,\gamma_{G'}(\mathbf{X},\mathbf{Y}) \leq C^{|L_x|+|L_y|}\left(\frac{8\pi}{m^3}\right)^{|L_x|}|\Lambda|^{|L_y|} \gamma_{G''}(\mathbf{X}'',\mathbf{Y}''),  
\]
where $G''$ is the union of rooted trees obtained by pruning the leafs of $G'$, and $(\mathbf{X}'',\mathbf{Y}'')$ the labels of its vertices.
We can now iterate the same procedure until the graph is reduced to the set of roots of $G'$, which are integrated on $\Lambda$.
\end{proof}

The state obtained so far after the limit $h\to1$ depends on $\chi$, hence, it is in particular not invariant under time translation and thus cannot be the vacuum of the interacting theory. To address this problem, we start observing that states constructed by different $\chi$ can obtained up to formal unitary equivalence. Actually,
the result of Theorem~\ref{thm:adiabatic} can be easily extended to prove the existence of the adiabatic limit for suitable families of states on $(\cA,\star_\la)$ indexed by the spatial cutoff function $h$. To make this precise, we introduce the states
\begin{equation}\label{eq:omegaB}
\omega_\la^B(A) := \frac{\omega_\la(B^*\star_\la A \star_\la B)}{\omega_\la(B^*\star_\la B)}, \qquad A \in \cA,
\end{equation}
for $B \in \cA$. Moreover, for a function $f : M^k \to \bC$ we define its temporal support as the union of the supports of the functions $t \in \bR \mapsto f(t,\mathbf{x}_1,\dots,t,\mathbf{x}_k)$ as $(\mathbf{x}_1,\dots,\mathbf{x}_k) \in \Sigma_0^k$.
\begin{corollary}\label{cor:adiabatic}
Assume that $B_h \in \cA$ is a finite sum of functionals of the form
\[
\int_{M^s} dw_1\dots dw_s\, b(w_1,\dots,w_s) h(\mathbf{w}_1)\dots h(\mathbf{w}_s)\phi(w_1)^{r_1}\dots\phi(w_s)^{r_s}
\]
with $b$ a bounded continuous function on $M$ constructed with propagators $\Delta_{+\lambda}$ or $\Delta_{F\lambda}$ in such a way that they are related with the edges of a connected graph with $s$ vertices, with compact temporal support, and assume that
\[
\supp\, h \subset \supp\, h' \quad \Rightarrow \quad \omega_\la(B_h^* \star_\la B_h) \leq \omega_\la(B_{h'}^*\star_\la B_{h'}).  
\] 
Then for any $A \in \cA(O)$ with $O \subset \Sigma_{0,\epsilon}$ a bounded open set, the limit
\[
\lim_{h \to 1} \omega_\la^{B_h}(R_{V_{\chi,h}}(A)^{[k]})
\]
is finite for all $k \in \bN$.
\end{corollary}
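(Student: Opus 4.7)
The plan is to adapt the proof of Theorem~\ref{thm:adiabatic} to handle the normalization $Z_h := \omega_\la(B_h^*\star_\la B_h)$. The monotonicity hypothesis guarantees that, once $\supp h$ contains a fixed base region $\supp h_0$ with $Z_{h_0} > 0$, we have $Z_h \geq Z_{h_0} > 0$ uniformly in $h$, so the ratio defining $\omega_\la^{B_h}(R_{V_{\chi,h}}(A)^{[k]})$ has a non-vanishing denominator throughout the limit.

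The first step is to expand $R_{V_{\chi,h}}(A)^{[k]}$ via Proposition~\ref{prop:bogoexpansion} as a finite linear combination of terms of the form~\eqref{eq:bogok}, whose graphs have each connected component rooted at some $A$-vertex $y_j \in O$. Plugging this into $N_h := \omega_\la(B_h^*\star_\la R_{V_{\chi,h}}(A)^{[k]}\star_\la B_h)$ and applying Wick's theorem to the remaining $\star_\la$-products and to the Gaussian evaluation $\omega_\la(\cdot) = (\cdot)|_{\phi=0}$ (which contracts via $\Delta_{+,\la}$), one obtains $N_h$ as a finite sum over larger ``super-graphs'' whose additional vertices come from a chosen summand of $B_h^*$ and a chosen summand of $B_h$, with edges labeled by $\Delta_{+,\la}$ or $\Delta_{F,\la}$.

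The second step is a linked-cluster-style rearrangement. Thanks to the connected-graph assumption on $b$, the $s$ vertices of each summand of $B_h^*$ or $B_h$ enter a super-graph as a single connected block, which is either entirely absorbed into an $A$-connected component (one containing some $y_j$) or forms part of a pure vacuum subgraph. Since the contribution of a super-graph factorizes over its connected components, I would rearrange the sum defining $N_h$ as a sum over $A$-connected pieces $\mathcal{G}_A$ (specifying which $B_h/B_h^*$ blocks are absorbed) times the corresponding sum over vacuum completions. The ``no absorption'' contribution yields $\omega_\la(R_{V_{\chi,h}}(A)^{[k]})\cdot Z_h$, so after dividing by $Z_h$ it reproduces the quantity of Theorem~\ref{thm:adiabatic}, which is known to converge. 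For every subleading $\mathcal{G}_A$, each of its connected components still contains an $A$-vertex in the compact set $O$, hence admits a spanning rooted forest; the tree-pruning argument of Theorem~\ref{thm:adiabatic} (using the clustering bounds of Proposition~\ref{pr:clustering} and iteration of $\int_{\bR^3}e^{-m|\mathbf{x}-\mathbf{z}|}d\mathbf{x} = 8\pi/m^3$) then makes the spatial integrations over interaction and absorbed $B_h/B_h^*$ vertices uniformly bounded. Temporal integrations are confined to compact sets by the supports of $\chi$, $b$, and $f$, and the ratio of each remaining vacuum completion to $Z_h$ is bounded by the monotonicity hypothesis, so dominated convergence yields a finite limit for every subleading term.

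The main technical obstacle is the combinatorial bookkeeping of the cluster decomposition in the second step, namely the careful matching of the vacuum parts of the numerator expansion with those of the denominator so that the normalization cancels and only convergent $A$-anchored sums survive. The connected-block assumption on $b$ is essential here, as it prevents a single $B_h^\alpha$ summand from being fragmented across multiple components of a super-graph, which would otherwise obstruct the rearrangement and force a more delicate cluster expansion.
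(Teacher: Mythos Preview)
Your approach is essentially the same as the paper's --- split off the piece $Z_h\cdot\omega_\la(R_{V_{\chi,h}}(A)^{[k]})$, show the remainder is absolutely integrable by the rooted--forest pruning of Theorem~\ref{thm:adiabatic}, and invoke monotonicity of $Z_h$ --- but the paper reaches the goal with less combinatorics. The paper's split is binary: either the $B_h^*$-- and $B_h$--vertices are contracted only among themselves (this is exactly the factor $Z_h\cdot\omega_\la(R_{V_{\chi,h}}(A)^{[k]})$), or at least one Wick line runs from a $B$-block into the $R_{V_{\chi,h}}(A)^{[k]}$-vertices. Because each $B$-summand is internally connected (via the propagators in $b$) and because an isolated block carrying any field power evaluates to zero under $\omega_\la(\cdot)=(\cdot)|_{\phi=0}$, the nonvanishing ``connected'' contributions automatically have every $W,W'$-vertex sitting in a component rooted at some $y_j\in O$; the pruning bound then applies verbatim. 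Finally the paper uses monotonicity only to conclude that $\lim_{h\to1}Z_h$ exists in $(0,+\infty]$, so that (finite connected limit)$/Z_h$ is finite.

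Two remarks on your version. First, the ``linked--cluster'' bookkeeping you identify as the main obstacle is not needed: the intermediate cases where one $B$-block is absorbed and the other is left in a ``vacuum completion'' contribute zero (there are no $\star_\la$-contractions internal to a single normal-ordered factor), so no careful matching of partial vacuum pieces against $Z_h$ is required. Second, your claim that ``the ratio of each remaining vacuum completion to $Z_h$ is bounded by the monotonicity hypothesis'' is not actually supported by that hypothesis --- monotonicity controls $Z_h$ itself, not partial Wick sums --- but since those terms vanish this does not affect the outcome. If you drop the full cluster decomposition and keep the binary disconnected/connected split, your argument coincides with the paper's.
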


\begin{proof}
Considering the diagrammatic expansion, similar to the one discussed in the proof of Theorem \ref{thm:adiabatic} we have that $\omega_\la(B_h^*\star_\la R_{V_{\chi,h}}(A)^{[k]}\star_\la B_h)$ is the sum of a ``disconnected'' term $\omega_\la(B_h^* \star_\la B_h) \omega_\la(R_{V_{\chi,h}}(A)^{[k]})$, coming from the graphs in which the vertices pertaining to $B$ and $B^*$ are only connected among themselves, and a ``connected'' term, which is a sum of integrals like
\[
\int_{M^{k+\ell+s+s'}} dXdYdWdW'\,\overline{b'(W')}b(W)f(Y)\prod_{j=1}^k \chi(x_j^0)h(\mathbf{x}_j) \prod_{i=1}^s h(\mathbf{w}_i) \prod_{p=1}^{s'} h(\mathbf{w}'_p)\tilde D_\la(X,Y,W,W'),
\]
where $W = (w_1,\dots,w_s)$, $W'=(w'_1,\dots,w'_{s'})$ and where $\tilde D_\la$ is a sum of connected graphs with vertices labelled by $(X,Y,W,W')$ and lines labelled by either $\Delta_{F,\la}$ or $\Delta_{+,\lambda}$. It is then clear that the limit $h \to 1$ of the latter integral exists by the same recursive argument used in the proof of Theorem~\ref{thm:adiabatic}. One then gets the desired statement taking into account the existence in $(0,+\infty]$ of $\lim_{h \to 1} \omega_\la(B_h^* \star_\la B_h)$ by monotonicity.
\end{proof}

The previous result entails in particular that if we change that temporal cutoff function $\chi$ to a $\chi' \in C^\infty_0(\bR)$ such that $\supp\, \chi' \subset (-\infty, 2\epsilon)$ and $\supp(\chi-\chi') \subset (-\infty, -\epsilon)$, for $A \in \cA(O)$ with $O \subset \Sigma_{0,\epsilon}$ bounded there exists
\[
\lim_{h \to 1} \omega_\la(R_{V_{\chi',h}}(A)) = \lim_{h \to 1} \omega_\la^{W_{g',g}}(R_{V_{\chi,h}}(A)),
\]
with $W_{g',g}$ the unitary element defined in Proposition~\ref{pr:cutoff-unitari} for $g(t,\mathbf{x}) = \chi(t)h(\mathbf{x})$ and $g'(t,\mathbf{x}) = \chi'(t)h(\mathbf{x})$, which clearly satisfies the hypotheses of the Corollary.
\medskip

As discussed above, the state obtained considering the limit $h\to1$ in Theorem \ref{thm:adiabatic} is defined on $\mathcal{A}(\Sigma_{0,\epsilon})$
 depends on $\chi$ and in particular it is not invariant under time translations.
For this reason it cannot be the ground state of the theory.
We discuss how to modify it in order to get a vacuum of the interacting theory. 
We shall in particular translate back in time the region where interaction is switched on.
In the limit where the interaction starts at past infinity we obtain the vacuum of the theory which is a state defined on $\mathcal{A}(M)$ and it is invariant under time translation.
Unfortunately, the direct analysis of this limit is not available. Nevertheless, it turns out to be easier to compute it in states which are invariant under the interacting time evolution. For this reason, we shall analyze this limit for interacting equilibrium states at finite temperature and we will then compute the limit where the inverse temperature $\beta$ tends to infinity.

The construction of equilibrium states at finite temperature for interacting fields in the adiabatic limit has been recently proposed by Fredenhagen and Lindner \cite{FredenhagenLindner}.
The main idea of the latter paper is to adapt to the framework of pAQFT the construction, by Araki, of KMS states of a C*-dynamical system in which the dynamics is obtained by perturbing the generator of a given reference (free) dynamics by an element of the C*-algebra (the interaction hamiltonian). In this way, Fredenhagen and Lindner obtain KMS states for the perturbative interacting theory in the presence of an adiabatic cutoff of the form $g(t,\mathbf{x}) = \chi(t)h(\mathbf{x})$, where $\chi$ is unity on a fixed time slice. These are then shown to be independent of the form of $\chi$ 
thanks to the causal factorization property of the $S$ matrix and to the KMS condition. Finally the existence of the adiabatic limit $h \to 1$ can be reduced to suitable clustering properties of the free KMS state.

In order to adapt the above ideas to the present framework, we start by observing that in view of the temporal factorization property stated in Proposition \ref{pr:temporal-factorisation} and thanks to Proposition \ref{pr:cutoff-unitari}, as soon as the interacting field $A$ is supported in $\mathcal{O}\subset \Sigma_{0,\epsilon} $ we can modify $\chi$ in the future of $\Sigma_\epsilon$ without altering the expectation value of $R_{V_{\chi,h}}(A)$. Hence, we assume now to have $\chi=1$ in the interval $[-\epsilon, T]$ for some large $T > \epsilon$ and $\chi = 0$ outside $(-2\epsilon, T+\epsilon)$.
Notice that, in this way, $R_{V_{\chi,h}}(A)$ is actually independent of $T$ for $A \in \cA(\Sigma_{0,\epsilon})$, so we may think that, morally, $\chi = 1$ in the future of $\Sigma_{-\epsilon}$ and $\chi=0$ in the past of $\Sigma_{-2\epsilon}$, even if for such a $\chi$ the interaction $V_{\chi, h} = \int_M dx \chi(t)h(\mathbf{x}) \phi(x)^n$ is not a well defined element of $\cA$ because the integral may not converge for some $\phi \in \cC$. In particular, $R_{V_{\chi,h}}(A) \in \cA(\Sigma_{0,2\epsilon})$ for $A \in \cA(\Sigma_{0,\epsilon})$. In the following, when it does not cause confusion, we will often use the simplified notation $V$ for $V_{\chi,h}$.

To obtain a time invariant state we have to perform a time translation to minus infinity of the cutoff function $\chi$. 
To make this point precise, we have to discuss the form of the free and interacting time translation. 
Let $A \in \cA$ by any field observable, we write $A_t[\phi]=A[\phi_t]$ where $\phi_t(x) =\phi(x+t e_0)$.
The free time evolution, given by
\[
\alpha_t(A) := A_t, \qquad t \in \bR,
\]
defines a group of $*-$automorphisms of the algebra of observables thanks to the translation invariance of $\Delta_{+,\la}$.
On the other hand, the interacting time evolution $\alpha_t^V$ is defined by
\[
\alpha_t^V(R_V(A)) := R_V(A_t), \qquad t \in \bR.
\]
The relationship between the free and interacting time evolution can be expressed as follows.

\begin{proposition}
There exists unitary elements $U(t) \in \cA$, $t \in \bR$, such that
\[
\alpha_t^V(R_V(A)) =  U(t)\star_\la \alpha_t(R_V(A))\star_\la U(t)^{*}, \qquad t \in \bR.
\]
Moreover, $t \mapsto U(t)$ is a cocycle under the free time evolution:
\[
U(t+s) = U(t)\star_\la\alpha_t U(s), \qquad t, s \in \bR
\]
\end{proposition}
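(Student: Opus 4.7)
The plan is to define
\[
U(t) := S(V)^{-1} \star_\la S(V_t), \qquad V_t := \alpha_t V,
\]
and verify the three required properties. Translation invariance of $\Delta_{+,\la}$ and $\Delta_{F,\la}$ implies that $\alpha_t$ is a $*$-homomorphism with respect to both $\star_\la$ and $\cdot_{T_\la}$, hence $\alpha_t(S(V)) = S(V_t)$ and consequently $\alpha_t(R_V(A)) = R_{V_t}(A_t)$ for every $A \in \cA$. Combined with the definition $\alpha_t^V(R_V(A)) = R_V(A_t)$, the intertwining relation to be proved is therefore equivalent to
\[
R_V(A_t) = U(t) \star_\la R_{V_t}(A_t) \star_\la U(t)^{*}, \qquad A \in \cA(\Sigma_{0,\epsilon}).
\]
Unitarity of $U(t)$ is immediate since $S(V)$ is unitary by the preceding proposition and $S(V_t) = \alpha_t(S(V))$ is therefore unitary as well. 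The cocycle identity is a one-line computation exploiting $\alpha_t(S(V)) = S(V_t)$ and $\alpha_t(S(V_s)) = S(V_{t+s})$:
\[
U(t) \star_\la \alpha_t U(s) = S(V)^{-1} \star_\la S(V_t) \star_\la S(V_t)^{-1} \star_\la S(V_{t+s}) = U(t+s).
\]

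To establish the intertwining I would first handle the case $t \leq 0$ using Proposition~\ref{pr:cutoff-unitari}. Recall that $\chi$ equals $1$ on $[-\epsilon,T]$ and is supported in $(-2\epsilon,T+\epsilon)$, that $\chi_t = \chi(\cdot - t)$, and that $R_V(A_t)$ is independent of $T$ for $A \in \cA(\Sigma_{0,\epsilon})$, so that we may take $T$ arbitrarily large; this pushes the far-future discrepancy between $\chi$ and $\chi_t$ beyond any fixed region. For $t \leq 0$ the remaining discrepancy $\supp(\chi-\chi_t)$ is contained in $(-\infty,-\epsilon]$, and since $-\epsilon \leq -t-\epsilon$, it lies in $J^-(\Sigma_{-t-\epsilon})$, i.e.\ in the causal past of the time slab containing $A_t$. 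The time-translated version of Proposition~\ref{pr:cutoff-unitari} applied to the pair $g = \chi h$, $g' = \chi_t h$ then yields
\[
S_V(A_t) = W^{-1} \star_\la S_{V_t}(A_t) \star_\la W, \qquad W = S(V_t)^{-1}\star_\la S(V) = U(t)^{*},
\]
and differentiating in the strength of $A_t$ via $R_V(A_t) = i\left.\frac{d}{ds}S_V(sA_t)\right|_{s=0}$ produces the desired intertwining for every $t \leq 0$.

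For $t > 0$ the support hypothesis of Proposition~\ref{pr:cutoff-unitari} fails, and this asymmetry is the main obstacle. I would bypass it by exploiting the cocycle structure already in place. Define $\tilde\alpha_t := \mathrm{Ad}(U(t))\circ\alpha_t$. A standard computation using the cocycle identity shows that $\tilde\alpha_t \circ \tilde\alpha_s = \tilde\alpha_{t+s}$, so $t \mapsto \tilde\alpha_t$ is a one-parameter group of $*$-automorphisms of $(\cA,\star_\la)$; by the previous step it agrees with the group $\alpha_t^V$ on the half-line $(-\infty,0]$. Since $\alpha_t^V$ is itself a group (as $A \mapsto A_t$ is), one concludes for $t>0$
\[
\alpha_t^V = (\alpha_{-t}^V)^{-1} = \tilde\alpha_{-t}^{-1} = \tilde\alpha_t,
\]
extending the intertwining relation to every $t \in \bR$. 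The delicate step is precisely this matching of the past-oriented Proposition~\ref{pr:cutoff-unitari} with the two-sided dynamics, where the group law combined with the explicit cocycle $U(t) = S(V)^{-1}\star_\la S(V_t)$ does the work.
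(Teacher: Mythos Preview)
Your overall architecture is close to the paper's, and the cocycle and unitarity verifications are fine, but there are two genuine problems in the intertwining argument.

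\medskip
\textbf{1. The sign is wrong.} With the paper's conventions $A_t[\phi]=A[\phi(\cdot+te_0)]$ one has $\supp A_t\subset\Sigma_{t,\epsilon}$, so the slab containing $A_t$ is centred at $t$, not at $-t$. For $t\leq 0$ the past part of $\supp(\chi-\chi_t)$ lies in $[t-2\epsilon,-\epsilon]$, whose right endpoint $-\epsilon$ is strictly \emph{larger} than $t-\epsilon$; hence it is \emph{not} contained in $J^-(\Sigma_{t-\epsilon})$, and the second alternative of Proposition~\ref{pr:cutoff-unitari} does not apply. Your inequality ``$-\epsilon\leq -t-\epsilon$'' refers to the wrong slab. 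The direct argument via Proposition~\ref{pr:cutoff-unitari} actually works for $t\geq 0$ (past discrepancy in $(-2\epsilon,t-\epsilon]\subset J^-(\Sigma_{t-\epsilon})$), after which your group-law extension handles $t<0$. So the strategy survives once the two half-lines are swapped.

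\medskip
\textbf{2. The formula $U(t)=S(V)^{-1}\star_\la S(V_t)$ is not justified.} Even for $t\geq 0$ and finite $T$, $\chi-\chi_t$ has a future component (near $[T,T+t+\epsilon]$). You propose to remove it by first invoking the future-insensitivity part of Proposition~\ref{pr:cutoff-unitari}, but that step replaces $g=\chi h$ by an intermediate $g''$ agreeing with $\chi_t h$ in the future and with $\chi h$ in the past; the subsequent past step then produces the conjugating unitary $W=S(V_{g'})^{-1}\star_\la S(V_{g''})$, i.e.\ $U(t)=S(V_{g''})^{-1}\star_\la S(V_t)$, which is \emph{not} $S(V)^{-1}\star_\la S(V_t)$. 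Your phrase ``take $T$ arbitrarily large'' does not cure this: the future component and your $U(t)$ both move with $T$, and what you would actually need is $\chi=1$ on all of $[-\epsilon,\infty)$, for which $V$ is no longer an element of $\cA$. This is exactly why the paper performs an explicit past/future split $V-V_t=V_t^++V_t^-$ and obtains $U(t)=S(V)^{-1}\star_\la S(V-V_t^-)$; only \emph{after} proving the intertwining with this well-defined object does the paper remark that one may think of it as $S(V)^{-1}\star_\la S(V_t)$ in the formal $T\to\infty$ limit. Your group-law reduction is a nice idea, but the base case still requires the paper's decomposition (or an equivalent argument) to pin down the correct intertwiner.
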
 

\begin{proof}
Since
 $V$ is not invariant under time translations because of the cutoff $\chi$, we have that  
\begin{equation*}
\alpha_t^V(R_V(A)) = R_V(A_t) 
=  i\left.\frac{d}{ds} S_V(sA_t) \right|_{s=0} 
\end{equation*}
and choosing now $T > t-\epsilon$ so large that $A \in \cA((-\epsilon,T-t)\times \Sigma_0)$, we can write $V - V_t = V_t^+ + V_t^-$ with $V_t^\pm$ defined by the temporal cutoffs $s \mapsto [\chi(s) - \chi(s-t)]\theta(\pm s)$; this implies $\supp V^+_t \gtrsim \supp A_t \gtrsim \supp V_t^-$ and therefore by the temporal factorization property given in Proposition \ref{pr:temporal-factorisation},
\begin{align*}
S_V(sA_t) &=S(V)^{-1}\star_\la S(V+sA_t) =  S(V)^{-1}\star_\la S(V_t+sA_t+V-V_t) \\
&=S(V)^{-1}\star_\la S(V_t+sA_t+V^+_t+V^-_t)\\
&=S(V)^{-1} \star_\la S(V_t^++V_t) \star_\la S(V_t)^{-1}\star_\la S(sA_t+V_t^-+V_t)\\
&=S(V)^{-1} \star_\la S(V_t^++V_t) \star_\la S(V_t)^{-1}\star_\la S(V_t+sA_t)\star_\la S(V_t)^{-1}\star_\la S(V_t^-+V_t)\\
&=S(V)^{-1} \star_\la S(V_t^++V_t) \star_\la \alpha_t(S_V(sA))\star_\la S(V_t)^{-1}\star_\la S(V_t^-+V_t)
\end{align*}
where in the fourth equality we used the fact that $sA_t+V^-_t$ is supported in the past of $V^+_t$, in the fifth equality the fact that $V^-_t$ is supported in the past of $A_t$ and in the last equality the translation invariance of the Feynman propagator $\Delta_{F,\la}$. Moreover we observe that, again by temporal factorization,
\[\begin{split}
S(V-V^-_t)\star_\la S(V_t)^{-1}\star_\la S(V^-_t+V_t) &= S(V_t+V^+_t)\star_\la S(V_t)^{-1}\star_\la S(V^-_t+V_t) \\
&= S(V^+_t+V_t+V^-_t) = S(V)
\end{split}\]
and therefore
\[
S_V(sA_t) = S(V)^{-1}\star_\la  S(V-V^-_t) \star_\la \alpha_t(S_V(sA))\star_\la [S(V)^{-1}\star_\la  S(V-V^-_t)]^{-1}.
\]
Now taking the derivative with respect to $s$ at $s=0$ and using  the fact that $S(V)$ is unitary for any local interaction Lagrangian $V$, we obtain the statement with $U(t)= S(V)^*\star_\la S(V-V^-_t)$.
\end{proof}

Moreover, it is also easy to see, again by temporal factorization, that $U(t)$ defined in the previous proposition is independent of $T > t-\epsilon$. Since $V_t^+ \to 0$ as $T \to +\infty$, this means that we may also think of $U(t)$ as $S(V)^{-1}\star_\la S(V_t)$, with a temporal cutoff such that $\chi = 1$ in the future of $\Sigma_{-\epsilon}$. 

To construct the vacuum of the interacting theory, we keep for now $h$ of compact support and we consider the limit   
\begin{align*}
\lim_{t\to\infty} \omega_\la(\alpha_t^V R_{V}(A)) &=  \lim_{t\to\infty} \omega_\la(  U(t) \star   \alpha_t  R_{V}(A)\star U(t)^{*})
\\
&=  \lim_{t\to\infty} \omega_\la(  U(-t)^{*}\star   R_{V}(A)\star U(-t))
\\
&=  \lim_{t\to\infty} \omega_\la^{U(-t)}(R_{V}(A))
\end{align*}
where we have used the invariance under the free time translations of $\omega_\la$ and the definition \eqref{eq:omegaB}.

In order to show that the above limit actually exists, as a first step we observe that the following expansion of $\omega_\la^{U(t)}$ holds
\begin{equation}\label{eq:expansion-commutators}
\omega_\la^{U(-t)}(A)=\sum_{n\geq 0} i^n\int_{t S_n} dT \,\omega_\la \left([K_{-t_1}[K_{-t_2} \dots[K_{-t_n},A]\dots ]]\right)
\end{equation}
where $T=(t_1,\dots, t_n)$ and the domain of integration $tS_n$ is such that $0<t_n<\dots <t_1<t$. 
Furthermore, $K$ is the generator of the cocycle $U(t)$ and $K_t=\alpha_t(K)$. As shown in \cite{FredenhagenLindner} in a similar context, we have that 
\[
K=-i\left.\frac{d}{dt}U(t) \right|_{t=0},\qquad K = R_V{\dot{V}},  
\]
where $\dot{V}:=V_{\dot{\chi}^-,h}$ with $\dot \chi^-(s) := \dot \chi(s) \theta(-s)$. Hence, $\dot{V}$ is supported in the past of $\Sigma_{0,\epsilon}$ because $\dot{\chi}^-$ is supported in the interval $(-2\epsilon,-\epsilon)$; also, $K$ is supported in $\Sigma_{0,2\epsilon}$. 

A further essential ingredient are the connected correlation functions of $\omega_\la$, defined recursively by
\[
\omega_\la^c(A_1\otimes \dots \otimes A_n) := \omega_\la(A_1\star_\la \dots \star_\la A_n) - \sum_{P} \prod_{I \in P} \omega_\la^c\big(\otimes_{i \in I} A_i\big), \qquad A_1,\dots,A_n\in \cA,
\]
where the sum runs over all partitions $P$ of the set $\{1,2,\dots,n\}$ into at least two nonvoid subsets. 
The analiticity properties of $\Delta_{+\lambda}$ stated in Proposition \ref{pr:clustering} imply that the function
$(t_1,\dots, t_n)\mapsto\omega_\la^c(\alpha_{t_1}A_1\otimes \dots \otimes \alpha_{t_n}A_{t_n})$ can be analitically continued to $\text{Im}(t_1) < \dots <\text{Im}(t_n)$. 

To control the limit $h\to 1$ it is useful to exploit the interplay of the expansion \eqref{eq:expansion-commutators} with a similar expansion which holds in the domain of imaginary times for interacting KMS states, see e.g. Theorem 2 in \cite{BKR} or Proposition 3 in \cite{FredenhagenLindner}.

To present this connection we observe that a KMS state on $\mathcal{A}$ at inverse temperature $\beta$ with respect to free time translations $\alpha_t$ can be defined by
\begin{equation}\label{eq:free-KMS-state}
\omega_{\la,\beta}(A) := \omega_\la( e^{\frac 1 2\int dx dy [\Delta_{\beta,\la}(x-y)-\Delta_{+,\la}(x-y) ]\frac{\delta^2}{\delta\phi(x)\delta\phi(y)}} A), \qquad A \in \cA,
\end{equation}
with the two-point function
\begin{equation}\label{eq:two-point-KMS}
\Delta_{\beta,\lambda}(t,\mathbf{x}) 
:= 
\frac{e^{-\lambda^2 m^2}}{(2\pi)^3} \int_{\mathbb{R}^3}  \frac{ e^{-2\lambda^2 |\mathbf{p}|^2} e^{i\mathbf{p}\mathbf{x}}}{2\sqrt{|\mathbf{p}|^2+m^2}}
\left(\frac{e^{-it \sqrt{|\mathbf{p}|^2+m^2}}}{1-e^{-\beta\sqrt{|\mathbf{p}|^2+m^2}}}  - \frac{e^{it \sqrt{|\mathbf{p}|^2+m^2}}}{1-e^{\beta\sqrt{|\mathbf{p}|^2+m^2}}}    \right)   d\mathbf{p}.
\end{equation}
Whenever $h$ is of compact support, a KMS state for the interacting theory is obtained modifying $\omega_{\la,\beta}$ in the following way. 
Following \cite{FredenhagenLindner}, we observe that the function $t\mapsto \omega_{\beta,\lambda}^{U(-t)}(A)$, for $A\in\mathcal{A}$, expanded as in \eqref{eq:expansion-commutators},
 can be extended to a bounded continuous function on $\text{Im}\,t\in [-\frac{\beta}{2},0]$ which is analytic for $\text{Im}\,t \in (-\frac{\beta}{2},0]$. 
 The functional over $\mathcal{A}$ obtained for $\text{Im}\,t=-\frac{\beta}{2}$ and denoted by $\omega_{\beta,\la}^{U(i\frac{\beta}{2})}$  is a KMS state at inverse temperature $\beta$ with respect to $\alpha_t^V$. 
Notice that this state can be formally written as
\begin{equation}\label{eq:formal-KMS-state}
\omega_{\beta,\la}^{U(i \frac{\beta}{2})}(A)  = \frac{ \omega_{\beta,\la} (U(i \frac{\beta}{2})^*\star_\la A\star_\la U(i \frac{\beta}{2}))}{
\omega_{\beta,\la} (U(i \frac{\beta}{2})^*\star_\la  U(i \frac{\beta}{2}))}
\end{equation}
and it is actually expanded as follows\footnote{In the paper \cite{FredenhagenLindner} and in particular in Proposition 4 the KMS condition is used to rewrite  
$\omega_{\beta,\la} (U(i \frac{\beta}{2})^*\star_\la A\star_\la U(i \frac{\beta}{2}) )= \omega_{\beta,\la} (A\star_\la U(i \beta) )$ where the equality holds for the expansions of the form \eqref{eq:expansion-beta} computed on both sides. Here we are considering $\omega_{\beta,\la}^{U(i \frac{\beta}{2})}$ because its expansion \eqref{eq:expansion-beta} gives a well defined state also in the limit $\beta \to \infty$. } 
\begin{multline}\label{eq:expansion-beta}
\omega_{\beta,\la}^{U(i \frac{\beta}{2})}(A)  = \\
\sum_{{n_1, n_2=0}}^{+\infty} (-1)^{n_1+n_2} \int_{\frac{\beta}{2} S_{n_1}} dU \int_{\frac{\beta}{2} S_{n_2}} dV  \omega_{\beta,\la}^c(K_{iu_{n_1}-i\frac{\beta}{2}}\otimes \dots \otimes K_{iu_1-i\frac{\beta}{2}} \otimes A\otimes K_{iv_{n_2}}\otimes \dots \otimes K_{iv_1}).
\end{multline}
The following proposition gives the connection of the limit $\beta\to\infty$ of $\omega_{\beta,\la}^{U(i \frac{\beta}{2})}$ with the limit $t\to\infty$ 
of $\omega_{\la}^{U(-t)}$.

\begin{proposition}\label{prop:limtlimbeta}
Whenever $h$ in $V_{\chi,h}$ is of compact support, it holds that
\begin{equation}\label{eq:ret-equilibrium}
\lim_{t\to\infty} \omega_\la^{U(-t)}(R_{V_{\chi,h}}(A)) 
=\lim_{\beta\to\infty}\omega_{\la,\beta}^{U(i\frac{\beta}{2})}(R_{V_{\chi,h}}(A)),
\end{equation}
meaning that both limits exist and coincide.
\end{proposition}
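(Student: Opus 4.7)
My plan is to prove the equality by expressing each side as a power series in the coupling whose terms are integrals of connected correlation functions $\omega_\la^c$ of the free vacuum $\omega_\la$, and then identifying the two representations via a contour rotation in the complex time plane.

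First I would unfold the nested commutators in \eqref{eq:expansion-commutators} via $\omega([B,C])=\omega(B\star_\la C)-\omega(C\star_\la B)$ and the combinatorial identity expressing $\omega_\la(B_1\star_\la\cdots\star_\la B_\ell)$ as a sum over set partitions of products of $\omega_\la^c$'s; at order $k$ this rewrites $\omega_\la^{U(-t)}(R_V(A)^{[k]})$ as a finite sum of integrals over simplicial regions of $(-t,0)^n$ of expressions of the form $\omega_\la^c(\alpha_{-t_{i_1}}K\otimes\cdots\otimes\alpha_{-t_{i_n}}K\otimes R_V(A)^{[k-n]}\otimes\cdots)$. The RHS is already in an analogous form by \eqref{eq:expansion-beta}, with $\omega_{\beta,\la}^c$ in place of $\omega_\la^c$ and with imaginary time arguments in $(-i\beta/2,0)$ and $(0,i\beta/2)$. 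Using the analyticity of $(t_1,\dots,t_n)\mapsto\omega_\la^c(\alpha_{t_1}A_1\otimes\dots\otimes\alpha_{t_n}A_n)$ in the tube $\text{Im}(t_1)<\dots<\text{Im}(t_n)$ (Proposition~\ref{pr:clustering}), together with the exponential decay of $\omega_\la^c$ under real-time translations provided by the propagator bounds in Appendix~\ref{sec:bounds} (which apply to $K=R_V\dot V$ thanks to the structure described in Proposition~\ref{prop:bogoexpansion}), I would rotate each real-time integration contour in the LHS from the negative real axis to the corresponding segment of the imaginary axis. The vertical closing arcs contribute nothing as $t\to+\infty$ thanks to the exponential decay, and the result coincides term by term with the $\beta\to\infty$ formal limit of \eqref{eq:expansion-beta}, since $\omega_{\beta,\la}^c\to\omega_\la^c$ pointwise as $\beta\to\infty$ (because $\Delta_{\beta,\la}\to\Delta_{+,\la}$ by the explicit formula \eqref{eq:two-point-KMS}).

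The main obstacle I anticipate is the joint justification of the contour rotation in the LHS and of the $\beta\to\infty$ limit inside the iterated imaginary-time integrals of the RHS, both of which require $\beta$-uniform integrable majorants. I would obtain these by observing that $(1-e^{-\beta\sqrt{|\mathbf p|^2+m^2}})^{-1}\leq(1-e^{-\beta_0 m})^{-1}$ for all $\beta\geq\beta_0>0$, so that the temporal and spatial decay estimates of Appendix~\ref{sec:bounds} extend to $\Delta_{\beta,\la}$ and to the connected correlators $\omega_{\beta,\la}^c$ with constants uniform in $\beta$ in a neighborhood of infinity. These uniform bounds then provide the dominating majorants needed both for the contour deformation in the LHS and for the exchange of $\lim_{\beta\to\infty}$ with the integrals on the RHS, completing the identification of the two limits.
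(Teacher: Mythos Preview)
Your approach is genuinely different from the paper's. The paper does not attempt a contour rotation at all; instead it argues by a double-limit interchange. First it shows that for each fixed $t>0$ one has $\lim_{\beta\to\infty}\omega_{\beta,\la}^{U(-t)}(A)=\omega_\la^{U(-t)}(A)$, and then it proves this convergence is \emph{uniform} in $t$ by expanding both states via \eqref{eq:expansion-commutators}, using the graph structure of Proposition~\ref{prop:bogoexpansion} and the real-time decay bounds of Proposition~\ref{pr:time-decay} to control the difference term by term. Separately, it invokes return to equilibrium (Proposition~\ref{pr:return-to-equilibrium}) to identify $\lim_{t\to\infty}\omega_{\beta,\la}^{U(-t)}=\omega_{\beta,\la}^{U(i\beta/2)}$ at finite $\beta$. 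The interchange of limits then gives the statement. What this buys is modularity: the passage from real to imaginary time is packaged once and for all in Appendix~\ref{sec:KMS}, and the present proof reduces to a uniformity estimate.

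Your direct route is plausible in spirit, but there is a factual slip and an unaddressed structural step. First, the decay of $\omega_\la^c$ under \emph{real-time} translations furnished by Appendix~\ref{sec:bounds} is not exponential but only polynomial, of order $(1+|t|)^{-3/2}$ (Proposition~\ref{pr:time-decay}); exponential decay holds only in the spatial and \emph{imaginary}-time directions (Proposition~\ref{pr:clustering}). Polynomial decay may still make the closing arcs vanish, but it needs a sharper argument than the one you sketch, especially since you are rotating several variables simultaneously over a simplex. Second, and more seriously, you do not explain how unfolding the nested commutators in \eqref{eq:expansion-commutators} and rotating each of the $2^n$ resulting terms lands exactly on the two-sided connected expansion \eqref{eq:expansion-beta}, in which $A$ is sandwiched between two independent simplicial blocks of $K$'s carrying imaginary times of opposite sign. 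Tracking this combinatorial and analytic matching is essentially tantamount to reproving return to equilibrium for the ground state, which is the nontrivial core of the argument; without it the proposal remains incomplete.
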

\begin{proof}
We start by observing that by Prop.~\ref{pr:time-decay} in Appendix~\ref{sec:bounds}, there
holds that, when evaluated on functionals  in $\cA$, $\lim_{\beta\to\infty}  \omega_{\la,\beta} = \omega_\la$, which entails, for $t > 0$, 
\begin{equation}\label{eq:ground}
\lim_{\beta\to\infty} \omega_{\beta, \la}^{U(-t)}(A)
= \omega_{\la}^{U(-t)}(A), \qquad A\in\mathcal{A}.
\end{equation}
We now prove that we can take the limit for $t \to \infty$ of the previous equation and that the order in which the limits $t\to\infty$ and $\beta\to \infty$ of $ \omega_{\beta, \la}^{U(-t)}(A)$ are taken does not influence the result. In order to do this we show that the above limit is uniform in $t > 0$ by comparing the expansions~\eqref{eq:expansion-commutators} for $\omega_{\beta, \la}^{U(-t)}(A)$ and $\omega_{\la}^{U(-t)}(A)$ and estimating the difference term by term using again the bounds on the decay of the propagators in the time direction given in Prop.~\ref{pr:time-decay}. To begin with, we observe that, arguing as in the first part of the proof of Proposition~\ref{pr:clustering-alfav}, and using the notations introduced there, the $n$-th order of the expansion~\eqref{eq:expansion-commutators} relative to $\omega^{U(-t)}_{\beta,\la}(A)$ can be written as a finite linear combination of terms of the form
\begin{equation}\label{eq:expG}
\int_{t S_n} dT \,M  \bigg(\prod_{l \in E(G)} \Gamma_{s(l),r(l)}\bigg)\left(\tilde{K}_{-t_1} \otimes \tilde{K}_{-t_{2}} \dots \otimes \tilde{K}_{-t_n}\otimes \tilde{A}\right)
\end{equation}
where $G$ is a connected graph with $n+1$ vertices in correspondence with $\{\tilde K_{-t_1}, \dots, \tilde K_{-t_n}, \tilde A\}$ and, as discussed in the proof of Proposition~\ref{pr:clustering-alfav} 
$\tilde{A} := e^{\frac{1}{2} \int dxdy  (\Delta_{\beta,\la}(x-y) - \Delta_{+,\la}(x-y)) \frac{\delta^2}{\delta{\phi(x)}\delta{\phi(y)}} }A$.
A similar expansion holds of course also for $\omega^{U(-t)}_\la(A)$, with $\tilde K_{-t_j}$, $\tilde A$ and $\Gamma$ replaced by $K_{-t_j}$, $A$ and
\[
\Gamma_{\Delta_{+,\la}}=\int_{M^2} dxdy\, \Delta_{+,\la}(x-y) \frac{\delta}{\delta \phi(x)}\otimes \frac{\delta}{\delta\phi(y)}
\]
respectively. One verifies easily that the map $A \mapsto \tilde A$ deforms the $\star_\la$ and $\cdot_{T_\la}$ products as
\begin{align*}
(A\star_\la B)\tilde{\,} &= m e^{\Gamma} (\tilde A \otimes \tilde B),\\
(A \cdot_{T_\la} B)\tilde{\,} &= m e^{\Gamma_{\Delta_{F,\la}+\Delta_{\beta,\la}-\Delta_{+,\la}}}(\tilde A \otimes \tilde B).
\end{align*}
Moreover,
\begin{multline*}
e^{\frac{1}{2} \int dxdy  (\Delta_{\beta,\la}(x-y) - \Delta_{+,\la}(x-y)) \frac{\delta^2}{\delta{\phi(x)}\delta{\phi(y)}} } \phi(x)^n \\
= \sum_{k\leq \frac n 2} \frac {n(n-1)\dots(n-2k+1)}{2^k k!} \phi(x)^{n-2k} (\Delta_{\beta,\la}-\Delta_{+,\la})(0)^k.
\end{multline*}
Thanks to these observations and to Prop.~\ref{prop:bogoexpansion}, one sees that at perturbative order $\nu_j$, each factor $\tilde K_{-t_j}$ appearing in~\eqref{eq:expG} can be expanded in a finite linear combination of terms of the form
\begin{equation}\label{eq:Kexp}
 m_{\beta,\la}^{\alpha_j}\int_{M^{\nu_j}} dX^{(j)}\, (\dot\chi^- h)(x^{(j)}_1) \prod_{k=2}^{\nu_j} (\chi h)(x^{(j)}_k) D_{\beta,j}(X^{(j)}) \Phi_{t_j}(X^{(j)}),
\end{equation}
with $m_{\beta,\la} := (\Delta_{\beta,\la}-\Delta_{+,\la})(0)$, $\alpha_j \in \bN$, $X^{(j)} = (x^{(j)}_1,\dots,x^{(j)}_{\nu_j}) \in M^{\nu_j}$, $D_{\beta,j}$ a product of factors of the form $(\Delta_{F,\la}+\Delta_{\beta,\la}-\Delta_{+,\la})(x^{(j)}_k-x^{(j)}_i)$ or $\Delta_{\beta,\la}(x^{(j)}_k-x^{(j)}_i)$, and $\Phi_{t_j}$ a monomial in the fields $\phi(x^{(j)}_1-t_j e_0), \dots, \phi(x^{(j)}_{\nu_j}-t_j e_0)$. Inserting these expansions into~\eqref{eq:expansion-commutators} and assuming, without loss of generality, that $A$ is as in Eq.~\eqref{eq:A},
\[
A(\phi) = \int_{M^\ell} dy_1\dots dy_\ell f(y_1,\dots ,y_\ell)\phi^{p_1}(y_1)\dots \phi^{p_\ell}(y_\ell), 
\]
with compact $\supp f$, we see that at perturbative order $\nu = \nu_1 + \dots + \nu_n$ the difference $\omega_{\beta,\la}^{U(-t)}(A) - \omega_{\la}^{U(-t)}(A)$ can be written as a finite linear combination of terms of the form
\begin{multline}\label{eq:expansion-differ1}
\int_{tS_n} dT \int_{M^{\nu+\ell}}dX dY\, \prod_{j=1}^n\left\{(\dot\chi^- h)(x^{(j)}_1) \prod_{k=2}^{\nu_j} (\chi h)(x^{(j)}_k)m_{\beta,\la}^{\alpha_j}D_{\beta,j}(X^{(j)}) \right\} f(Y) \times \\
\left[\prod_{l \in E(G)} \Delta_{\beta,\la}\big(z_{\rho(l)}-z_{\sigma(l)}-(t_{r(l)}-t_{s(l)})e_0\big)\right],
\end{multline}
with $\alpha_j > 0$, or of the from
\begin{multline}\label{eq:expansion-differ2}
\int_{tS_n} dT \int_{M^{\nu+\ell}}dX dY\, \prod_{j=1}^n\left\{(\dot\chi^- h)(x^{(j)}_1) \prod_{k=2}^{\nu_j} (\chi h)(x^{(j)}_k) \right\} f(Y) \times \\
\left[\prod_{j=1}^n D_{\beta,j}(X^{(j)})\prod_{l \in E(G)} \Delta_{\beta,\la}\big(z_{\rho(l)}-z_{\sigma(l)}-(t_{r(l)}-t_{s(l)})e_0\big)\right.\\
\left.-\prod_{j=1}^n D_{j}(X^{(j)})\prod_{l \in E(G)} \Delta_{+,\la}\big(z_{\rho(l)}-z_{\sigma(l)}-(t_{r(l)}-t_{s(l)})e_0\big)\right],
\end{multline}
where $X = (X^{(1)},\dots, X^{(n)}) \in M^\nu$, $Y = (y_1,\dots,y_\ell) \in M^\ell$, $Z = (z_1,\dots,z_{\nu+\ell}) = (X,Y) \in M^{\nu+\ell}$, $\sigma, \rho : E(G) \to \{1,\dots,\nu+\ell\}$ are such that $\sigma(l) \in \{\nu_1+\dots +\nu_{s(l)-1}+1,\dots,\nu_1+\dots+\nu_{s(l)}\}$,  $\rho(l) \in \{\nu_1+\dots +\nu_{r(l)-1}+1,\dots,\nu_1+\dots+\nu_{r(l)}\}$ with $\nu_0 := 0$, $\nu_{n+1} := \ell$, where $t_{n+1} := 0$, and where $D_j$ is obtained from $D_{\beta,j}$ by replacing $\Delta_{F,\la}+\Delta_{\beta,\la}-\Delta_{+,\la}$ with $\Delta_{F,\la}$ and $\Delta_{\beta,\la}$ with $\Delta_{+,\la}$. We now show that the expression~\eqref{eq:expansion-differ2} vanishes as $\beta \to +\infty$, uniformly in $t \in \bR$. To this end, we notice that, given $a > 2\epsilon$ and a compact set $\Lambda \subset \Sigma_0$ such that $\supp h \subset \Lambda$, $\supp f \subset [-a,a]\times \Lambda$,
by the estimates in Prop.~\ref{pr:time-decay}, the considered expression can be bounded by
\begin{multline*}
 C^{|E(G)|}\|f\|_\infty\int_{([-a,a]\times \Lambda)^{\nu+\ell}} dZ \left\{\left|\prod_{j=1}^n D_{\beta,j}(X^{(j)})-\prod_{j=1}^n D_{j}(X^{(j)})\right|+e^{-\beta m}\prod_{j=1}^n \left|D_{j}(X^{(j)})\right| \right\}\times\\
 \int_{tS_n} dT \prod_{l \in E(G)} \frac{1}{1+|z_{\rho(l)}^0-z_{\sigma(l)}^0-(t_{r(l)}-t_{s(l)})|^{3/2}}.
\end{multline*}
for some constant $C > 0$ depending only on $\Lambda$.
Now by an argument similar to the one in the second part of the proof of Proposition~\ref{pr:clustering-alfav}  the $T$ integral in the above expression can be bounded uniformly in $Z^0 = (z_1^0,\dots z_{\nu+\ell}^0) \in \bR^{\nu+\ell}$ and in $t > 0$ by a constant. Therefore, 
by the boundedness of $\Delta_{+,\la}$, $\Delta_{F,\la}$ (Prop.~\ref{pr:prop} and \ref{pr:prop1}), the above expression is estimated by a $(t,\beta)-$independent constant times
\[
\int_{([-a,a]\times\Lambda)^\nu}dX\,\left|\prod_{j=1}^n D_{\beta,j}(X^{(j)})-\prod_{j=1}^n D_{j}(X^{(j)})\right|+ e^{-\beta m}(2a|\Lambda|)^\nu\prod_{j=1}^n \|D_j\|_\infty\,, 
\]
which, taking into account that $D_{\beta,j}(X^{(j)}) \to D_j(X^{(j)})$ again by Prop.~\ref{pr:time-decay}, shows the stated vanishing of~\eqref{eq:expansion-differ2} as $\beta \to +\infty$ uniformly in $t$. The expression~\eqref{eq:expansion-differ1} can be treated similarly, exploiting the fact that $m_{\beta,\la}^{\alpha_j} \to 0$ for $\alpha_j > 0$. This finally shows that Eq.~\eqref{eq:ground} holds
uniformly in time. We conclude that 
\[
\lim_{t\to\infty} \lim_{\beta\to \infty }  \omega_{\beta, \la}^{U(-t)}(A) = \lim_{\beta\to \infty } \lim_{t\to\infty}  \omega_{\beta, \la}^{U(-t)}(A).
\]
Moreover, as shown in Proposition \ref{pr:return-to-equilibrium} return to equilibrium holds, hence
\[
\lim_{t \to \infty} \omega_{\beta,\la}^{U(-t)}(A)=\lim_{t\to\infty} \omega_{\beta,\la}(\alpha_t^V(A)) = \omega_{\beta,\la}^{U(i\frac{\beta}{2})}(A), \qquad A\in \mathcal{A}.
\]  
Hence, combining all these observations we have that the two limits in the statement exist and
\[\begin{split}
\lim_{t\to\infty} \omega_{\la}(\alpha_t^V(A)) &= \lim_{t\to\infty}  \omega_{\la}^{U(-t)}(A) =
\lim_{t\to\infty} \lim_{\beta\to \infty }  \omega_{\beta, \la}^{U(-t)}(A) \\
&= 
\lim_{\beta\to \infty } \lim_{t\to\infty}  \omega_{\beta, \la}^{U(-t)}(A) =
\lim_{\beta\to \infty }   \omega_{\beta, \la}^{U(i\frac{\beta}{2})}(A) 
\end{split}\]
and we have the thesis.
\end{proof}

We recall that the right hand side of \eqref{eq:ret-equilibrium} can now be expanded as in \eqref{eq:expansion-beta}. As discussed in the proof of the previous proposition, at finite perturbative order, formula~\eqref{eq:expansion-beta} is well posed thanks to the analyticity properties of $\omega_{\la,\beta}^c$. Furthermore, the limit $\beta\to\infty$ can be taken in view of the exponential decaying for imaginary times of the correlation function discussed in Proposition \ref{pr:clustering}.
Notice that equation \eqref{eq:ret-equilibrium} expresses the return to equilibrium property for the equilibrium state at vanishing temperature, namely for the vacuum state. The return to equilibrium property for the equilibrium states has been discussed in \cite{DFP}, see also Proposition \ref{pr:return-to-equilibrium} for a proof in the case of noncommutative spacetime. 
Now we observe that the limit $h\to1$ of the right hand side of \eqref{eq:expansion-beta} can be taken.
It is furthermore important to stress that the limit $h\to 1$ of \eqref{eq:expansion-beta} gives a state which is $\chi$ independent.

We have actually the following theorem which establishes the existence of a ground state in 
 the adiabatic limit 
\begin{theorem}
Consider the functional $\tilde{\omega}_\la$ on the algebra $(\cA(O), \star_{\la,V})$ of interacting fields supported in $\mathcal{O}\subset \Sigma_{0,\epsilon}$ defined as
\begin{equation}\label{eq:omegatilde}
\tilde{\omega}_\la(A):= \lim_{h\to 1, \beta\to \infty}\omega_{\la,\beta}^{U(i\frac{\beta}{2})}(R_{V_{\chi,h}}(A)).
\end{equation}
We have that $\tilde{\omega}_\la$ is finite at all perturbative orders, it does not depend on the order in which the limits are taken,
it is positive and normalized in the sense of perturbation theory and it is invariant under interacting spacetime translations.
\end{theorem}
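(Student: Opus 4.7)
The plan is to establish the four claims --- finiteness of $\tilde\omega_\la$ at each perturbative order, independence of the order of the two limits, positivity and normalization, and invariance under the interacting spacetime translations --- by combining the estimates and algebraic properties developed throughout Sections~\ref{sec:perturbative-construction} and~\ref{sec:adiabatic}. First I would fix a perturbative order and expand $\omega_{\la,\beta}^{U(i\beta/2)}(R_{V_{\chi,h}}(A))$ using \eqref{eq:expansion-beta} together with Proposition~\ref{prop:bogoexpansion}, obtaining a finite linear combination of integrals over imaginary-time simplices $\tfrac{\beta}{2}S_{n_1}\times\tfrac{\beta}{2}S_{n_2}$ and over spacetime vertex variables, with integrands built from the modified propagators $\Delta_{F,\la}$, $\Delta_{+,\la}$, $\Delta_{\beta,\la}$, from the cutoffs $\chi$ and $h$, and from the function $f$ characterizing $A$. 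The spatial decay of the propagators from Proposition~\ref{pr:clustering} allows the $h\to 1$ limit to be carried out by the recursive tree-pruning argument of Theorem~\ref{thm:adiabatic} and Corollary~\ref{cor:adiabatic}, uniformly in $\beta$; the temporal decay of $\Delta_{\beta,\la}$ and the convergence $\Delta_{\beta,\la}\to\Delta_{+,\la}$ on compact sets from Proposition~\ref{pr:time-decay} then control the imaginary-time vertex integrals and the $\beta\to\infty$ limit, uniformly in $h$. A dominated convergence argument, carried out term by term as in the proof of Proposition~\ref{prop:limtlimbeta}, then yields both finiteness of the double limit and its independence of the order in which the two limits are taken.

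For invariance under the interacting time evolution I would exploit the fact that, for each fixed $(h,\beta)$, the functional $\omega_{\la,\beta}^{U(i\beta/2)}$ is by construction a $(\beta,\alpha_t^V)$-KMS state and is therefore $\alpha_t^V$-invariant; since $\alpha_t^V(R_{V_{\chi,h}}(A)) = R_{V_{\chi,h}}(A_t)$, after enlarging $\epsilon$ so that $A_t\in\cA(\Sigma_{0,\epsilon+|t|})$ one can pass to the joint limit in $h$ and $\beta$, obtaining $\tilde\omega_\la\circ\alpha_t^V=\tilde\omega_\la$. For spatial translations the argument is more direct: the only source of symmetry breaking in the finite-$h$ expressions is the factor $h(\mathbf{x}_j)$ at each interaction vertex, which disappears after an obvious change of variables in the limit $h\to 1$. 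Normalization is immediate from $\omega_{\la,\beta}^{U(i\beta/2)}(\id)=1$ for every $(h,\beta)$, while positivity in the sense of formal power series is inherited from the positivity of each interacting KMS state via the $*$-isomorphism $R_{V_{\chi,h}}:(\cA,\star_{\la,V})\to(\cA,\star_\la)$, together with the preservation of non-negativity of coefficients under pointwise limits.

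The hard part will be the uniform control needed to interchange the limits $h\to 1$ and $\beta\to\infty$. Proposition~\ref{prop:limtlimbeta} carries out a comparable interchange between $t\to\infty$ and $\beta\to\infty$ by an explicit term-by-term comparison of the two expansions, exploiting both the vanishing of $m_{\beta,\la}^\alpha$ for $\alpha>0$ and the uniform temporal decay of $\Delta_{\beta,\la}$; here one must simultaneously run that temporal argument and the spatial pruning of Theorem~\ref{thm:adiabatic}, keeping track of the connected-graph combinatorics and of the symmetry factors coming from the double simplex integration in \eqref{eq:expansion-beta}. I expect this bookkeeping, rather than any new analytic ingredient, to constitute the most technical step of the proof.
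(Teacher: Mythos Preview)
Your overall strategy coincides with the paper's: expand $\omega_{\la,\beta}^{U(i\beta/2)}(R_{V_{\chi,h}}(A))$ via \eqref{eq:expansion-beta} and Proposition~\ref{prop:bogoexpansion} into a finite sum of connected-graph contributions, bound the imaginary-time simplex integrals uniformly in $\beta$, then run the spatial tree-pruning of Theorem~\ref{thm:adiabatic} to get $h\to 1$ uniformly in $\beta$, and finally invoke Proposition~\ref{prop:limtlimbeta} for the existence of $\beta\to\infty$ at fixed compact $h$. Positivity, normalization and invariance are handled exactly as you say.

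There is one genuine slip worth flagging. The integration in \eqref{eq:expansion-beta} runs over the \emph{imaginary}-time simplices $\tfrac{\beta}{2}S_{n_1}\times\tfrac{\beta}{2}S_{n_2}$, whose volume grows with $\beta$. To bound these integrals uniformly in $\beta$ one needs decay of $\Delta_{\beta,\la}(t,\mathbf{x})$ in $\mathrm{Im}\,t$, and this is supplied by the exponential estimate \eqref{eq:thermal-propagator} in Proposition~\ref{pr:clustering}, \emph{not} by Proposition~\ref{pr:time-decay}, which gives only $(1+|t|^{3/2})^{-1}$ decay in the real part of $t$ (useful in Proposition~\ref{prop:limtlimbeta}, irrelevant here). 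Concretely, the paper introduces
\[
J(s)=\frac{e^{-\frac{m}{2}s}}{1-e^{-\beta m}}+\frac{e^{-\frac{m}{2}(\beta-s)}}{1-e^{-\beta m}},
\]
extracts a rooted tree in the $w_j$-variables with root at the $R_V(A)$-vertex, and prunes its leaves using $\int_0^\beta J(s)\,ds\le 4/m$; this is what produces the $\beta$-uniform bound \eqref{eq:Fbound}. Without the imaginary-time exponential decay your dominated-convergence step for the simplex integrals would not go through.

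A minor economy: you propose to establish both uniformities ($h\to1$ uniform in $\beta$ and $\beta\to\infty$ uniform in $h$). The paper only proves the first and then quotes Proposition~\ref{prop:limtlimbeta} for the pointwise existence of $\lim_{\beta\to\infty}$ at fixed $h$; this already suffices to interchange the limits, so the second uniformity is unnecessary.
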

\begin{proof}
The functional $\tilde{\omega}_\la$ is a limit of states given in the sense of perturbation theory, hence, if the limit exists it must be a state in the sense of perturbation theory. In particular it must be positive and normalized. 
Consider now the expansion given in~\eqref{eq:expansion-beta}
\begin{multline}\label{eq:omega-beta-espansione}
\omega_{\la,\beta}^{U(i \frac{\beta}{2})}(R_{V_{\chi,h}}(A))  \\
= \sum_{{n_1,n_2}}  \int_{\frac{\beta}{2} S_{n_1}}\!\!\!\! dU \int_{\frac{\beta}{2} S_{n_2}}\!\!\!\! dV  \omega_{\la,\beta}^c(K_{iu_{n_1}-i\frac{\beta}{2}}\otimes \dots \otimes K_{iu_1-i\frac{\beta}{2}} \otimes R_{V_{\chi,h}}(A)\otimes K_{iv_{n_2}}\otimes \dots \otimes K_{iv_1}).
\end{multline}
Notice that the lowest perturbation order in $K$ is one, hence, at a given perturbative order the sum over ${n_1,n_2}$ contains only finitely many terms and only finitely many terms of the expansion of $K$ and $R_{V_{\chi,h}}(A)$ can contribute. 
Finally we observe that all these terms can be expanded in a sum of connected graphs as explained in the proof of Proposition~5 in~\cite{FredenhagenLindner}. 
In view of the linearity of the state and of $R_{V_{\chi,h}}$ to analyze the generic contribution in this expansion in graphs it is sufficient to consider  
\[
A(\phi) = \int_{M^\ell} dy_1\dots dy_\ell f(y_1,\dots ,y_\ell)\phi^{p_1}(y_1)\dots \phi^{p_\ell}(y_\ell),
\]
where $f$ is a compactly supported smooth function. 
At order $\nu$ in perturbation theory, with $n=n_1+n_2 \leq \nu$, a generic contribution to the expansion in a sum of connected graphs of \eqref{eq:omega-beta-espansione} can be written, similarly to~\eqref{eq:expansion-differ1}, as
\begin{multline*}
F=\int_{\frac \beta 2S_{n_1}} dU \int_{\frac \beta 2 S_{n_2}} dV\int_{M^{\nu+\ell}}dX dY\, \prod_{j=1}^n\left\{(\dot\chi^- h)(x^{(j)}_1) \prod_{k=2}^{\nu_j} (\chi h)(x^{(j)}_k) m_{\beta,\la}^{\alpha_j}D_{\beta,j}(X^{(j)})\right\} f(Y) \times \\
\prod_{l \in E(G)} \Delta_{\beta,\la}\big(z_{\rho(l)}-z_{\sigma(l)}+i(w_{r(l)}-w_{s(l)})e_0\big)
\end{multline*}
where now $U=(u_1,\dots, u_{n_1})$, $V=(v_1,\dots, v_{n_2})$, 
the domain of $(U, V)$ integration is such that
\[
0 <u_{n_1}<\dots  <u_1< \frac{\beta}{2}, \qquad 0 <v_{n_2}<\dots  <v_1< \frac{\beta}{2},
\] 
and $w_j = u_{n_1-j+1}-\frac \beta 2$, $j=1,\dots, n_1$, $w_{n_1+1} = 0$, $w_j = v_{n_1+n_2-j+2}$, $j=n_1+2,\dots,n_1+n_2+1$.
Then, with $\Lambda \subset \Sigma_0$ and $a > 0$ as in the proof of the previous proposition, using the bound of $\Delta_{\beta, \lambda}(t,\mathbf{x})$ given in \eqref{eq:thermal-propagator} of Proposition \ref{pr:clustering}, we obtain for $h = 1$ the estimate
\begin{multline*}
|F| \leq \|f\|_\infty\int_{[-a,a]^{\nu+\ell}}dZ^0\int_{\Sigma_0^\nu}d\mathbf{X} \int_\Lambda d\mathbf{Y}\, \prod_{j=1}^n m_{\beta,\la}^{\alpha_j}|D_{\beta,j}(X^{(j)}) |
\prod_{l \in E(G)} e^{-\frac m 2 |\mathbf{z}_{\rho(l)}-\mathbf{z}_{\sigma(l)}|} \times \\
\int_{\frac \beta 2 S_{n_1}}dU\int_{\frac \beta 2 S_{n_2}} dV\prod_{l \in E(G)}J(w_{r(l)}-w_{s(l)}),
\end{multline*} 
where
\[
J(\mathrm{Im}\, t):=\left(\frac{e^{-\frac{m}{2}|\mathrm{Im}\, t|}}{1-e^{-\beta m}}+\frac{e^{-\frac{m}{2}(\beta-|\mathrm{Im}\, t|)}}{1-e^{-\beta m}}\right),
\]
and where, in view of the definition of connected correlation functions,  the orientation of the edges $l$ of $G$ has to be chosen in such a way that $r(l) < s(l)$, and then, in view of the form of the domain of integration, only arguments $w_{r(l)}-w_{s(l)} < 0$ appear in the $J$'s product.
The graph $G$ has $n+1$ vertices corresponding to $R_V(A), K_{i(u_j-\frac{\beta}{2})}, K_{iv_j}$ 
of \eqref{eq:omega-beta-espansione} and is connected, and thus we can find a sub graph $\gamma$ which is a rooted tree with root in the vertex corresponding to $R_V(A)$.     
    All the $J$s corresponding to edges of $G$ not contained in $\gamma$ can be bound by a constant. In view of the form of $J$, we can now prune the leaves of this rooted tree taking the integral over the corresponding $w_j$ from $w_{j-1}$ to $w_{{j+1}}$ for $j=2,\dots,n_1+n_2$, $j \neq n_1+1$, or  over $w_1$ from $-\frac \beta 2$ to $w_2$, or over $w_{n_1+n_2+1}$ from $w_{n_1+n_2}$ to $\frac \beta 2$.   
    This integral is finite and bounded uniformly in $\beta$: 
   in particular, for the leaf $w_i$ joined with the vertex $w_j$ with $w_i>w_j$ the domain of integration is a subset of the interval $[w_j,w_j+\beta]$ and we have thus
\[
\int dw_i\, J(w_i-w_j)  \leq \int_0^\beta ds \left(\frac{e^{-\frac{m}{2}s}}{1-e^{-\beta m}}+\frac{e^{-\frac{m}{2}(\beta-s)}}{1-e^{-\beta m}}\right) = 
\frac 4 m \frac{1-e^{-\frac{\beta m}{2}}}{1-e^{-\beta m}}=\frac{4}{m} \frac{1}{1+e^{\frac{\beta m}{2}}}\leq \frac 4 m,
\]    
while if the leaf $w_l$ is joined with the vertex $w_k$ with $w_k>w_l$ the domain of $w_l$-integration is a subset of the interval $[-\beta+w_k ,w_k]$ and we get the same estimate for the corresponding integral. We can repeat the procedure till when all the $w_j$ integrations are taken. The final result is bounded by a constant $E > 0$ uniformly in $\beta$ and thus we get for $F$ the estimate
\begin{equation}\label{eq:Fbound}
|F| \leq E \|f\|_\infty\int_{[-a,a]^{\nu+\ell}}dZ^0\int_{\Sigma_0^\nu}d\mathbf{X} \int_\Lambda d\mathbf{Y}\, \prod_{j=1}^n m_{\beta,\la}^{\alpha_j}|D_{\beta,j}(X^{(j)}) |
\prod_{l \in E(G)} e^{-\frac m 2 |\mathbf{z}_{\rho(l)}-\mathbf{z}_{\sigma(l)}|}.
\end{equation}
Now Prop.~\ref{prop:bogoexpansion}, taking into account the fact that $\dot V$ is local, shows that the all the graphs contributing to the expansion of $K = R_V(\dot V)$ are connected. Therefore the graph $\tilde G$ obtained by substituting the vertices of $G$ with the graphs corresponding to the functions $D_{\beta,j}$, $j=1,\dots,n$, is also connected. Moreover by Prop.~\ref{pr:clustering} the factors $(\Delta_{F,\la}+\Delta_{\beta,\la}-\Delta_{+,\la})(x^{(j)}_k-x^{(j)}_i)$ and $\Delta_{\beta,\la}(x^{(j)}_k-x^{(j)}_i)$ 
appearing in $D_{\beta,j}(X^{(j)})$ can be bounded by a constant times $e^{-\frac m 2 |\mathbf{x}^{(j)}_k-\mathbf{x}^{(j)}_i|}$ uniformly for the time components in $[-a,a]$ and $\beta > 0$. Then the integral in~\eqref{eq:Fbound} can be in turn bounded by an integral which can be shown to be finite following the proof of Theorem \ref{thm:adiabatic} and its Corollary \ref{cor:adiabatic}. Summing up, the above estimates show that, order by order in the perturbative expansion, the limit $\lim_{h \to 1}\omega_{\la,\beta}^{U(i\frac{\beta}{2})}(R_{V_{\chi,h}}(A))$ exists and is uniform in $\beta >0$. And since $\lim_{\beta \to +\infty}\omega_{\la,\beta}^{U(i\frac{\beta}{2})}(R_{V_{\chi,h}}(A))$ exists for compactly supported $h$ by Proposition~\ref{prop:limtlimbeta}, we conclude that the limit~\eqref{eq:omegatilde} exists, and the order of the limits $h\to 1$ and $\beta \to +\infty$ is immaterial.
 
Finally, the obtained state is invariant under spacetime translation because it is the limit for $h\to1$ of states which are invariant under time translation thanks to \eqref{eq:ret-equilibrium} and can be interpreted as the ground state of the interacting theory.
\end{proof} 

We remark that the limit $h \to 1$ was also possible in the approach of~\cite{BDFP03}, but there it was not clear how to control the limit $\chi \to 1$. The fact that the latter limit is not necessary in the present framework is a remarkable feature of the perturbative AQFT machinery, whose application to QFT on QST was made possible thanks to the redefinition of the effective interaction in Sec.~\ref{sec:effective}.

\section{Outlook}
The adiabatic limit of the perturbation expansion of vacuum expectation values of interacting fields (known as weak adiabatic limit) that we have proved to exist in the case of a self-interacting scalar field provides us with an ultraviolet finite theory. It is conceivable that a similar result could also be obtained for physically more interesting theories, like QED. 

However since the resulting theory is nonlocal, the standard results of scattering theory do not directly apply, and the adiabatic limit of the $S$ matrix elements (the so called strong adiabatic limit) remain to be investigated. Moreover in order to compare the results of this procedure with observable data in physically interesting situations a finite renormalization would be needed. For the potentially divergent terms when $\la \to 0$ would give large unphysical contributions.

A possible way to fix the values of the renormalization constants could be the following.
The renormalization constants should be calculated as a function of the Planck length in such a way to cancel those contributions of the irreducible diagrams which would diverge in the limit where  the Planck length vanishes; in that limit the renormalized perturbation series of the usual theory ought to be recovered.  In particular gauge invariance and Lorentz covariance, both broken by our prescription for interaction, would be recovered in this limit. This and related problems will be studied elsewhere. 

It is a deep open problem whether it is possible to define interactions of quantum fields on QST in such a way as to preserve Lorentz and gauge invariance. 

It must be anyway remarked that the corrections to the usual theory that would be obtained by the methods described here for the physical value of the Planck length are expected to be dramatically small, as it has been seen to be the case even when they might cumulate in astrophysical situations \cite{DMP2}.

\appendix

\section{Basic properties of modified propagators}\label{se:prop}

We discuss in this appendix some elementary properties enjoyed by the modified propagators used in this paper. 
We recall that
\begin{align}
\hat{\Delta}_{\lambda}(p) &= \hat{\Delta}(p) e^{-{\lambda^2}\langle p\rangle^2} = -\frac{i}{(2\pi)^3}  \delta(p^2+m^2) \varepsilon(p_0) e^{-{\lambda^2}\langle p\rangle^2},
\\ \label{eq:delta+}
\hat{\Delta}_{+,\lambda}(p) &=  \hat{\Delta}_+(p) e^{-{\lambda^2}\langle p\rangle^2} = \frac{1}{(2\pi)^3}  \delta(p^2+m^2) \theta(p_0) e^{-{\lambda^2}\langle p\rangle^2},
\end{align}
which entails
\begin{align}\label{eq:delta}
\Delta_{\la}(x) &= \frac{1}{(2\pi)^3} \int_{\bR^3} \frac{d\mathbf{p}}{\omega(\mathbf{p})} e^{-\la^2(2|\mathbf{p}|^2+m^2)}\sin(\omega(\mathbf{p})t-\mathbf{p}\cdot\mathbf{x}), \\
\Delta_{+,\la}(x) &= \frac{1}{(2\pi)^3} \int_{\bR^3} \frac{d\mathbf{p}}{2\omega(\mathbf{p})} e^{-\la^2(2|\mathbf{p}|^2+m^2)}e^{-i(\omega(\mathbf{p})t-\mathbf{p}\cdot\mathbf{x})},
\end{align}
where, as customary, $\omega(\mathbf{p})=\sqrt{|\mathbf{p}|^2+m^2}$.
\begin{proposition}\label{pr:prop}
The distributions $\Delta_\lambda$ and $\Delta_{+,\lambda}$ given in \eqref{eq:deltalambda} and in \eqref{eq:delta+lambda} have the following properties:
\begin{itemize}
\item[a)] $\Delta_\lambda$ and $\Delta_{+,\lambda}$ are smooth functions which are also in $L^\infty$.
\item[b)] $\Delta_\lambda$ and $\Delta_{+,\lambda}$ are solutions of the equation of motion \eqref{eq:free-equation}.
\end{itemize}
\end{proposition}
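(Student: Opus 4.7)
The plan is to read off both assertions directly from the explicit mass-shell integral representations displayed just before the statement. The key observation is that the causal propagator $\Delta$ and the two-point function $\Delta_+$ are supported on $\{p^2 + m^2 = 0\}$, and on this mass shell one has
\[
\langle p\rangle^2 = p_0^2 + |\mathbf{p}|^2 = \omega(\mathbf{p})^2 + |\mathbf{p}|^2 = 2|\mathbf{p}|^2 + m^2,
\]
so that the regulator $e^{-\lambda^2\langle p\rangle^2}$ effectively becomes $e^{-\lambda^2(2|\mathbf{p}|^2 + m^2)}$, providing rapid decay in $\mathbf{p}$.

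For smoothness in (a), I would justify differentiation under the integral sign: each $x$-derivative applied to the integrand of $\Delta_\lambda$ or $\Delta_{+,\lambda}$ produces an extra polynomial factor in $\omega(\mathbf{p})$ and $\mathbf{p}$, which is dominated by the Gaussian, so that the resulting integral is absolutely convergent and continuous in $x$. For the $L^\infty$ bound, since $|\sin|$, $|\cos|$ and $|e^{i(\cdots)}|$ are bounded by $1$, we have uniformly in $x$
\[
|\Delta_\lambda(x)| \leq \frac{1}{(2\pi)^3}\int_{\bR^3}\frac{d\mathbf{p}}{\omega(\mathbf{p})}\, e^{-\lambda^2(2|\mathbf{p}|^2+m^2)} < \infty,
\]
and the same estimate (with an extra factor $1/2$) controls $\|\Delta_{+,\lambda}\|_\infty$.

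For (b), I would apply $(\Box - m^2)$ to the plane wave $e^{\mp i(\omega(\mathbf{p})t - \mathbf{p}\cdot\mathbf{x})}$ inside the integrand, producing the factor $-\omega(\mathbf{p})^2 + |\mathbf{p}|^2 + m^2$, which vanishes by the definition of $\omega(\mathbf{p})$. The interchange of $(\Box - m^2)$ with the integral is justified by exactly the dominated-convergence argument used for (a). Equivalently, in Fourier space one can write $\widehat{(\Box - m^2)\Delta_\lambda}(p) = -(p^2+m^2)e^{-\lambda^2\langle p\rangle^2}\hat{\Delta}(p)$, which vanishes because $\hat{\Delta}$ is supported on $\{p^2+m^2=0\}$; the same argument applies to $\Delta_{+,\lambda}$. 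No genuine obstacle is anticipated, since both statements are essentially immediate consequences of convolving $\Delta$ and $\Delta_+$ against a Gaussian whose restriction to the mass shell decays rapidly in $\mathbf{p}$.
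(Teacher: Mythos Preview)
Your proposal is correct and essentially parallels the paper's argument. The only cosmetic difference is that for smoothness the paper invokes the convolution representation $\Delta_\lambda(x)=\Delta(G_{\sqrt{2}\lambda})(x)$ together with the fact that $\Delta$ maps Schwartz functions to smooth functions, whereas you differentiate directly under the mass-shell integral; for part (b) the paper appeals to $\Delta$ being a weak solution (so that $\Delta((\Box-m^2)G_\lambda)=0$), which is the distributional rephrasing of your plane-wave computation. The $L^\infty$ bound you give is exactly the one used in the paper.
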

\begin{proof}
From the definition we have that
\[
\Delta_{\lambda}(x) = \langle\Delta(G_\lambda),G_{\lambda,x}\rangle = \Delta(G_\la * G_\la) (x)= \Delta(G_{\sqrt{2}\lambda})(x),
\]
where $G_{\lambda,x}(y)=G_\lambda(y-x)$ and $*$ denotes convolution.  
Since $\Delta$ is a map from Schwartz functions to smooth functions, we have that $\Delta(G_{\sqrt{2}\lambda})$ is a smooth function.
From Eq.~\eqref{eq:delta} we obtain
\[
\|\Delta_\lambda\|_\infty 
\leq
\frac{1}{(2\pi)^3} \int_{\bR^3} d\mathbf{p}
\frac{1}{\omega(\mathbf{p})}  e^{-\lambda^2 (2|\mathbf{p}|^2+m^2)}
\]
and the right hand side of the previous inequality is finite, so that $\Delta_\la$ is bounded. This proves $a)$ for $\Delta_\lambda$.
To prove $b)$ we notice that
\[
(\Box-m^2)\Delta_\lambda(x) = \langle\Delta((\Box-m^2)G_\lambda),G_{\lambda,x}\rangle.
\]
The thesis follows from the fact that  $\Delta$ is a weak solution of the equation of motion. 
The same proofs can be applied with minor modifications to study $\Delta_{+,\lambda}$.
\end{proof}

\begin{proposition}\label{prop:Feynman}
There holds, for the Fourier transform of the modified Feynman propagator, 
\[
\hat \Delta_{F,\la}(p) = -\frac{i}{(2\pi)^4}  \frac{1}{ p^2+m^2  -i\epsilon}
 e^{-\lambda^2 (2|\mathbf{p}|^2+m^2)}.
\]
\end{proposition}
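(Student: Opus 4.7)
The plan is to compute the Fourier transform of $\Delta_{F,\la}$ directly from its definition in~\eqref{eq:feynman}, reducing everything to the well-known one-dimensional Fourier transform that produces the Feynman $i\epsilon$-prescription, while tracking the extra Gaussian ultraviolet damping factor inherited from $\Delta_{+,\la}$.

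First, I would write
\[
\Delta_{F,\la}(x) = \theta(x^0)\Delta_{+,\la}(x) + \theta(-x^0)\Delta_{+,\la}(-x)
\]
and substitute the explicit momentum representation of $\Delta_{+,\la}$ recorded just before Proposition~\ref{pr:prop}. The key observation at this step is that on the mass shell $p_0 = \pm\omega(\mathbf{p})$ one has $\langle p\rangle^2 = p_0^2+|\mathbf{p}|^2 = 2|\mathbf{p}|^2+m^2$, so the damping factor $e^{-\la^2 \langle p\rangle^2}$ depends only on $\mathbf{p}$, not on $p_0$. After the substitution $\mathbf{p}\to-\mathbf{p}$ in the $\Delta_{+,\la}(-x)$ contribution, both pieces can be unified into
\[
\Delta_{F,\la}(x) = \frac{1}{(2\pi)^3}\int_{\bR^3} \frac{d\mathbf{p}}{2\omega(\mathbf{p})}\, e^{i\mathbf{p}\cdot\mathbf{x}}\, e^{-\la^2(2|\mathbf{p}|^2+m^2)} \bigl[\theta(x^0)e^{-i\omega(\mathbf{p}) x^0} + \theta(-x^0)e^{i\omega(\mathbf{p}) x^0}\bigr].
\]

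Next, I would take the spatial Fourier transform, which simply enforces identification of spatial momenta, and treat the temporal Fourier transform of the bracket separately via the standard regularized contour argument:
\[
\int_{\bR} dt\, e^{ip_0 t}\bigl[\theta(t)e^{-i\omega t}+\theta(-t)e^{i\omega t}\bigr] = \frac{i}{p_0-\omega+i\epsilon} - \frac{i}{p_0+\omega-i\epsilon} = \frac{2i\omega}{p_0^2-\omega^2+i\epsilon},
\]
where the $\pm i\epsilon$ come from imposing convergence of $\int_0^{\pm\infty} e^{i(p_0\mp\omega)t}dt$.

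Finally, assembling the pieces with the Fourier normalization (one extra factor of $(2\pi)^{-1}$ from the temporal transform combining with the $(2\pi)^{-3}$ from the spatial one), the $\frac{1}{2\omega(\mathbf{p})}$ cancels the $2\omega(\mathbf{p})$ in the numerator, and the denominator $p_0^2-\omega(\mathbf{p})^2+i\epsilon$ equals $-(p^2+m^2)+i\epsilon$ in the Minkowski signature $(-,+,+,+)$ used in the paper. This yields exactly
\[
\hat\Delta_{F,\la}(p) = -\frac{i}{(2\pi)^4}\frac{1}{p^2+m^2-i\epsilon}\, e^{-\la^2(2|\mathbf{p}|^2+m^2)}.
\]
No serious obstacle is expected; the only delicate points are bookkeeping of signs and keeping the $i\epsilon$ prescription consistent in the distributional limit, but both are routine because the whole computation differs from the standard one for the ordinary Feynman propagator only by the $\mathbf{p}$-dependent Gaussian factor, which is inert throughout.
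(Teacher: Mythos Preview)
Your proposal is correct and follows essentially the same approach as the paper: both compute the partial Fourier transform in the spatial variable first (where the Gaussian damping $e^{-\la^2(2|\mathbf{p}|^2+m^2)}$ sits inertly), then handle the temporal Fourier transform of $\theta(t)e^{-i\omega t}+\theta(-t)e^{i\omega t}$ by the standard $\epsilon$-regularized integrals, and finally rewrite the resulting denominator $p_0^2-\omega(\mathbf{p})^2+i\epsilon$ as $-(p^2+m^2)+i\epsilon$ in the $(-,+,+,+)$ signature. The only cosmetic difference is that the paper uses the symmetric $\tfrac{1}{\sqrt{2\pi}}$ Fourier convention in the intermediate steps (so the partial transform carries a $(2\pi)^{-7/2}$ and the final temporal integral a $(2\pi)^{-1/2}$), whereas you lump the normalizations together at the end; the outcome is of course the same.
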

\begin{proof}
Computing the inverse Fourier transform along $p_0$ of Eq.~\eqref{eq:delta+} one gets
\[
{{\Delta}}_{+,\lambda}(t,\mathbf{p}) 
= \frac{1}{(2\pi)^{7/2}} \frac{e^{-i\omega t}e^{-\lambda^2(2\omega^2-m^2)}}{2\omega}.
\]
The corresponding partial Fourier transform of the modified Feynman propagator is 
\[
{{\Delta}}_{F,\lambda}(t,\mathbf{p}) = \theta(t) {{\Delta}}_{+,\lambda}(t,\mathbf{p})  + \theta(-t){{\Delta}}_{+,\lambda}(-t,-\mathbf{p}) .
\]
Inserting an ultraviolet regulator, its Fourier transform along $t$ gives
\begin{align*}
\hat{{\Delta}}_{F,\lambda}(p_0,\mathbf{p}) 
&= 
\frac{1}{\sqrt{2\pi}}\int_0^\infty dt\; e^{-\epsilon t} e^{ip_0 t}\hat{{\Delta}}_{+,\lambda}(t,\mathbf{p})  + 
\frac{1}{\sqrt{2\pi}}\int_0^\infty dt\; e^{-\epsilon t} e^{-ip_0 t}\hat{{\Delta}}_{+,\lambda}(t,-\mathbf{p})
\\
&= 
\frac{1}{(2\pi)^4}\frac{e^{-\lambda (2\omega(\mathbf{p})^2-m^2)}}{2\omega(\mathbf{p}) } \int_0^\infty dt\; \left(
 e^{-\epsilon t +ip_0 t -i\omega(\mathbf{p}) t }  + 
 e^{-\epsilon t -ip_0 t -i\omega(\mathbf{p}) t } \right)
 \\
&= 
\frac{1}{(2\pi)^4}
\frac{e^{-\lambda^2 (2\omega(\mathbf{p})^2-m^2)}}{2\omega(\mathbf{p}) }
 \frac{2(\epsilon +i \omega(\mathbf{p}))}{(\epsilon +i \omega(\mathbf{p}))^2 +p_0^2}
 \\
&= 
\frac{i}{(2\pi)^4}
e^{-\lambda^2 (2\omega(\mathbf{p})^2-m^2)}
 \frac{1}{\epsilon^2 +2i\epsilon\omega(\mathbf{p}) -\omega(\mathbf{p})^2 +p_0^2}
\\
&= 
\frac{-i}{(2\pi)^4}
 \frac{1}{ p^2+m^2  -i\epsilon}
 e^{-\lambda^2 (2|\mathbf{p}|^2+m^2)},
\end{align*}
as requested.
\end{proof}

\section{Bounds for the modified propagators}\label{sec:bounds}

In this appendix, we discuss some bounds valid for the propagators used in this paper. The first proposition is about the decay in space while the second is about the decay in time.

\begin{proposition}\label{pr:clustering}
For each fixed $a >0$ the propagators $\Delta_{F,\lambda}$ and $\Delta_{+,\lambda}$ satisfy the estimates
\[
|D(t,\mathbf{x})|\leq C e^{-m |\mathbf{x}|} , \qquad D\in \{\Delta_{F,\lambda},\Delta_{+,\lambda}\} , \qquad t\in (-a,a)
\]
where $C > 0$ is a constant (depending on $a$ and $\la$).
Furthermore, the function $t\mapsto \Delta_{+,\lambda}(t,\mathbf{x})$ is an entire analytic function. For $\emph{Im}\,t<0$, it holds that 
\[
|\Delta_{+,\lambda}(t,\mathbf{x})|\leq C e^{-m r} , \qquad \emph{Re}\,t\in (-a,a), \qquad r = \frac{1}{2}\left({|\text{\emph{Im}}\,t|+|\mathbf{x}|}\right)
\]
where $C>0$ is the same constant as above. Similarly, the function $t\mapsto \Delta_{\beta,\lambda}(t,\mathbf{x})$ constructed with the thermal two-point function $\Delta_{\beta,\la}$, is an entire analytic function and for $\text{Im}\,t<0$ it holds that 
\begin{equation}\label{eq:thermal-propagator}
|\Delta_{\beta,\lambda}(t,\mathbf{x})|\leq C e^{-\frac{m}{2}|\mathbf{x}|}\left(\frac{e^{-\frac{m}{2}|\emph{Im}\,t|}}{1-e^{-\beta m}}+\frac{e^{-\frac{m}{2}(\beta-|\emph{Im}\,t|)}}{1-e^{-\beta m}}\right) , 
\qquad \emph{Re}\,t\in (-a,a),
\end{equation}
where $C>0$ is the same constant as above.
\end{proposition}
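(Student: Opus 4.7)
The plan is to prove all bounds directly from the momentum-space formulas at the start of Appendix~\ref{se:prop}, by combining three tools: the Gaussian damping $e^{-2\la^2|\mathbf{p}|^2}$, which renders the integrals absolutely convergent for \emph{all} complex $t$; a contour shift $\mathbf{p}\mapsto\mathbf{p}+i\mu\hat{\mathbf{x}}$ in the momentum variable (for $\mathbf{x}\neq 0$) to extract an explicit factor $e^{-\mu|\mathbf{x}|}$; and the factor $e^{-\omega(\mathbf{p})\tau}$ that appears automatically when $t=t_1-i\tau$ with $\tau>0$. Entire analyticity of $t\mapsto\Delta_{+,\la}(t,\mathbf{x})$ follows from majorizing the integrand, for $|\text{Im}\,t|\leq T$, by a $T$-dependent constant times $e^{-2\la^2|\mathbf{p}|^2+T|\mathbf{p}|}/\omega(\mathbf{p})\in L^1(\bR^3)$, so that differentiation under the integral sign is legitimate; the same majorization combined with the lower bound $|1-e^{-\beta\omega(\mathbf{p})}|\geq 1-e^{-\beta m}>0$ handles the thermal case.

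For the spatial decay at real $t\in(-a,a)$, I would shift the $\mathbf{p}$-integration contour to $\mathbf{p}+i\mu\hat{\mathbf{x}}$ with $0<\mu<m$. The only branch points of $\omega(\mathbf{p})=\sqrt{|\mathbf{p}|^2+m^2}$ sit in the locus $\{|\text{Im}\,\mathbf{p}|\geq m,\,\text{Re}\,\mathbf{p}\cdot\text{Im}\,\mathbf{p}=0\}$, so the shift is legal, and the Gaussian guarantees that the boundary terms at $|\mathbf{p}|\to\infty$ vanish. The shift produces the factor $e^{-\mu|\mathbf{x}|}$, while the rest of the integrand stays uniformly bounded for $\text{Re}\,t\in[-a,a]$ because $\text{Re}\,\omega(\mathbf{p}+i\mu\hat{\mathbf{x}})\geq\sqrt{m^2-\mu^2}>0$, from the elementary identity $\text{Re}\sqrt{z}=\sqrt{(|z|+\text{Re}\,z)/2}$ valid for $\text{Re}\,z>0$. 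Letting $\mu\uparrow m$ with careful control of the integrable $1/\omega$ singularity that develops at $\mathbf{p}=0$ yields $e^{-m|\mathbf{x}|}$ with a constant depending on $a$ and $\la$. The bound for $\Delta_{F,\la}$ is then immediate from the decomposition $\Delta_{F,\la}(x)=\theta(x^0)\Delta_{+,\la}(x)+\theta(-x^0)\Delta_{+,\la}(-x)$ recorded in the proof of Proposition~\ref{pr:prop1}.

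For the mixed spatial/imaginary-time decay I would apply the same contour shift with $t=t_1-i\tau$, $\tau>0$: the integrand now carries the extra real factor $e^{-\tau\,\text{Re}\,\omega(\mathbf{p}+i\mu\hat{\mathbf{x}})}\leq e^{-\tau\sqrt{m^2-\mu^2}}$, giving $|\Delta_{+,\la}(t_1-i\tau,\mathbf{x})|\leq C_\mu\,e^{-\mu|\mathbf{x}|-\sqrt{m^2-\mu^2}\,\tau}$. The balanced choice $\mu=m/\sqrt{2}$ produces the uniform estimate $C\,e^{-m(|\mathbf{x}|+\tau)/\sqrt{2}}$, which dominates the required $e^{-m r}$. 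The thermal propagator is then handled by splitting it into the two summands of~\eqref{eq:two-point-KMS}: the first has the same structure as $\Delta_{+,\la}$ and contributes the piece $e^{-m|\text{Im}\,t|/2}/(1-e^{-\beta m})$ of~\eqref{eq:thermal-propagator}, while the second, carrying the additional prefactor $e^{-\beta\omega(\mathbf{p})}$, becomes under $t\to t_1-i\tau$ a factor $e^{-\omega(\beta-\tau)}$ and hence the second summand of~\eqref{eq:thermal-propagator}; the denominators are uniformly bounded away from zero by $1-e^{-\beta m}$.

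The main technical delicacy I anticipate is the endpoint $\mu\uparrow m$ needed for the sharp exponent $m|\mathbf{x}|$: one must preserve uniform $L^1$ bounds as the $1/\omega$ factor collects a singularity at $\mathbf{p}=0$, which requires a careful passage to the limit (e.g.\ through spherical coordinates around the origin, controlling the angular integration by the Gaussian). If a slightly weaker exponent $(m-\epsilon)|\mathbf{x}|$ were acceptable, fixing $\mu=m-\epsilon$ once and for all would sidestep this complication entirely, and the weaker bound still suffices for every estimate in the main text.
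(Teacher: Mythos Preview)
Your approach is essentially the paper's: both shift the momentum contour by an amount $q<m$ in the direction of $\mathbf{x}$ (the paper first rotates so that $\mathbf{x}=(x^1,0,0)$ and shifts only $p_1$), bound the remaining integral uniformly for $|\mathrm{Re}\,t|<a$, deduce the Feynman case from the $\theta$-decomposition, and handle the thermal case by splitting~\eqref{eq:two-point-KMS} and bounding the Bose factors $b_\pm$ separately; for the mixed bound the paper takes $q=m/2$ where you take $m/\sqrt{2}$, which only sharpens the exponent.

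The one point you flag as delicate---the endpoint $\mu\uparrow m$ needed for the sharp exponent---is resolved in the paper by an observation that makes the limiting argument trivial. After the shift $p_1\mapsto p_1+iq$ one has
\[
\bigl|\sqrt{|\mathbf{p}|^2+2iqp_1-q^2+m^2}\,\bigr|^2=\bigl||\mathbf{p}|^2+2iqp_1-q^2+m^2\bigr|\geq |\mathbf{p}|^2-q^2+m^2\geq p_2^2+p_3^2
\]
for every $q\leq m$, and $1/\sqrt{p_2^2+p_3^2}$ is locally integrable on $\bR^3$ (cylindrical coordinates). Hence the bounding constant $C$ is already independent of $\epsilon=m-q$, and one simply lets $\epsilon\to 0$ in the inequality $|\Delta_{+,\la}(t,\mathbf{x})|\leq C\,e^{-(m-\epsilon)|\mathbf{x}|}$; no control of the integral in the limit is needed. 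The Gaussian plays no role near the origin---it is the $q$-uniform lower bound on the shifted energy by the \emph{transverse} momentum that does the work.
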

\begin{proof}
We discuss the decay properties of $\Delta_{+,\lambda}$. The definition \eqref{eq:feynman} implies that the same result will then hold for $\Delta_{F,\lambda}$.
In view of the rotation invariance of $\Delta_{+,\lambda}(t,\mathbf{x})$ we just need to analyze the decay in the direction $\mathbf{x} = (x^1,0,0)$ for large $x^1$. 
To this end, we recall that
\begin{equation}\label{eq:delta+p}
\Delta_{+,\lambda}(t,\mathbf{x})
= 
\frac{e^{-\lambda^2 m^2}}{(2\pi)^3} \int_{\mathbb{R}^3}  \frac{ e^{-2\lambda^2 |\mathbf{p}|^2}e^{-it \sqrt{|\mathbf{p}|^2+m^2}} e^{i\mathbf{p}\mathbf{x}}}{2\sqrt{|\mathbf{p}|^2+m^2}}   d\mathbf{p}.
\end{equation}
We observe that the integrand, seen as a function of $p_1$ is analytic in the strip $\text{Im}(p_1)\in (-m,m)$: actually the cut of $\sqrt{|\mathbf{p}|^2+m^2}$ as well as the poles of $1/\sqrt{|\mathbf{p}|^2+m^2}$ are contained in the region  $|\text{Im}\,p_1|\geq \sqrt{m^2+|p_2|^2+|p_3|^2}$. Furthermore, the integrand vanishes for large $p_1$ in that strip. Hence we consider, for $\epsilon\in (0,m)$, with $q:=m-\epsilon$,
\begin{align*}
e^{x^1 (m-\epsilon)}\Delta_{+,\lambda}(t,\mathbf{x})
&= 
\frac{e^{-\lambda^2 m^2}}{(2\pi)^3} \int_{\mathbb{R}^3}  \frac{ e^{-2\lambda^2 |\mathbf{p}|^2}e^{-it \sqrt{|\mathbf{p}|^2+m^2}} e^{i(\mathbf{p}\mathbf{x}-i(m-\epsilon) x^1)}}{2\sqrt{|\mathbf{p}|^2+m^2}}   d\mathbf{p}\\
&= 
\frac{e^{-\lambda^2 m^2}}{(2\pi)^3} \int_{\mathbb{R}^3}  \frac{ e^{-2\lambda^2 (|\mathbf{p}|^2+2iqp_1-q^2)}e^{-it \sqrt{|\mathbf{p}|^2+2iqp_1-q^2+m^2}} e^{i\mathbf{p}\mathbf{x}}}{2\sqrt{|\mathbf{p}|^2+2iqp_1-q^2+m^2}}   d\mathbf{p}
\end{align*}
where in the last equality we have used the residue theorem to move the integration in $p_1$ from the real line to the line $\mathrm{Im}\, p_1 = q$.
We obtain then the following bound
\begin{align}
\left|e^{x^1 (m-\epsilon)}\Delta_{+,\lambda}(t,\mathbf{x})\right|
&\leq 
\frac{e^{\lambda^2 (2q^2-m^2)}}{(2\pi)^3} \int_{\mathbb{R}^3}  \frac{ e^{-2\lambda^2 |\mathbf{p}|^2}e^{|t|\left[(|\mathbf{p}|^2+m^2-q^2)^2+4q^2p_1^2\right]^{\frac 1 4} }}{\sqrt{p_2^2+p_3^2}}   d\mathbf{p}
\notag
\\
\label{eq:estimate-minus}
&\leq
\frac{e^{\lambda^2 m^2}}{(2\pi)^3}\int_{\mathbb{R}^3}  \frac{ e^{-2\lambda^2 |\mathbf{p}|^2}e^{a\left[(|\mathbf{p}|^2+m^2)^2+4m^2p_1^2\right]^{\frac 1 4} }}{\sqrt{p_2^2+p_3^2}}   d\mathbf{p} =: C,
\end{align}
where the constant $C$ does not depend on $\epsilon$ nor on $t$ for $t\in (-a,a)$. By the same argument we can also estimate $e^{-x^1(m-\epsilon)}\Delta_+(t,\mathbf{x})$ thus concluding the first part of the proof.

The analyticity of the function $t\mapsto \Delta_{+,\lambda}(t,\mathbf{x})$ is manifest in \eqref{eq:delta+p}. To prove the second estimate stated in the proposition we observe that for $q=\frac{m}{2}$, 
\[
0\leq\sqrt{|\mathbf{p}|^2-q^2+m^2}-\frac{m}{2} \leq \text{Re}\sqrt{|\mathbf{p}|^2+2iqp_1-q^2+m^2} -\frac{m}{2}
\]
hence, for positive large $x^1$, positive large $\beta$ and real $t$, adapting the estimate \eqref{eq:estimate-minus} we get
\begin{equation*}
\left|e^{(\beta+x^1) \frac{m}{2}}\Delta_{+,\lambda}(t-i\beta,\mathbf{x})\right|
\leq
\frac{e^{\lambda^2 m^2}}{(2\pi)^3}\int_{\mathbb{R}^3}  \frac{ e^{-2\lambda^2 |\mathbf{p}|^2}e^{a\left[(|\mathbf{p}|^2+m^2)^2+4m^2p_1^2\right]^{\frac 1 4} }}{\sqrt{p_2^2+p_3^2}}  
 d\mathbf{p} =: C
\end{equation*}
with the same constant as before. 

The analyticity of the function $t\mapsto \Delta_{\beta,\lambda}(t,\mathbf{x})$ holds because of the presence of the $e^{-2\lambda^2 |\mathbf{p}|^2}$ in its Fourier expansion. 
The last estimate can be obtained in a similar way starting from \eqref{eq:two-point-KMS} 
\begin{multline*}
\Delta_{\beta,\lambda}(t,\mathbf{x}) = 
\frac{e^{-\lambda^2 m^2}}{(2\pi)^3} \int_{\mathbb{R}^3}  \frac{ e^{-2\lambda^2 |\mathbf{p}|^2} e^{i\mathbf{p}\mathbf{x}}}{2\sqrt{|\mathbf{p}|^2+m^2}}
\left(e^{-it \sqrt{|\mathbf{p}|^2+m^2}}b_+(\sqrt{|\mathbf{p}|^2+m^2}) \right.\\
\left.+\,e^{it \sqrt{|\mathbf{p}|^2+m^2}}b_-(\sqrt{|\mathbf{p}|^2+m^2}) \right)   d\mathbf{p}
\end{multline*}
and bounding the Bose factors $b_\pm$ as follows
\[
b_+(\sqrt{|\mathbf{p}|^2+m^2}) =\frac{1}{1-e^{-\beta\sqrt{|\mathbf{p}|^2+m^2}}}   \leq \frac{1}{1-e^{-\beta m}},
\]
\[
b_-(\sqrt{|\mathbf{p}|^2+m^2}) =\frac{1}{e^{\beta\sqrt{|\mathbf{p}|^2+m^2}}-1}   \leq \frac{e^{-\beta {m}}}{1-e^{-\beta m}}\leq \frac{e^{-\beta \frac{m}{2}}}{1-e^{-\beta m}}.
\]
\end{proof}

\begin{proposition}\label{pr:time-decay}
Consider a compact set $\Lambda \subset \Sigma_0$. 
The propagators $\Delta_{+,\lambda}$, $\Delta_{\beta,\lambda}$ satisfy the estimates
\[
|D(t-iu,\mathbf{x})|\leq \frac{C}{1+|t|^{\frac{3}{2}}}  , \qquad D\in \{\Delta_{+,\lambda},\Delta_{\beta,\lambda}\}, 
\]
valid for $t \in \bR$, $u\in[0,\beta]$ and $\mathbf{x}\in \Lambda$, 
where $C > 0$ is a constant which depends on $\la$  and $\Lambda$ but not on $\beta$ for $\beta > \frac{1}{m}$.
Furthermore, 
\[
|\Delta_{\beta,\lambda}(t,\mathbf{x})-\Delta_{+,\lambda}(t,\mathbf{x})|\leq \frac{C}{1+|t|^{\frac{3}{2}}}e^{-\beta m}, \quad (t,\mathbf{x}) \in \bR\times\Lambda,
\]
where $C$ is the same constant as before.
\end{proposition}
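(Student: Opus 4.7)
The proof reduces to estimating oscillatory integrals by stationary phase. First I would express both propagators using their Fourier representations from Appendix~\ref{se:prop} and equation~\eqref{eq:two-point-KMS}; for instance
\begin{equation*}
\Delta_{+,\la}(t-iu,\mathbf{x}) = \frac{e^{-\la^2 m^2}}{(2\pi)^3}\int_{\bR^3}\frac{d\mathbf{p}}{2\omega(\mathbf{p})}\, e^{-2\la^2|\mathbf{p}|^2-u\omega(\mathbf{p})}\,e^{-it\omega(\mathbf{p})+i\mathbf{p}\cdot\mathbf{x}}.
\end{equation*}
For $\beta > 1/m$ and $u\in[0,\beta]$, the two thermal factors $e^{-u\omega}/(1-e^{-\beta\omega})$ and $e^{-(\beta-u)\omega}/(1-e^{-\beta\omega})$ appearing in $\Delta_{\beta,\la}(t-iu,\mathbf{x})$ are each bounded pointwise in $\mathbf{p}$ by $(1-e^{-1})^{-1}$. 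Hence in all cases the integrand has the form $A(\mathbf{p})\,e^{-it\omega(\mathbf{p})+i\mathbf{p}\cdot\mathbf{x}}$, with $A$ a Schwartz function whose $\mathbf{p}$-derivatives are bounded uniformly in $u\in[0,\beta]$, $\mathbf{x}\in\Lambda$, and $\beta>1/m$; to control derivatives of $e^{-u\omega}$ one uses $u^k e^{-u m}\leq (k/m)^k e^{-k}$.

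Next I would apply the standard stationary phase method to the oscillatory integral with phase $\phi(\mathbf{p}):=-t\omega(\mathbf{p})+\mathbf{p}\cdot\mathbf{x}$. For $\mathbf{x}$ in the compact set $\Lambda$ and $|t|$ large, $\phi$ has a single non-degenerate critical point $\mathbf{p}_0$ near the origin, determined by $\mathbf{p}_0/\omega(\mathbf{p}_0)=\mathbf{x}/t$, with Hessian $\nabla^2\phi(\mathbf{p}_0)\approx -(t/m)I$. The stationary phase bound then yields decay of order $|t|^{-3/2}$, uniform in the listed parameters, because the relevant Schwartz seminorms of $A$ are uniformly controlled. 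For $|t|$ bounded, continuity of the integrands implies that the propagators are uniformly bounded on the relevant compact set, so the full bound $C/(1+|t|^{3/2})$ follows.

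For the difference estimate I would use the identities $\frac{1}{1-e^{-\beta\omega}}=1+\frac{e^{-\beta\omega}}{1-e^{-\beta\omega}}$ and $\frac{1}{e^{\beta\omega}-1}=\frac{e^{-\beta\omega}}{1-e^{-\beta\omega}}$ to write $\Delta_{\beta,\la}(t,\mathbf{x})-\Delta_{+,\la}(t,\mathbf{x})$ as
\begin{equation*}
\frac{e^{-\la^2 m^2}}{(2\pi)^3}\int\frac{d\mathbf{p}}{2\omega(\mathbf{p})}\,\frac{e^{-2\la^2|\mathbf{p}|^2+i\mathbf{p}\cdot\mathbf{x}}\,e^{-\beta\omega(\mathbf{p})}}{1-e^{-\beta\omega(\mathbf{p})}}\bigl(e^{-it\omega(\mathbf{p})}+e^{it\omega(\mathbf{p})}\bigr).
\end{equation*}
The factor $e^{-\beta\omega}/(1-e^{-\beta\omega})\leq e^{-\beta m}/(1-e^{-1})$ supplies the $e^{-\beta m}$ suppression uniformly in $\mathbf{p}$, and the $|t|^{-3/2}$ time decay follows by applying the stationary phase estimate of the previous step separately to each of the two oscillatory terms.

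The main technical point, which I expect to be the principal obstacle, is establishing the uniformity of the stationary phase bounds in $\mathbf{x}\in\Lambda$, $u\in[0,\beta]$, and $\beta>1/m$ simultaneously. This requires a careful bookkeeping of the Schwartz seminorms of the amplitude $A$ and of the Hessian of $\phi$ at the critical point, but no new ideas beyond the standard stationary phase theorem are needed, since $\Lambda$ is compact, the complex time contribution $e^{-u\omega}$ only improves the decay of the amplitude (while its derivatives remain uniformly controlled by the elementary estimates above), and the critical point stays in a bounded region for $|t|$ large.
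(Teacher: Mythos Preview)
Your approach via three-dimensional stationary phase is correct in outline and yields the stated bound, but the paper's proof takes a somewhat different, more elementary route. Instead of applying stationary phase directly to the $\mathbf{p}$-integral with phase $-t\omega(\mathbf{p})+\mathbf{p}\cdot\mathbf{x}$, the paper first integrates out the angular variables (which replaces $e^{i\mathbf{p}\cdot\mathbf{x}}$ by $\mathrm{sinc}(|\mathbf{p}|\,|\mathbf{x}|)$), then changes variables to $w=t(\omega(\mathbf{p})-m)$, which extracts the factor $|t|^{-3/2}$ explicitly and leaves a one-dimensional integral of the form $\int_0^\infty e^{iw}\sqrt{w}\,g(w/t)\,dw$. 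The latter is shown to be bounded uniformly in the parameters by splitting off $g(0)$ and integrating by parts twice on the remainder. The advantage of the paper's method is that the uniformity in $u\in[0,\beta]$, $\mathbf{x}\in\Lambda$ and $\beta>1/m$ reduces to checking that one explicit auxiliary function is bounded, rather than tracking Schwartz seminorms of a multi-parameter amplitude and the location of a moving critical point. Your approach is cleaner conceptually, but it puts the burden precisely where you identify it: the uniform control of derivatives of $e^{-u\omega}$ and of the Bose factors, for which your estimate $u^k e^{-um}\le (k/m)^k e^{-k}$ is indeed the right tool. For the difference $\Delta_{\beta,\lambda}-\Delta_{+,\lambda}$ both proofs proceed identically, factoring $e^{-\beta m}$ out of the occupation number and repeating the oscillatory estimate on each of the two terms.
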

\begin{proof}
Consider
\[
\Delta_{+,\la}(t-iu, \mathbf{x}) := \frac{e^{-\lambda^2 m^2}}{(2\pi)^3} \int_{\mathbb{R}^3}   e^{-2\lambda^2 |\mathbf{p}|^2} e^{i\mathbf{p}\cdot\mathbf{x}} \frac{e^{-i t \sqrt{|\mathbf{p}|^2+m^2}}}{2\sqrt{|\mathbf{p}|^2+m^2}}  e^{- u \sqrt{|\mathbf{p}|^2+m^2}} d\mathbf{p}.
\]
Assuming for the moment $\mathbf{x}=0$, $u=0$ and $t>0$ we get 
\begin{align*}
\Delta_{+,\la}(t,0) &= \frac{e^{-\lambda^2 m^2}}{(2\pi)^3} \int_{\mathbb{R}^3}   e^{-2\lambda^2 |\mathbf{p}|^2} \frac{e^{i t \sqrt{|\mathbf{p}|^2+m^2}}}{2\sqrt{|\mathbf{p}|^2+m^2}} d\mathbf{p}
\\
 &= \frac{e^{-\lambda^2 m^2}}{4\pi^2}  \frac{e^{imt}}{|t|^{\frac{3}{2}}}\int_{0}^\infty  
 e^{-2\lambda^2 \frac{w}{t}(\frac{w}{t}+2m)  }
 {e^{iw}} \sqrt{w}\sqrt{2m+\frac{w}{t}} d w 
\end{align*}
where we have integrated over the angular degrees of freedom of the momentum $\mathbf{p}$, and we have operated a change of coordinates $w=t(\sqrt{|\mathbf{p}|^2+m^2}-m)$.
We have now to estimate an integral of the form
\[
I=\int_{0}^\infty  
 {e^{-iw}} \sqrt{w} g\left(\frac{w}{t}\right) d w 
\]
where $g$ is a rapidly decreasing smooth function. We have that
\[
I=\lim_{\epsilon\to 0^+} g(0) \int_{0}^\infty  
 {e^{-(i+\epsilon)w}} \sqrt{w}  d w 
+ 
 \lim_{\epsilon\to 0^+}\int_{0}^\infty  
 {e^{-(i+\epsilon)w}} \sqrt{w} \left(g\left(\frac{w}{t}\right) - g(0) \right)d w 
\]
the first integrals which contributes to $I$ and the corresponding limit $\epsilon\to 0$ can be computed, we get
\[
I= g(0) \frac{\sqrt{\pi}}{2 i^{\frac{5}{2}}}
+ 
 \lim_{\epsilon\to 0^+}\frac{1}{(i+\epsilon)^2} \int_{0}^\infty  
  \left(e^{-(i+\epsilon)w}-1\right) \frac{d^2}{d w^2 }  \sqrt{w} \left(g\left(\frac{w}{t}\right) - g(0) \right)d w 
\]
where in the second contribution we written
$
{e^{-(i+\epsilon)w}} = \frac{1}{(i+\epsilon)^2}\frac{d^2}{d w^2} \left(e^{-(i+\epsilon)w}-1\right)
$
and we have integrated by parts two times.
The second contribution can be rewritten as follows with a $0<\delta<\frac{1}{2}$. 
\[
\int_{0}^\infty  
\left(e^{-(i+\epsilon)w}-1\right) \frac{d^2}{d w^2 }  \sqrt{w} \left(g\left(\frac{w}{t}\right) - g(0) \right)d w 
=
\int_{0}^\infty  
\left(e^{-(i+\epsilon)w}-1\right) \frac{1}{w^{\frac{3}{2}-\delta}} \frac{1}{t^{\delta}}f_g\left(\frac{w}{t}\right)d w   
\]
where $f_g\left( y\right) = \frac{1}{y^\delta}\left(y^2 g''(y)  + y g'(y) -\frac{1}{4} (g(y)-g(0))\right).$
Notice that $f_g(y)$ is a bounded function, while $\left(e^{-(i+\epsilon)w}-1\right) \frac{1}{w^{\frac{3}{2}-\delta}}$ is bounded by an absolutely integrable function uniformly in $\epsilon$. 
We thus have that 
\begin{equation}\label{eq:estimate-t-beta}
|I| \leq \frac{C_1}{|t|^{\frac{3}{2}}}+\frac{{C}_2}{|t|^{\frac{3}{2}+\delta}} 
\end{equation}
where $C_1$ is a linear function of $g(0)$ and where ${C}_2$ is a linear function of the maximum of $|f|$.  
If an $\mathbf{x} \neq 0$, and $u\in[0,\beta]$ is considered $g(y)$ needs to be multiplied by 
\[
h_{\mathbf{x},u}(y)=\text{sinc}(\sqrt{y(y+2m)}|\mathbf{x}|) e^{-u (y+m)}.
\] 
Notice that $h_{\mathbf{x},u}(y)$ is bounded by a constant uniformly for $u\in[0,\beta]$ and $\mathbf{x}\in \Lambda$ and that the same holds for $f_{gh_{\mathbf{x},u}} $. Therefore $C_1$ and $C_2$ can be chosen independent of $u\in[0,\beta]$ and $\mathbf{x}\in \Lambda$.
If on the other hand $\Delta_{\beta,\lambda}$ is considered we can analyze separately the positive and negative contribution in a similar way. 
In particular we will have to multiply $g$ by suitable functions related to the Bose factors 
\[
b^+_{\mathbf{x},u}(y) = \text{sinc}(\sqrt{y(y+2m)}|\mathbf{x}|) \frac{e^{-u (y+m)}}{1-e^{-\beta (y+m)}},\qquad 
b^-_{\mathbf{x},u}(y) = \text{sinc}(\sqrt{y(y+2m)}|\mathbf{x}|) \frac{e^{u (y+m)}}{e^{\beta (y+m)}-1}
\] 
Again, we have that $gb^{\pm}_{\mathbf{x},u}(0)$ and 
$f_{gb^\pm_{\mathbf{x},u}}$ 
are bounded uniformly for $u\in[0,\beta]$, $\beta \geq \frac 1 m$ and $\mathbf{x}\in \Lambda$.
Hence, the constant $C_1$, ${C}_2$ can be chosen independently on  $u\in[0,\beta]$, $\mathbf{x}\in \Lambda$ and $\beta$ for large $\beta$. This proves the first part of the proposition.
To analyze the second part, we observe that 
\begin{multline*}
\Delta_{\beta,\la}(t,\mathbf{x})-\Delta_{+,\la}(t,\mathbf{x}) = 
\frac{e^{-\lambda^2 m^2}}{(2\pi)^3} \int_{\mathbb{R}^3}   d\mathbf{p}
\frac{ e^{-2\lambda^2 |\mathbf{p}|^2} e^{i\mathbf{p}\mathbf{x}}}{2\sqrt{|\mathbf{p}|^2+m^2}}
\left(e^{-it \sqrt{|\mathbf{p}|^2+m^2}}+ \right.\\
\left. +e^{it \sqrt{|\mathbf{p}|^2+m^2}}    \right)  
\frac{e^{-\beta\sqrt{|\mathbf{p}|^2+m^2}}}{1-e^{-\beta\sqrt{|\mathbf{p}|^2+m^2}}}.
\end{multline*}
We can now repeat the analysis done in the first part of the proof for the positive and negative contribution separately. 
We need furthermore to multiply the function $g$ discussed above with
\[
h(y) = e^{-\beta m }    \frac{e^{-\beta y}}{1-e^{-\beta (y+m)}}
\]
the corresponding $f_{gh}$ is bounded by a constant which multiplies $e^{-\beta m}$ for $\beta m > 1$. This observation allows us to conclude the proof.
\end{proof}

\section{Return to equilibrium on noncommutative spacetime}\label{sec:KMS}

The decay for large values of $t$  of $\Delta_{\beta,\la}(t,\mathbf{x})$ stated in Proposition \ref{pr:clustering} implies 
the following proposition, which is a  
clustering condition which similar to the one stated in Proposition 3.3 in \cite{DFP} for fields on a commutative spacetime.

\begin{proposition}\label{pr:clustering-alfav}
Let $A,B$ in $\mathcal{A}$, consider the interacting time evolution $\alpha_{t}^V$ where $V=V_{\chi,h}$ for some $h$ of compact spatial support and the KMS state introduced in \eqref{eq:free-KMS-state}.
The following clustering condition holds
\begin{equation}\label{eq:clustering-alfav}
\lim_{t\to\infty}\left(\omega_{\la,\beta}(A\star_\la \alpha_t^V(B)) - \omega_{\la,\beta}(A)\omega_{\la,\beta}(\alpha_t^V(B))\right) = 0. 
\end{equation} 
\end{proposition}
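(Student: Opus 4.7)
The plan is to rewrite the difference in~\eqref{eq:clustering-alfav} as a truncated (connected) two-point function $\omega_{\la,\beta}^c(A\otimes \alpha_t^V(B))$, to expand it order by order in the coupling via a Dyson-type series for $\alpha_t^V$, and then to exploit the power-law time decay of $\Delta_{\beta,\la}$ stated in Proposition~\ref{pr:time-decay} to show that each perturbative order vanishes as $t\to\infty$.

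First I would use that $\omega_{\la,\beta}$ is quasi-free with two-point function $\Delta_{\beta,\la}$, by~\eqref{eq:free-KMS-state}. Consequently, for polynomial $A, C \in \cA$, the quantity $\omega_{\la,\beta}^c(A\otimes C)$ is a finite linear combination of integrals of products of $\Delta_{\beta,\la}$-propagators, labelled by graphs $G$ on the vertices coming from $A$ and $C$, subject to the constraint that at least one edge of $G$ joins an $A$-vertex to a $C$-vertex. Next I would expand $\alpha_t^V(B) = U(t)\star_\la \alpha_t(B)\star_\la U(t)^*$ perturbatively via the cocycle $U(t)$ and its generator $K = R_V(\dot V)$, obtaining a formal power series whose coefficients are polynomials in $\phi(x)$ with spatial arguments in a fixed compact set (from the compact supports of $A$, $B$, and $h$) and time arguments either in a bounded neighbourhood of $t$ (from $\alpha_t(B)$) or in the time slab swept out by the interaction insertions $K_{t_j}$ with $t_j$ ranging over the integration domain $tS_n$ (whose further perturbative structure is described by Proposition~\ref{prop:bogoexpansion}). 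Substituting into the graph formula, at perturbative order $\nu$ one has to estimate a finite sum of integrals over spatial positions and over the intermediate times $t_1,\dots,t_n$, of products of $\Delta_{\beta,\la}$-propagators attached to the edges of a connected graph $G$; connectedness of $G$ forces the existence of a path $\gamma$ from an $A$-vertex (temporally near $0$) to an $\alpha_t(B)$-vertex (temporally near $t$).

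Finally I would carry out the bound. Along $\gamma$ the time increments between consecutive vertices add up to $t$ modulo $O(1)$, and by Proposition~\ref{pr:time-decay} each edge on $\gamma$ contributes a factor bounded by $C(1+|\Delta\tau|)^{-3/2}$, uniformly in the spatial variables which remain in a compact set; the off-path propagators are uniformly bounded, and their spatial integrations together with the time integrations of vertices not on $\gamma$ are controlled by the exponential spatial decay of $\Delta_{\beta,\la}$ from Proposition~\ref{pr:clustering} via the inductive leaf-pruning procedure used in the proof of Theorem~\ref{thm:adiabatic}. The remaining time integrations along $\gamma$ then reduce to an iterated convolution of finitely many copies of the integrable function $(1+|\cdot|)^{-3/2}$, which retains a $(1+t)^{-3/2}$ tail at infinity; hence each perturbative order of the considered truncated correlator is $O(t^{-3/2})$ and vanishes as $t\to\infty$.

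\textbf{The hard part} will be the convolution estimate along the bridging path, namely the verification that the integrations over intermediate interaction times $t_1,\dots,t_n$ do not destroy the time decay supplied by the path propagators, uniformly at each perturbative order. This amounts to a careful separation of $\gamma$ from the rest of the graph and a repeated convolution bound, and it is the natural analogue in the present noncommutative, Lorentz-breaking setting of the clustering argument for the commutative case given in~\cite{DFP}.
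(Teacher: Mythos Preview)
Your overall strategy is the right one and matches the paper's: use the Dyson expansion of $\alpha_t^V(B)$ in terms of iterated commutators with $K_{t_j}$, exploit that the graphs arising in the truncated correlator are connected, and then use the $t^{-3/2}$ decay of $\Delta_{\beta,\la}$ from Proposition~\ref{pr:time-decay} to control an iterated convolution.

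However, there is a genuine gap in your treatment of the off-path vertices. You write that ``the off-path propagators are uniformly bounded, and their spatial integrations together with the time integrations of vertices not on $\gamma$ are controlled by the exponential spatial decay of $\Delta_{\beta,\la}$ \dots\ via the inductive leaf-pruning procedure used in the proof of Theorem~\ref{thm:adiabatic}.'' This cannot work. Since $h$ has compact spatial support, all spatial integrations are over a fixed compact set and need no decay argument at all; the leaf-pruning of Theorem~\ref{thm:adiabatic} is irrelevant here. The problem is the \emph{time} integrations: each off-path vertex $K_{t_j}$ carries an integral $\int dt_j$ over a subinterval of $(0,t)$. If you only use that the off-path propagators are bounded by a constant, each such integration contributes a factor of order $t$, and the whole term blows up rather than vanishes. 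What you actually need is the \emph{time} decay of the off-path propagators themselves, not spatial decay.

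The paper avoids separating a path $\gamma$ from the rest. Instead it uses the monotonicity of the simplex $tS_n$ (namely $t = t_0 > t_1 > \dots > t_n > t_{n+1} = 0$) to reduce the \emph{entire} product of propagators over the connected graph $G$ to the product over the linear chain $\prod_{i=0}^{n}(t_i - t_{i+1}+1)^{-3/2}$. The reduction is done by successively detaching the extremal vertex (vertex $0$, corresponding to $B_t$), replacing each of its edges by an edge to the nearest-in-index neighbour; this is legitimate because $|t_0 - t_j| \geq |t_{i_0} - t_j|$ for $j \geq i_0$. One is then left with a single iterated convolution over \emph{all} the $t_j$, estimated via
\[
\int_0^{t_{i-1}} \frac{dt_i}{(t_{i-1}-t_i+1)^{3/2}(t_i+1)^{3/2}} \leq \frac{4}{(t_{i-1}+1)^{3/2}},
\]
which yields $|D_{n,G}(t)| \leq 4^n E'(t+1)^{-3/2}$. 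In this way every $t_j$-integration is absorbed into the convolution and there is nothing ``off-path'' to worry about.
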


\begin{proof}
Consider
\[
D(t)=\left(\omega_{\la,\beta}(A\star_\la \alpha_t^V(B)) - \omega_{\la,\beta}(A)\omega_{\la,\beta}(\alpha_t^V(B))\right),
\]
and notice that differentiating recursively the cocycle condition the interacting time evoultion can be written as
\begin{equation}\label{eq:alfaVB}
\alpha_t^V(B) = \sum_{n=0}^{+\infty}i^n\int_{tS_n}dT\,[K_{t_n},[K_{t_{n-1}},\dots ,[K_{t_1},B_t]]
\end{equation}
where the integration domain is
\begin{equation}\label{eq:tsn}
tS_n := \{(t_1,\dots,t_n) \in \bR^n\,:\, 0<t_n<\dots <t_1<t\}.
\end{equation} 
Hence,  
\begin{equation}\label{eq:sum-DT}
D(t)=
\sum_{n\geq 0}  D_n(t) :=
\sum_{n\geq 0} 
i^n\int_{t S_n} dT \,M \left(  (e^{\Gamma}-1 )\left(\tilde{A} \otimes    [\tilde{K}_{t_n},[\tilde{K}_{t_{n-1}} \dots[\tilde{K}_{t_1},\tilde{B}_t]_{\beta}\dots ]_{\beta}]_{\beta}\right)\right)
\end{equation}
where $M(A\otimes B) := \omega_\la m(A\otimes B)=A(0)B(0)$ (recall that $m(A\otimes B)(\phi)=A(\phi)B(\phi)$), 
\begin{align*}
\Gamma&:= \Gamma_{\Delta_{\beta,\la}} = \int_{M^2}dxdy\, \Delta_{\beta,\la}(x-y) \frac{\delta}{\delta \phi(x) }\otimes \frac{\delta}{\delta \phi(y)},\\
\tilde{A} &:= e^{\frac{1}{2} \int dxdy  (\Delta_{\beta,\la}(x-y) - \Delta_{+,\la}(x-y)) \frac{\delta^2}{\delta{\phi(x)}\delta{\phi(y)}} }A,
\end{align*}
and where   
\begin{equation}\label{eq:commutatior-clustering}
[A,B]_\beta = m(e^{\Gamma} (A \otimes B - B\otimes A)) = m((e^{\Gamma}-1) (A \otimes B - B\otimes A)).
\end{equation}
Hence, the $n-$th element in $D_n(t)$ in the sum in \eqref{eq:sum-DT} can be expanded as a sum over $\mathcal{G}_{n+2}$, the set of connected oriented graphs joining $n+2$ vertices.
The vertices of each graphs in $\mathcal{G}_{n+2}$ are in correspondence with $\{A,K_{t_n}, \dots ,K_{t_1},B\}$ and the edges with $\Gamma_{ij}$ which is $\Gamma$ applied to the $i-$th and $j-$th element of the tensor product $\tilde A \otimes \tilde K_{t_n}\otimes  \dots \otimes \tilde K_{t_1}\otimes \tilde B_t$. 
The only admissible graphs in this graphical expansion are connected because both in \eqref{eq:sum-DT} and in \eqref{eq:commutatior-clustering}  $e^{\Gamma}-1$ appears. 
We have that
\begin{align*}
D_n(t) &= \sum_{G\in{\mathcal{G}_{n+2}}} c_G D_{n,G}(t) \\
&= \sum_{G\in{\mathcal{G}_{n+2}}} c_G\int_{t S_n} dT \,M  \bigg(\prod_{l \in E(G)} \Gamma_{s(l),r(l)}\bigg)\left(\tilde{A} \otimes \tilde{K}_{t_n} \otimes \tilde{K}_{t_{n-1}} \dots \otimes \tilde{K}_{t_1}\otimes \tilde{B}_t\right)
\end{align*}
where $c_G$ is a numerical factor which can be either $i^n$ or $0$, $E(G)$ denotes the set of edges of $G$ and each edge $l\in E(G)$ is a collection of two vertices $l = (s(l),r(l))$ where $s(l),r(l) \in \{0,\dots, n+1\}$. 
Expanding now $\tilde K_{t_j}$ as a sum of terms of the form~\eqref{eq:Kexp} and $A$, $B$ as a sum of terms of the form~\eqref{eq:A}, we see that $D_{n,G}(t)$ can be written as a sum of terms analogous to~\eqref{eq:expansion-differ1}.
Since $A,K,B$ are of compact support, we have that 
$A,K,B\in \cA(\Sigma_{0,a})$ for a sufficiently large $a$.
Now thanks to  
Proposition \ref{pr:time-decay}, and to the fact that $\Delta_{+,\la}$, $\Delta_{F,\la}$ and $\Delta_{\beta,\la}$ are all bounded, we have that, similarly to the proof of Prop.~\ref{prop:limtlimbeta},
\begin{align*}
|D_{n,G}(t)| &\leq E \int_{t S_n} d T \prod_{l \in E(G)}\frac{1}{b+|t_{r(l)}-t_{s(l)}-2a|^{3/2}} 
\leq E' \int_{t S_n} d T \prod_{l \in E(G)}\frac{1}{(|t_{r(l)}-t_{s(l)}|+1)^{3/2}} 
\end{align*}
where $E, E' > 0$ are suitable constants (depending on $G, A, B, a$ but not on $\beta$), $b=(2a+1)^{3/2}$ and $t_0=t$ and $t_{n+1}=0$.
In order to estimate the last integral, let $E_0(G)$ be the set of edges of $G$ connected to the vertex with index 0 (corresponding to $B$), and let $l_0 \in E_0(G)$ be the one among them which is connect to the vertex with minimal index, indicated by $i_0 \in \{1,\dots,n+1\}$. Then, by the form of the integration domain~\eqref{eq:tsn}, $(|t_{r(l_0)}-t_{s(l_0)}|+1)^{-3/2} \leq (t_0-t_1+1)^{-3/2}$ and, for all $l \in E_0(G)$, $(|t_{r(l)}-t_{s(l)}|+1)^{-3/2} \leq (|t_{r(l')}-t_{s(l')}|+1)^{-3/2}$ where $l'$ is an edge which connects $i_0$ with the vertex different from 0 originally attached to $l$. Therefore
\[
\prod_{l \in E(G)}\frac{1}{(|t_{r(l)}-t_{s(l)}|+1)^{3/2}} \leq \frac 1{(t_0-t_1+1)^{3/2}} \prod_{l' \in E(G')}\frac{1}{(|t_{r(l')}-t_{s(l')}|+1)^{3/2}},
\]
where $G'$ is the connected graph obtained by removing the 0 vertex and $l_0$ from $G$ and by replacing all the edges $l \in E_0(G)\setminus\{l_0\}$ with the corresponding $l'$ defined above. Iterating this procedure, one gets in the end
\[
\prod_{l \in E(G)}\frac{1}{(|t_{r(l)}-t_{s(l)}|+1)^{3/2}} \leq \prod_{i=0}^n \frac1 {(t_i-t_{i+1}+1)^{3/2}}.
\]
The integral over $tS_n$ of the last product can now be estimated iteratively thanks to
\[
\int_{0}^{t_{i-1}} \frac{dt_i}{(t_{i-1}-t_i+1)^{\frac{3}{2}}(t_i+1)^{\frac 3 2}} = \frac{4t_{i-1}}{\sqrt{1+t_{i-1}}(2+t_{i-1})^2} 
\leq \frac{4}{(t_{i-1}+1)^{\frac{3}{2}}}, \quad i=1,\dots,n,
\]
which finally implies
\[
|D_{n,G}(t)| \leq \frac{4^n E'}{(t+1)^{\frac{3}{2}}},
\]
and therefore $|D_{n,G}(t)|$ vanishes in the limit of large $t$. 
\end{proof}

The following proposition is a stability result for the interacting equilibrium states.
Similar results are obtained for the ordinary spacetime in Theorem 2 in \cite{BKR} or Theorem 3.3 in \cite{DFP}.

\begin{proposition}\label{pr:return-to-equilibrium}
Consider $V_{\chi,h}$ for some $h$ of compact support. The KMS state $\omega_{\beta,\la}$ of the free theory shows the following return to equilibrium property
\[
\lim_{t\to\infty} \omega_{\beta,\la}(\alpha_t^V(A)) = \omega_{\beta,\la}^{U(i\frac{\beta}{2})}(A), \qquad A\in \mathcal{A},
\]
where $\omega_{\beta,\la}^{U(i\frac{\beta}{2})}$ is a KMS state for the interacting theory defined in \eqref{eq:formal-KMS-state} and in  \eqref{eq:expansion-beta}. 
\end{proposition}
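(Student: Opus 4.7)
The plan is to follow the Bratteli--Kishimoto--Robinson (\cite{BKR}) / Doplicher--Fredenhagen--Pinamonti (\cite{DFP}) strategy, adapted to the cocycle structure of the present framework.

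\textbf{Step 1: reduction to $\omega_{\beta,\la}^{U(-t)}$.} I would first apply the cocycle identity $U(t+s)=U(t)\star_\la\alpha_t(U(s))$ with $s=-t$, which combined with $U(0)=1$ yields $\alpha_t(U(-t))=U(t)^*$. Using this together with $\alpha_t^V(A)=U(t)\star_\la\alpha_t(A)\star_\la U(t)^*$ and the $\alpha_t$-invariance of $\omega_{\beta,\la}$ gives
\[
\omega_{\beta,\la}(\alpha_t^V(A)) = \omega_{\beta,\la}(U(-t)^*\star_\la A\star_\la U(-t)) = \omega_{\beta,\la}^{U(-t)}(A),
\]
where the last equality uses $\omega_{\beta,\la}(U(-t)^*\star_\la U(-t))=1$ in the sense of formal power series (formal unitarity of $U(-t)$). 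It then suffices to show $\lim_{t\to\infty}\omega_{\beta,\la}^{U(-t)}(A) = \omega_{\beta,\la}^{U(i\beta/2)}(A)$.

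\textbf{Step 2: contour shift and the $t\to\infty$ limit.} Next I would use the expansion \eqref{eq:expansion-commutators}, which holds verbatim with $\omega_{\beta,\la}$ in place of $\omega_\la$. At each fixed order $n$, I would expand the nested commutator into the $2^n$ ordered products $\pm\,\omega_{\beta,\la}(K_{-t_{i_1}}\star_\la\cdots\star_\la K_{-t_{i_k}}\star_\la A\star_\la K_{-t_{j_l}}\star_\la\cdots\star_\la K_{-t_{j_1}})$ labeled by partitions $\{i_1<\cdots<i_k\}\sqcup\{j_1<\cdots<j_l\}=\{1,\dots,n\}$. Thanks to the analyticity of the thermal correlation functions in strips of imaginary width $\beta$ (from Prop.\ \ref{pr:clustering}) and the KMS condition, each real-time variable $t_j\in[0,t]$ can be deformed to a staircase contour that descends vertically from the real axis to $\mathrm{Im}\,t_j=-\beta/2$ for the $K$'s on the left of $A$, and to $+\beta/2$ for those on the right (after a KMS shift by $i\beta$), with the staircase heights $u_j,v_j\in(0,\beta/2)$ chosen to respect the ordering of the simplex $tS_n$ so that the integrand stays analytic throughout. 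As $t\to\infty$, the vertical pieces of each contour at $\mathrm{Re}\,t_j=t$ vanish thanks to the clustering of $\omega_{\beta,\la}$ along the interacting evolution (Prop.\ \ref{pr:clustering-alfav}) combined with the decay estimates for $\Delta_{\beta,\la}$ of Prop.\ \ref{pr:time-decay}.

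\textbf{Step 3: matching with \eqref{eq:expansion-beta} and main obstacle.} What survives in the limit is, at each order with $n=n_1+n_2$, an integral over $(u_1,\dots,u_{n_1})\in(\beta/2)S_{n_1}$ and $(v_1,\dots,v_{n_2})\in(\beta/2)S_{n_2}$ of correlators $\omega_{\beta,\la}(K_{iu_{n_1}-i\beta/2}\star_\la\cdots\star_\la K_{iu_1-i\beta/2}\star_\la A\star_\la K_{iv_{n_2}}\star_\la\cdots\star_\la K_{iv_1})$. Re-expressing these correlators via the moment-cumulant formula in terms of the connected functions $\omega_{\beta,\la}^c$, the disconnected contributions (in which the vertices attached to the $K$'s decouple from the component containing $A$) cancel order by order, exactly as in the standard derivation of the Fredenhagen--Lindner interacting KMS expansion, yielding precisely \eqref{eq:expansion-beta}. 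The hardest part of the whole argument will be step 2: the rigorous justification of the simultaneous multi-dimensional contour deformation and, in particular, proving that the vertical integrals at $\mathrm{Re}\,t_j=t$ vanish uniformly in the remaining $t_j$'s. This requires combining the polynomial decay of $\Delta_{\beta,\la}$ in real time with a simplex-integration argument modelled on the end of the proof of Prop.\ \ref{pr:clustering-alfav} (in particular, the estimate $\int_0^{t_{i-1}}(t_{i-1}-t_i+1)^{-3/2}(t_i+1)^{-3/2}\,dt_i\le C(t_{i-1}+1)^{-3/2}$) so as to bound the deformed iterated integrals by a $t$-independent constant.
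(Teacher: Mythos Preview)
Your Step 1 is fine and matches the paper's setup, but Steps 2--3 do \emph{not} follow the BKR/DFP strategy as implemented here, and the contour-deformation argument you sketch has a real gap.

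The paper's proof is algebraic rather than analytic. First it shows that $L(t)=\omega_{\beta,\la}(\alpha_t^V(A))$ is \emph{bounded} in $t$ at every perturbative order, using the connected-graph expansion of the commutator series and the estimate $\int_0^{t_{i-1}}(t_{i-1}-t_i+1)^{-3/2}\,dt_i\leq 2$. This boundedness lets one extract a convergent subsequence $t_k\to\infty$. One then rewrites, via KMS and the cocycle identity,
\[
\omega_{\la,\beta}(\alpha_t(A)\star_\la U(t)^*\star_\la\alpha_s U(t))=\omega_{\la,\beta}(\alpha_{-i\beta}U(s)\star_\la A\star_\la\alpha_{-t}^V(U(s)^*)),
\]
and applies Prop.~\ref{pr:clustering-alfav} to the factor $\alpha_{-t}^V(U(s)^*)$: along the subsequence the right-hand side factorizes as $\omega_{\la,\beta}(A\star_\la U(s))\cdot N$ for some constant $N$. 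Setting $A=1$ and $s=i\beta$ fixes $N=\omega_{\la,\beta}(U(i\beta))^{-1}$; since the resulting limit value is independent of the subsequence, the full limit exists and equals $\omega_{\la,\beta}(A\star_\la U(i\beta))/\omega_{\la,\beta}(U(i\beta))$, which is then identified with the expansion~\eqref{eq:expansion-beta} by a further KMS shift. Nowhere is a multi-variable contour deformation performed.

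Your approach instead attempts to deform each simplex variable $t_j\in[0,t]$ to a purely imaginary value. The problem is that Prop.~\ref{pr:clustering-alfav} gives clustering for $\omega_{\la,\beta}(A\star_\la\alpha_t^V(B))$, not a uniform pointwise bound on the individual correlators $\omega_{\la,\beta}(K_{-t_{i_1}}\star_\la\cdots\star_\la A\star_\la\cdots)$ that would let the vertical pieces at $\mathrm{Re}\,t_j=t$ vanish; the commutator structure (which is what produces the decay) has already been dismantled once you split into the $2^n$ ordered products. Moreover, the ``staircase'' you describe must respect the analyticity strip $\mathrm{Im}\,t_1<\cdots<\mathrm{Im}\,t_n$ simultaneously for all $2^n$ terms, and it is not clear how to arrange this while keeping the simplex constraints. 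Finally, even if the deformation went through, the cancellation of disconnected pieces in Step~3 is not the standard Fredenhagen--Lindner derivation (which starts from the Araki formula, not from the commutator expansion) and would need its own argument. The paper sidesteps all of this by using boundedness plus subsequence extraction plus clustering applied to a single factor $\alpha_{-t}^V(U(s)^*)$.
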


\begin{proof}
We begin by proving that $L(t)=\omega_{\beta,\la}(\alpha_t^V(B))$ is a bounded function of $t$, order by order in perturbation theory. To this end, recalling \eqref{eq:alfaVB} and operating as in \eqref{eq:sum-DT}, we have that
\begin{equation}\label{eq:sum-LT}
L(t)=
\sum_{n\geq 0}  L_n(t) :=
\sum_{n\geq 0} 
i^n\int_{t S_n} dT \,M \left( [\tilde{K}_{t_n},[\tilde{K}_{t_{n-1}} \dots[\tilde{K}_{t_1},\tilde{B}_t]_{\beta}\dots ]_{\beta}]_{\beta}\right)
\end{equation}
where as before also $L_n$ can be expanded as a sum over connected graphs
\begin{align*}
L_n(t) &= \sum_{G\in{\mathcal{G}_{n+2}}} c_G' L_{n,G}(t) \\
&= \sum_{G\in{\mathcal{G}_{n+1}}} c_G'\int_{t S_n} dT \,M  \bigg(\prod_{l \in E(G)} \Gamma_{s(l),r(l)}\bigg)\left(\tilde{K}_{t_n} \otimes \tilde{K}_{t_{n-1}} \dots \otimes \tilde{K}_{t_1}\otimes \tilde{B}_t\right)
\end{align*}
for some suitable constant $c_G'$ which can also be $0$ for some graphs.
Therefore, using iteratively the estimate
\[
\int_{0}^{t_{i-1}}\frac{dt_i}{(t_{i-1}-t_{i}+1)^{3/2}} = 2\left[1-\frac 1 {\sqrt{t_{i-1}+1}}\right] \leq 2
\]
we obtain, similarly to the proof of Prop.~\ref{pr:clustering-alfav},
\begin{equation*}
|L_{n,G}(t)| 
\leq C' \int_{t S_n} d T \prod_{i=1}^{n}\frac{1}{(t_{i-1}-t_{i}+1)^{3/2}}  \leq  2^{n} C' .
\end{equation*}
This proves that every $L_{n}$ is bounded in $t$.

We now follow the strategy of the proof of Theorem 3.3 in \cite{DFP}. 
The KMS conditions implies that
\begin{equation}\label{eq:first-step-return}
\omega_{\la,\beta}(\alpha_t^V(A)) =
\omega_{\la,\beta}(U(t)\star_\la \alpha_t(A)\star_\la U(t)^*) =
\omega_{\la,\beta}(\alpha_t(A)\star_\la U(t)^*\star_\la \alpha_{i\beta}U(t)).
\end{equation}
Now notice that the cocycle condition, the KMS condition and the time translation invariance of the state imply then that 
\begin{align*}
\omega_{\la,\beta}(\alpha_t(A)\star_\la U(t)^*\star_\la \alpha_{s}U(t)) &= 
\omega_{\la,\beta}(\alpha_t(A)\star_\la U(t)^*\star_\la U(s)^*\star_\la U(t)\star_\la \alpha_{t}U(s)   )
\\
&=
\omega_{\la,\beta}(\alpha_{t-i\beta}U(s) \star_\la\alpha_t(A)\star_\la U(t)^*\star_\la U(s)^*\star_\la U(t))
\\
&=
\omega_{\la,\beta}(\alpha_{-i\beta}U(s) \star_\la A\star_\la  \alpha_{-t}^V(U(s)^*)).
\end{align*}
According to the first part of the proof, we can now choose a sequence $\{t_k\}_{k\in \bN} \subset \bR$ converging to $+\infty$ and such that $\lim_{k \to +\infty} \omega_{\la,\beta}(\alpha_{t_k}^V(A))$ exists and is finite. Passing to a subsequence, we can also assume that $N := \lim_{k \to +\infty} \omega_{\la,\beta}( \alpha_{-t_k}^V(U(s)^*))$ is finite. The clustering condition \eqref{eq:clustering-alfav} established in Proposition \ref{pr:clustering-alfav} and the KMS condition imply that 
\begin{equation}\label{eq:second-step-return}\begin{split}
\lim_{k\to\infty} \omega_{\la,\beta}(\alpha_{t_k}(A)\star_\la U(t_k)^*\star_\la \alpha_{s}U(t_k))  &= 
\omega_{\la,\beta}(\alpha_{-i\beta}U(s) \star_\la A) \lim_{k\to\infty}  \omega_{\la,\beta}( \alpha_{-t_k}^V(U(s)^*))
\\
&= 
\omega_{\la,\beta}( A\star_\la U(s) ) N.
\end{split}\end{equation}
The previous equality holds also for $\text{Im}s\in[0,\beta]$: actually, we may extend the results of Proposition \ref{pr:clustering-alfav} to $U(i s)$
following the same proof and using the bounds of Proposition \ref{pr:time-decay} which also hold when some propagator are extended in imaginary time.
The limit $s\to i\beta$ together with \eqref{eq:first-step-return}
gives that
\begin{equation}\label{eq:return}
\lim_{k\to\infty} \omega_{\la,\beta}(\alpha_{t_k}^V(A))   
 =  \omega_{\la,\beta}( A\star_{\la} U(i\beta)) \tilde{N}
 =  \frac{\omega_{\la,\beta}( A\star_{\la} U(i\beta))}{\omega_{\la,\beta}(U(i\beta))}
\end{equation}
where in the last equality we used~\eqref{eq:second-step-return} with $A=1$ and $s=i\beta$, again the KMS condition and the fact that $\omega_{\la,\beta}$ is normalized and hence $\omega_{\la,\beta}(\alpha_t^V(1)) =1$. This shows that the limit on the left hand side of~\eqref{eq:return} is actually independent of the sequence $\{t_k\}$, and therefore
\[
\lim_{t\to\infty} \omega_{\la,\beta}(\alpha_{t}^V(A))   
 =  \frac{\omega_{\la,\beta}( A\star_{\la} U(i\beta))}{\omega_{\la,\beta}(U(i\beta))}.
\] 
The right hand side of the previous equality can be expanded as in \eqref{eq:expansion-beta} with the help of the KMS condition. Actually the KMS condition implies that
\[
\omega_{\la,\beta}( A\star_{\la} U(2s)) = \omega_{\la,\beta}(\alpha_{-i\beta+s}U(s)\star_\la  A\star_{\la} U(s))
\] 
which can be extended to a bounded continuous function for $\text{Im}(s)\in [0,\beta/2]$ analytic in its interior and for $s=\beta/2$  gives the desired result. 
\end{proof}

\textit{Acknowledgements.} G.~M.\ is partially supported by the MIUR Excellence Department Project awarded to the Department of Mathematics, University of Rome Tor Vergata, CUP E83C18000100006, the ERC Advanced Grant 669240 ``Quantum Algebraic Structures and Models'', the INDAM-GNAMPA, and the Tor Vergata University grant ``Operator Algebras and Applications to Noncommutative Structures in Mathematics and Physics''.

\end{document}